\documentclass[11pt]{article}
\pdfoutput=1
\usepackage[centertags]{amsmath}
\usepackage[square, comma, sort&compress,numbers]{natbib}
\usepackage{array,multirow}
\numberwithin{equation}{section}
\usepackage{amssymb,amsfonts,amsthm}
\usepackage{graphicx}
\usepackage{color}
\usepackage{mathtools,bm}
\usepackage{accents}
\usepackage[dvipsnames]{xcolor}
\usepackage{oplotsymbl}
\usepackage{authblk}
\usepackage{mathrsfs,wasysym}

\usepackage{epsfig}
\usepackage{bbold}

\usepackage{wrapfig}
\usepackage{float}
\usepackage{soul}
\usepackage{tkz-euclide}
\usepackage{braket}

\usepackage{tikz,pgf}
\usetikzlibrary{shapes}
\usetikzlibrary{calc}
\usetikzlibrary{decorations.pathmorphing}
\usetikzlibrary{decorations.pathreplacing,shapes.misc}
\usetikzlibrary{positioning}
\usetikzlibrary{arrows}
\usetikzlibrary{decorations.markings}
\usetikzlibrary{shadings}

\usepackage{tikz}
\usetikzlibrary{shapes.geometric, calc}

\usetikzlibrary{intersections}

\def\be{\begin{equation}}
\def\ee{\end{equation}}
\def\ba{\begin{array}}
\def\ea{\end{array}}

\def\bst{\begin{split}}
\def\est{\end{split}}

\def\dps{\displaystyle}

\def\Li2{\operatorname{Li_2}}

\newtheorem{theorem}{Theorem}[section]

\newdimen\tableauside\tableauside=1.0ex
\newdimen\tableaurule\tableaurule=0.4pt
\newdimen\tableaustep
\def\phantomhrule#1{\hbox{\vbox to0pt{\hrule height\tableaurule
width#1\vss}}}
\def\phantomvrule#1{\vbox{\hbox to0pt{\vrule width\tableaurule
height#1\hss}}}
\def\sqr{\vbox{%
\phantomhrule\tableaustep

\hbox{\phantomvrule\tableaustep\kern\tableaustep\phantomvrule\tableaustep}%
\hbox{\vbox{\phantomhrule\tableauside}\kern-\tableaurule}}}
\def\squares#1{\hbox{\count0=#1\noindent\loop\sqr
\advance\count0 by-1 \ifnum\count0>0\repeat}}
\def\tableau#1{\vcenter{\offinterlineskip
\tableaustep=\tableauside\advance\tableaustep by-\tableaurule
\kern\normallineskip\hbox
{\kern\normallineskip\vbox
{\gettableau#1 0 }%
\kern\normallineskip\kern\tableaurule}%
\kern\normallineskip\kern\tableaurule}}
\def\gettableau#1 {\ifnum#1=0\let\next=\null\else
\squares{#1}\let\next=\gettableau\fi\next}

\newtheorem{prop}{Proposition}[section]
\newtheorem{lemma}[prop]{Lemma}
\newtheorem{definition}[prop]{Definition}
\newtheorem{rem}{Remark}[section]

\newtheorem{cor}{Corollary}[section]

\newcommand{\bref}[1]{\textbf{\ref{#1}}}

\newcommand{\RR}{\mathbb{R}}

\newcommand{\ZZ}{\mathbb{Z}}
\newcommand{\NN}{\mathbb{N}}

\def\cA{\mathcal{A}}
\def\cB{\mathcal{B}}

\def\cG{\mathcal{G}}

\def\rT{{\rm T}}
\def\rU{{\rm U}}
\def\rV{{\rm V}}
\def\rW{{\rm W}}

\def\rY{{\rm Y}}

\numberwithin{equation}{section} \makeatletter
\@addtoreset{equation}{section}

\def\be{\begin{equation}}
\def\ee{\end{equation}}
\def\ba{\begin{array}}
\def\ea{\end{array}}

\def\dps{\displaystyle}

\def\C2{\text{C}_2}

\def\HG{{\rm H}}
\def\BF{\Phi}

\newcommand{\triple}[1]{\langle #1 \rangle}

\newcommand{\cham}{{\mathbb{C}}}
\newcommand{\xr}{\textsf{q}}

\usepackage{jheppub}
\makeatletter
\def\@fpheader{\vspace{-.1cm}}
\makeatother

\title{\centering{Diagrammatic construction of GKZ systems \\for polygonal functions}}

\author[a,b]{Konstantin Alkalaev}
\author[a]{\;\;Semyon Mandrygin}
\author[b]{\;\;Yakov Zalishchansky}
\affiliation[a]{I.E. Tamm Department of Theoretical Physics, \\P.N. Lebedev Physical
Institute, 119991 Moscow, Russia}
\affiliation[b]{Institute for Theoretical and Mathematical Physics, \\
Lomonosov Moscow State University, 119991 Moscow, Russia}
\emailAdd{alkalaev@lpi.ru, semyon.mandrygin@gmail.com, eldeeclaud@gmail.com}

\abstract{We investigate the recently introduced polygonal hypergeometric functions \cite{Alkalaev:2025fgn,Alkalaev:2025zhg}, which are conjectured to evaluate multipoint parametric one-loop conformal integrals in arbitrary dimensions. These functions can be systematically constructed using a diagrammatic algorithm that operates in terms of simple planar shapes. Given an $n$-point polygonal power series, we formulate  the corresponding GKZ hypergeometric system, which admits an explicit diagrammatic interpretation.  In particular, the transianic matrix, which encodes all the relevant diagrammatic data, determines the corresponding toric matrix along with a basis of the lattice of relations, thereby singling out a finite subsystem of $n(n-3)/2$ toric equations. This finite subsystem consists of second- and third-order equations, whereas the full toric family also includes higher-order ones. We illustrate the general procedure with several prominent examples of polygonal functions, including the fourth Appell and the Srivastava--Daoust functions.
}

\begin{document}

\maketitle
\flushbottom

\section{Introduction}

Conformal integrals \cite{Symanzik:1972wj} arise in several areas of quantum field theory. Within the shadow formalism \cite{Ferrara:1972xe,Ferrara:1972ay,Ferrara:1972uq,Ferrara:1972kab}, one-loop integrals enter the calculation of conformal partial waves and conformal blocks in $D$-dimensional conformal field theory (CFT$_D$), both in flat space \cite{Dolan:2000ut,Dolan:2011dv,SimmonsDuffin:2012uy,Rosenhaus:2018zqn} and on thermal backgrounds \cite{Alkalaev:2024jxh,Alkalaev:2026sha}.
A related application arises from the correspondence between ladder integrals and thermal observables in the theory of a free massive complex scalar field \cite{Petkou:2021zhg,Karydas:2023ufs,Karydas:2025tfs}.

In quantum field theories with dual conformal symmetry, notably planar $\mathcal N=4$ SYM theory, conformal integrals enter the calculation of scattering amplitudes \cite{Drummond:2006rz,Drummond:2007aua,Drummond:2008vq}. In fishnet theories \cite{Gurdogan:2015csr,Grabner:2017pgm,Caetano:2016ydc,Kazakov:2018qbr}, the corresponding Feynman integrals possess Yangian symmetry in addition to conformal invariance \cite{Chicherin:2017cns,Chicherin:2017frs,Loebbert:2019vcj,Loebbert:2020glj,Corcoran:2021gda,Duhr:2022pch}. This additional symmetry makes various integrability-based approaches available for the study of fishnet integrals \cite{Derkachov:2020zvv,Olivucci:2021cfy,Derkachov:2021ufp,Olivucci:2023tnw,Aprile:2023gnh,Duhr:2024hjf}.
Prominent examples include the Basso--Dixon integrals \cite{Basso:2017jwq,Derkachov:2018rot,Duhr:2023eld} and conformal ladder integrals \cite{Davydychev:1992eww,Derkachov:2022ytx,Derkachov:2023xqq,Loebbert:2024fsj,Derkachov:2025xmt}. This recent progress also motivated the study of more general Feynman integrals with extended symmetries \cite{Loebbert:2020hxk,Loebbert:2020tje,Corcoran:2020epz,Loebbert:2022nfu,Kazakov:2023nyu,Loebbert:2024qbw,Loebbert:2025abz,Ferrando:2025duw}, continuing and developing an earlier line of research \cite{Zamolodchikov:1980mb,Isaev:2003tk}. Other fruitful approaches to conformal integrals include their geometric interpretation as volumes of simplices \cite{Davydychev:1997wa,Mason:2010pg,Schnetz:2010pd,Nandan:2013ip,Bourjaily:2019exo,Ren:2023tuj} and their representation in terms of Mellin--Barnes integrals \cite{Ananthanarayan:2020ncn,Banik:2023rrz}.

A central goal in the study of Feynman integrals is to express them in terms of special functions, such as hypergeometric functions and multiple polylogarithms \cite{Kalmykov:2008gq, Bourjaily:2022bwx}.\footnote{Nevertheless, the analysis of some processes in quantum field theory requires only the asymptotic behavior of the relevant integrals in a given kinematic regime, see e.g. \cite{Bercini:2024pya,Bork:2025ztu}.} These two classes of functions typically arise when evaluating parametric and non-parametric conformal integrals, respectively. In addition to the external coordinates, interpreted as the positions of primary operators or as dual momenta, a parametric integral depends on propagator powers, which are continuous parameters subject to linear constraints required by conformal invariance. Non-parametric integrals are obtained by specializing these parameters to particular physical values. Although parametric integrals  constitute a more general class, calculations at fixed physical values can often be performed directly without first evaluating the integral for arbitrary parameters.

In this paper, we focus on a general setup and study  multipoint one-loop parametric conformal integrals in arbitrary spacetime dimensions, continuing the line initiated in \cite{Alkalaev:2025fgn,Alkalaev:2025zhg}. In these works, an $n$-point conformal integral is decomposed into a finite set of basis functions defined on a specific coordinate domain and related by the action of the cyclic group. Each basis function is given by the product of a normalization constant, a conformally covariant leg-factor, and a hypergeometric-type series referred to as a polygonal function. All three ingredients are systematically constructed via the diagrammatic algorithm based on the analysis of planar figures. In particular, the arguments of the polygonal function are cross-ratios represented by quadrilaterals and pentagons inscribed in an $n$-gon, which can be regarded as part of the Baxter lattice. The relative positions of these figures are characterized by the transianic indices, which determine the coefficients of the polygonal power series.

At present, the origin of this geometric construction remains elusive. The primary challenge lies in providing a systematic explanation for its emergence and clarifying its connection to other established frameworks for describing generalized hypergeometric functions. The general theory developed by Gelfand, Graev, Kapranov, Retakh, and Zelevinsky (GKZ) \cite{Gelfand1992General,GelGraZel87,Gelfand1988,Gelfand1989,Gelfand1990} provides such a systematic framework, and the present work aims to describe polygonal functions within this approach.\footnote{In a broader context, the relationship between GKZ systems and Feynman integrals (including conformal ones) has been extensively studied, see e.g. \cite{Klausen:2019hrg,klausen2023hypergeometricfeynmanintegrals,delaCruz:2019skx, Weinzierl:2022eaz,Pal:2021llg,Pal:2023kgu,Levkovich-Maslyuk:2024zdy}. By providing systems of differential equations, the GKZ approach is also naturally connected to the Yangian bootstrap \cite{Chicherin:2017cns,Chicherin:2017frs,Loebbert:2019vcj}.} In particular, we develop a general procedure for reconstructing the GKZ hypergeometric system from the $n$-point polygonal function for any $n$. This approach bypasses the need for manual, case-by-case derivations, providing instead a universal procedure for deriving the PDEs.

It turns out that the diagrammatic algorithm extends from the construction of polygonal functions to their associated GKZ systems, so that the relevant differential equations can be recovered directly from the underlying geometric data. Our main claim is that the transianic matrix, which encodes the relative configuration of the planar figures entering the diagrammatic algorithm, determines the nontrivial block of the associated toric matrix. At the same time, it determines a basis of the lattice of linear relations among the columns of the toric matrix and thereby organizes the corresponding infinite family of toric equations. In particular, the diagrammatic algorithm singles out a finite subsystem of $n(n-3)/2$ toric equations, one for each cross-ratio: the squared distances in its numerator and denominator determine the two differential monomials of the corresponding toric equation. Finally, we show that the parameters entering the $n$ Euler equations are linearly related to those of the polygonal function and, consequently, to the propagator powers of the original conformal integral.

The paper is organized as follows.  In section \bref{sec:diag}, we reconsider the diagrammatic algorithm. In particular, we establish several properties of the cross-ratio set and the transianic matrix that are essential for the construction of the associated GKZ systems. The need to review GKZ systems motivated our definition-theorem  style of presentation and prompted us to formalize the diagrammatic algorithm. In section \bref{sec:GKZ}, we first outline the basic ingredients of the GKZ approach and then formulate our general procedure for obtaining a GKZ hypergeometric system from a given polygonal function. Specifically, in section \bref{sec:GKZ_polygon} we present toric and Euler PDEs satisfied by  $n$-point polygonal functions. In the concluding section \bref{sec:conclusion} we summarize our construction and discuss open problems. Appendix \bref{app:trans_rels} contains proofs of the main-text lemmas and several new lemmas regarding the transianic indices.  In Appendix  \bref{sec:examples}, we examine a few  basic examples of polygonal functions, including known functions that arise in the context of evaluating conformal integrals. These include  the  $n=4$ polygonal function (the fourth Appell function) and certain  $n=5, 6$ polygonal functions (the Srivastava-Daoust functions). Appendix \bref{app:adjacency} presents the transianic matrix in terms of the adjacency matrices of the graphs associated with the planar figures involved in the diagrammatic algorithm.

\paragraph{Notation and conventions.} $\mathbb{Z}$ denotes the set of integers; $\mathbb{Z}^N$ denotes the $N$-dimensional integer lattice; $\mathbb{Z}_{+}$ denotes non-negative integers; $\mathbb{Z}_{+}^{N}$ denotes the $N$-dimensional non-negative integer lattice;  $\NN_n =\{1,2,...,n\}$; $l\notin\{i,j,k\}$ means that $l\in \mathbb{N}_n\setminus\{i,j,k\}$.

\section{Diagrammatic algorithm}
\label{sec:diag}

This section provides the diagrammatic framework  required to construct the generalized hypergeometric series associated with planar shapes—specifically, the polygonal functions \cite{Alkalaev:2025zhg}. Operating on planar figures, this diagrammatic algorithm systematically determines both the arguments and the expansion coefficients of the multivariate power series. To bridge this geometric construction with the differential systems, we also establish several key propositions that will be essential for formulating the corresponding GKZ hypergeometric systems in  section \bref{sec:GKZ}.

\subsection{Geometric shapes on the plane}
\label{sec:geom}

The central object of our construction is a planar $n$-gon, hereafter referred to as the conformal polygon $P_n$.\footnote{The polygon $P_n$ can be considered as an elementary cell of the Baxter lattice, where the parameters $a_i$ are linearly related to the angles of the polygon. This is closely related to  integrable fishnet graphs \cite{Zamolodchikov:1980mb, Kazakov:2022dbd, Kazakov:2023nyu, Levkovich-Maslyuk:2024zdy}. While a detailed account of the conformal polygon is provided in \cite{Alkalaev:2025zhg}, for our present purposes it suffices to regard it simply as an  $n$-gon in the plane.} Each vertex of the polygon, indexed by $l\in\NN_n$, is assigned a pair $(a_l,x_l)$, where $a_l \in \mathbb{R}$ is a parameter and $x_l \in \mathbb{R}^D$ is a point in  Euclidean coordinate space. For illustrative purposes, we depict $P_n$ as a regular polygon, as shown in fig.~\bref{fig:Conf.pol} {\bf (a)}.

\begin{figure}
\centering
\includegraphics[width=0.8\linewidth]{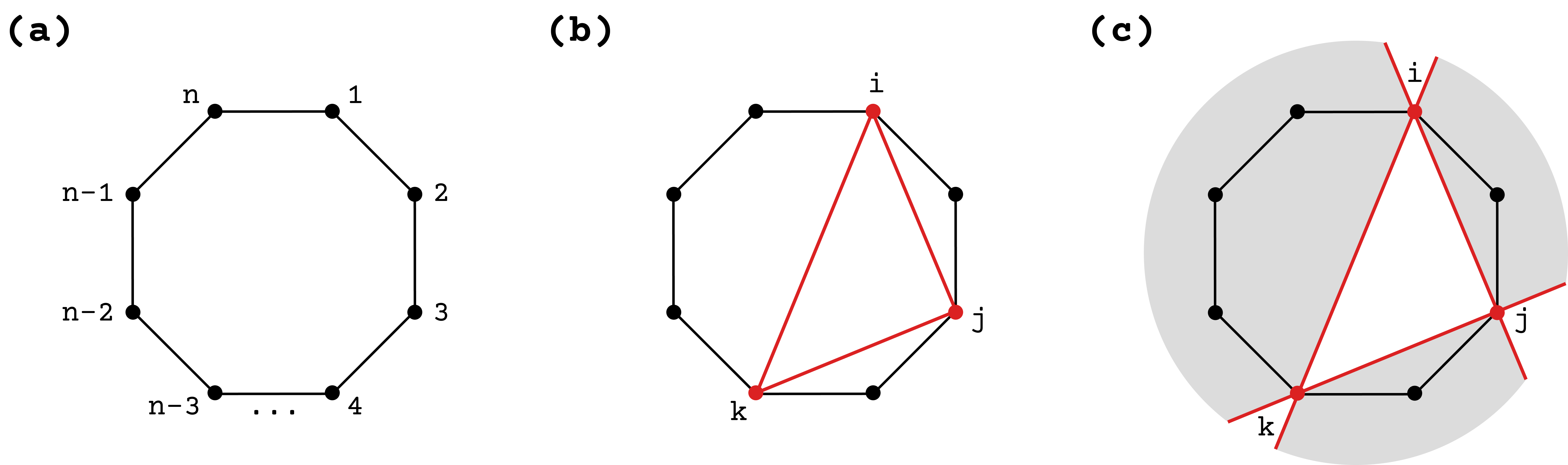}
\caption{{\bf (a)}: The conformal polygon $P_n$ depicted  as a regular $n$-gon. {\bf (b)}: A basis triangle $\triangle_n^{\triple{ijk}}$. {\bf (c)}: The three open chambers (shaded regions) determined by the lines supporting the edges of the basis triangle.}
\label{fig:Conf.pol}
\end{figure}

\begin{definition}
\label{def:dist}
Let  $\bm x = \big\{x_i \in \mathbb{R}^D\mid i\in \NN_n\big\}$ be a set of $n$ points in   the coordinate space, and let $\bm X_n = \big\{X_{ij} = (x_i - x_j)^2 \mid i,j \in \NN_n,\,  i \neq j\big\}$  be the corresponding  set of squared distances.
\end{definition}
\begin{definition}
\label{def:powers}
Let $\bm a = \{a_1, ..., a_n\} \in \mathbb{R}^n$ be a vector in the parameter space that lies on the hyperplane:
\be
\label{conf}
\sum_{i=1}^n a_i = D\,.
\ee
\end{definition}
\noindent Geometrically, imposing the constraint \eqref{conf} implies that the angles of $P_{n}$, which are linearly related to the parameters $a_{i}$, sum up to $\pi(n-2)$. On the other hand, in physical applications,  it arises, for example, from requiring the conformal $O(D+1,1)$ covariance of the corresponding Feynman integral, with the parameters $a_{i}$ interpreted as the propagator powers.
\begin{definition}
\label{def:triples}
Let $\mathrm{R}_n$ denote the set of ordered index triples:
\be
\label{triplet}
\mathrm{R}_n = \big\{\triple{ijk} \mid i,j,k \in \NN_n,\, i < j < k \big\}\,.
\ee
\end{definition}
\begin{definition}
Any three vertices $\{i,j,k\}$ of the polygon $P_n$ define an inscribed triangle, denoted by $\triangle_{n}^{\triple{ijk}}$ and  called a basis triangle.
\end{definition}
\noindent In other words, each ordered triple of indices $\triple{ijk} \in \mathrm{R}_n$ uniquely determines a basis triangle $\triangle_{n}^{\triple{ijk}} \subset P_n$, as illustrated in fig.~\bref{fig:Conf.pol} {\bf (b)}.
\begin{definition}
The straight lines supporting the edges of $\triangle_n^{\triple{ijk}}$ determine  three open regions in the plane that are adjacent to its sides and lie outside the basis triangle. These regions are called open chambers. The chamber adjacent to the edge $(j,k)$, or equivalently opposite to the vertex $i$, is denoted by $\cham_n^{(i)}$.
\end{definition}
\noindent  The chambers are illustrated in fig.~\bref{fig:Conf.pol} {\bf (c)}.

For a  basis triangle $\triangle_{n}^{\triple{ijk}}\subset P_n$, the numbers of vertices of the conformal polygon  that lie in the respective chambers are given by
\be
\label{L_in_vertices}
L_n^{(i)}= k - j -1\,,
\qquad
L_n^{(j)}= i -k - 1  + n  \,,
\qquad
L_n^{(k)}= j - i -1 \,,
\ee
and satisfy the balance relation
\be
\label{balance_relation}
L_n^{(i)} + L_n^{(j)} + L_n^{(k)} = n-3\,.
\ee

\begin{rem}
By a slight abuse of notation, we use the same symbol $\cham_n^{(a)}$, $a\in \{i,j,k\}$, to denote both the  open region in the plane and the subset of vertices of the conformal polygon lying therein. Accordingly, the notation $p\in\cham_n^{(a)}$ simply means that the corresponding vertex  lies in this  chamber.
\end{rem}

\vspace{-1mm}

\noindent In particular, a chamber $\cham_n^{(a)}$ may be empty, in the sense that $L_n^{(a)}=0$.

\begin{figure}
\centering
\includegraphics[width=0.5\linewidth]{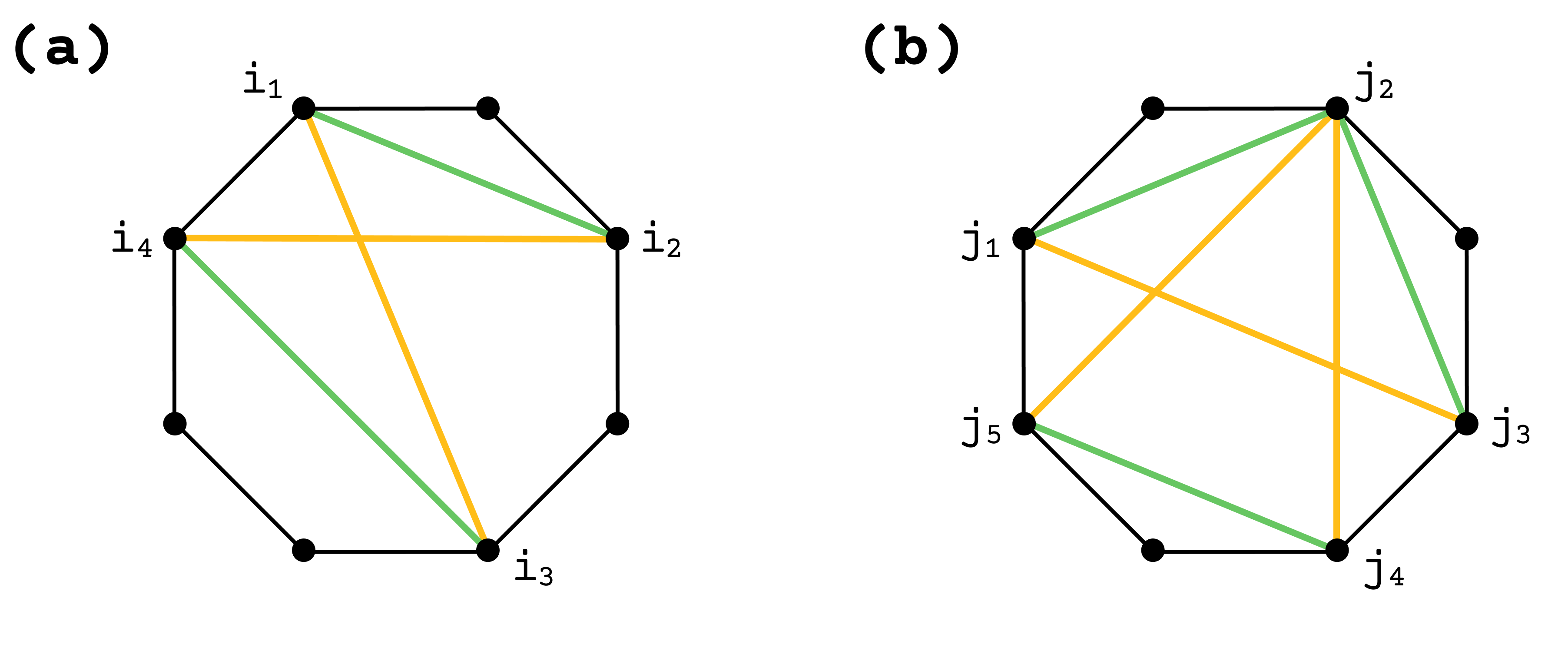}
\caption{Cross-ratio diagrams (Definition \bref{def:green&orange}): {\bf (a)} the quadrilateral with vertices $\{i_1, ..., i_4\}$, and {\bf (b)} the pentagon with vertices $\{j_1, ..., j_5\}$. }
\label{fig:Conf.pol.2}
\end{figure}
\begin{definition}
\label{def:green&orange}
A cross-ratio diagram is a self-intersecting polygon inscribed in $P_n$ whose sides are chords connecting a subset of marked vertices. The chords forming the diagram are colored green and orange according to the following geometric configurations:

\vspace{-3mm}

\begin{itemize}

\item[\textbf{(a)}] \textbf{Quadrilateral  ($n \geqslant 4$):} two opposite edges of the quadrilateral are green, while its two diagonals are orange.

\vspace{-3mm}

\item[\textbf{(b)}] \textbf{Pentagon  ($n \geqslant 5$):} two adjacent edges of the pentagon and the edge opposite to their common vertex are  green. The orange chords are three diagonals: two issuing from this common vertex to the endpoints of the opposite green edge, and one connecting the remaining endpoints of the adjacent green edges.
\end{itemize}
\end{definition}

\vspace{-2mm}

\noindent Typical examples of cross-ratio diagrams are illustrated in fig.~\bref{fig:Conf.pol.2}.

Let us enumerate the vertices of these diagrams clockwise as $\{i_1, i_2, i_4, i_3\}$ for the quadrilateral and $\{j_1, j_2, j_4, j_5, j_3\}$ for the pentagon. This convention yields the analytical representation in terms of cross-ratios. Namely, the cross-ratio diagrams uniquely determine rational functions on the point set $\bm x$, where the green chords form the numerators, while the orange ones form the denominators. These functions are invariant under the $O(D+1,1)$ conformal transformations in $\RR^D$ and are given by
\begin{itemize}

\item[\textbf{(a)}] \textbf{Quadrilateral:} the quadratic cross-ratio is expressed as
\be
\label{U_inv}
U[i_1, i_2, i_3, i_4] := \frac{X_{i_1 i_2} X_{i_3 i_4}}{X_{i_1 i_3} X_{i_2 i_4}}\,,
\ee

\item[\textbf{(b)}] \textbf{Pentagon:} the cubic cross-ratio is expressed as
\be
\label{W_inv}
W[j_1, j_2, j_3, j_4, j_5] := \frac{X_{j_1 j_2} X_{j_2 j_3} X_{j_4 j_5} }{ X_{j_1 j_3} X_{j_2 j_4} X_{j_2 j_5} }\,,
\ee
\end{itemize}
where $X_{ij}\in \bm X_n$.

Formally, the two-color coding defines a weight function $w$ on the chords of the diagram, assigning $+1$ to the green chords and $-1$ to the orange ones, so that the cross-ratios \eqref{U_inv} and \eqref{W_inv} are systematically evaluated as the product of the squared distances $X_{ij}$ raised to the power of their respective weights $w_{ij}$. Ultimately, it is the  diagrammatic algorithm that selects a certain set of such cross-ratio diagrams and their associated functions to serve as the variables of the polygonal functions. This reduces to finding the weights $w_{ij}$.

\subsection{Transianic indices }
\label{sec:key}

At this stage, our construction involves several geometric objects: conformal polygons, basis triangles, and cross-ratio diagrams (see figs.~\bref{fig:Conf.pol} and \bref{fig:Conf.pol.2}). It turns out that the polygonal functions are uniquely  determined by the relative positions of these planar figures, which are effectively characterized by the {\it transianic indices} introduced  below.

\begin{definition}
\label{def1}
Let $G=(V,E)$ be a planar graph, where $V$ is the set of vertices and $E$ is the set of edges $(r,s)$ connecting unordered pairs of distinct  vertices $r,s\in V$. There are three types of graphs, corresponding to the three types of figures involved (conformal polygons, basis triangles, cross-ratio diagrams):
\be
\label{graph_polygon}
\ba{l}
\dps
G_{\hexagon} = \left(V_{\hexagon}, E_{\hexagon}\right) = \big(\{1,2,...,n \}\,, \{(1,2),(2,3),...,(n-1,n),(n,1) \}\big)\,,
\vspace{4mm}
\\
\dps
 G_{\triangle} =  \left(V_{\triangle}, E_{\triangle}\right) =  \big( \{i,j,k\}\,, \{ (i,j),(j,k),(k,i) \} \big)\,,
\vspace{4mm}
\\
\dps
G_{\times}  = \left( V_{\times}\,, E_{\times}\right),
\quad
E_{\times} =E_{\textcolor{green}{\bm \times}} \cup  E_{\textcolor{orange}{\bm\times}}\,.
\ea
\ee
\end{definition}
\noindent The cross-ratio graph is two-colored, meaning that  there is a weight function $E_{\times} \to \{-1, +1\}$ that distinguishes the two edge subsets of $E_{\times}$ ($-1$ for orange and $+1$ for green).  For the quadratic and cubic cross-ratios \eqref{U_inv} and \eqref{W_inv}, the corresponding edge sets are given by
\be
\label{cross_ratio_edge_sets}
\ba{r@{\qquad}l}
U[i_1,i_2,i_3,i_4]:
&E_{\textcolor{green}{\bm\times}}=\big\{(i_1,i_2),(i_3,i_4)\big\}\,,
\\
& E_{\textcolor{orange}{\bm\times}}=\big\{(i_1,i_3),(i_2,i_4)\big\}\,,
\vspace{2mm}
\\
W[j_1,j_2,j_3,j_4,j_5]:
&E_{\textcolor{green}{\bm\times}}=\big\{(j_1,j_2),(j_2,j_3),(j_4,j_5)\big\}\,,
\\
& E_{\textcolor{orange}{\bm\times}}=\big\{(j_1,j_3),(j_2,j_4),(j_2,j_5)\big\}\,.
\ea
\ee

Next, we introduce numerical characteristics that encode the relative positions of the graphs $G_{\triangle}$ and $G_{\bm\times}$ on the vertex set of $G_{\hexagon}$.

\begin{definition}
\label{def2}
For every $l \in V_{\hexagon} \setminus V_{\triangle}$ and for every $(r,s) \in E_{\triangle}$, there are  two transianic indices:
\be
\label{transianic}
\ba{l}
b_l =
\begin{cases}
1, \quad \text{if} \quad l \in V_{\times}\,, \\
0, \quad \text{otherwise} \,,
\end{cases}
\qquad \; b_{rs} =
\begin{cases}
\phantom{-} 1, \quad \text{if} \quad (r,s) \in E_{\textcolor{green}{\bm\times}}\,, \\
-1, \quad \text{if} \quad (r,s) \in E_{\textcolor{orange}{\bm\times}}\,, \\
\phantom{-} 0, \quad \text{otherwise} \,,
\end{cases}
\ea
\ee
which are called the first and second transianic indices, respectively.
\end{definition}
\noindent
Any quadratic $U[i_1,...,i_4]$ or cubic $W[j_1,...,j_5]$ cross-ratio is assigned a sequence of $n$ elements of the set $\{-1,0,1\}$, which describes the position of the corresponding cross-ratio diagram with respect to the basis triangle $\triangle_{n}^{\triple{ijk}}$.

\begin{rem}
\label{rem:adjacency}
The transianic indices admit an equivalent description in terms of the adjacency matrices of the graphs $G_{\hexagon}$, $G_{\triangle}$, and $G_{\times}$, all written on the common vertex set $V_{\hexagon}$. The first indices are recovered from the diagonal entries of the squared adjacency matrices, while the second indices are obtained by restricting the signed adjacency matrix of $G_{\times}$ to the edges of $G_{\triangle}$, with the green and orange edges assigned the values $+1$ and $-1$, respectively. The explicit formulas and their derivation are given in Appendix \bref{app:adjacency}.
\end{rem}

\subsection{Chord sets}
\label{sec:chords}

In this section, we introduce a key classification of the chords of the conformal polygon, which, according to Definition \bref{def:green&orange}, are used to construct cross-ratio diagrams.
\begin{definition}
\label{def:all_chords}
Let $\bm C_n$ denote  the set of all chords connecting unordered pairs of distinct vertices in $V_{\hexagon}$:
\be
\label{set_of_all_chords}
\bm C_n = \Big\{(r,s) \mid r,s \in V_{\hexagon}, \, r \neq s \Big\}\,.
\ee
\end{definition}
\noindent
Note that all the edge sets introduced above --- namely,  $E_{\hexagon}$, $E_{\triangle}$, $E_{\textcolor{green}{\bm \times}}$, and $E_{\textcolor{orange}{\bm \times}}$  --- are subsets of $\bm C_n$.

\begin{definition}
\label{def:chords}
For a basis triangle $\triangle_{n}^{\triple{ijk}}\subset P_n$, the set of basis chords is defined as
\be
\label{set_of_basic_chords}
\bm C_n^{\triple{ijk}} = E_{\triangle} \cup \Big\{(a,l) \mid a\in V_{\triangle}\,, l \in \cham_n^{(a)} \Big\}\,.
\ee
The complementary set of non-basis chords is given by
\be
\label{set_of_non_basis_chords}
\overline{\bm C}_n^{\triple{ijk}} = \bm C_n \setminus \bm C_n^{\triple{ijk}}\,.
\ee
\end{definition}
\noindent An illustration of basis and non-basis chords is provided in fig.~\bref{set_of_basics}.

\begin{figure}[!t]
    \centering
    \includegraphics[width=0.6\linewidth]{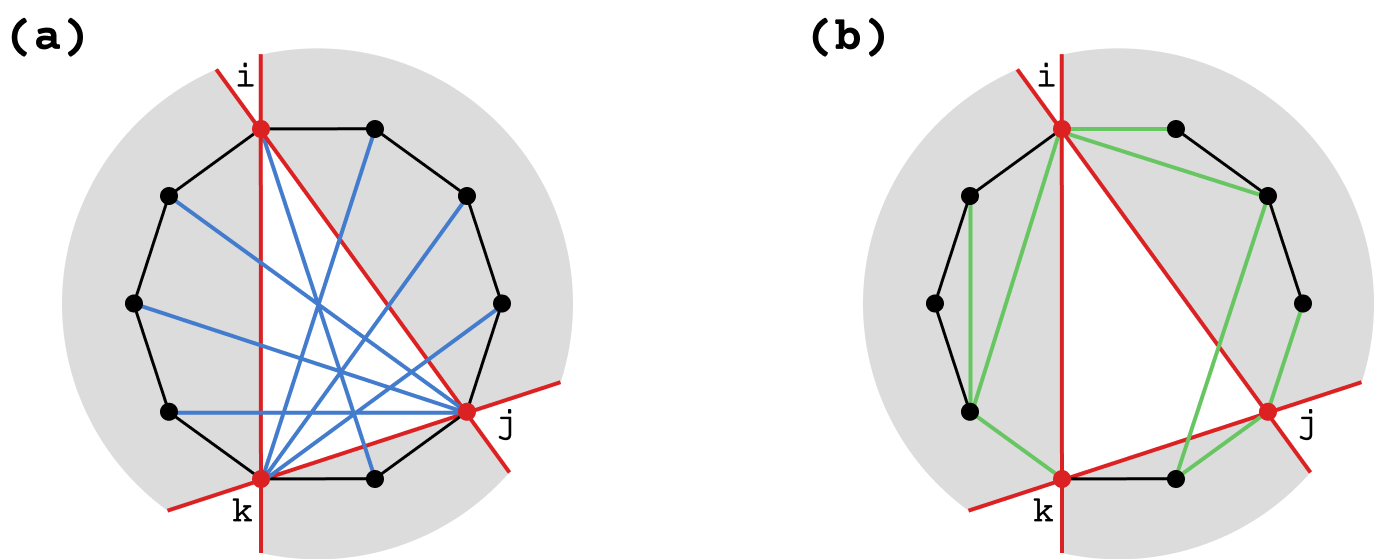}
    \caption{Chords on the conformal polygon. {\bf (a)} Basis chords. The chords corresponding to the edges of the basis triangle are shown in red, while the chords associated with the vertices outside the basis triangle are shown in blue. {\bf (b)} Examples of non-basis chords shown in green.}
    \label{set_of_basics}
\end{figure}

The set $\bm C_n$ contains
\be
\label{all_chords_cardinality}
|\bm C_n| = \binom{n}{2} = \frac{n(n-1)}{2}
\ee
chords. For every $\triple{ijk}\in\mathrm{R}_n$, the corresponding sets of (non-)basis chords have the following cardinalities:
\be
\label{basis_non_basis_cardinalities}
\left|\bm C_n^{\triple{ijk}}\right| = n\,,
\qquad
\left|\overline{\bm C}_n^{\triple{ijk}}\right| = \frac{n(n-3)}{2}\,.
\ee

In summary, specifying a basis triangle $\triangle_{n}^{\triple{ijk}}\subset P_n$  induces a splitting of the chord set into two subsets, which  we refer to as  the chord decomposition:
\be
\label{chord_set_decomposition}
\bm C_n
=
\bm C_n^{\triple{ijk}}
\cup
\overline{\bm C}_n^{\triple{ijk}}\,.
\ee
Furthermore, the chord set  $\bm C_n$ is in a one-to-one correspondence with the distance set $\bm X_n$; hence, the chord decomposition induces a corresponding decomposition of $\bm X_n$.

\subsection{Diagrammatic rules}
\label{sec:diagrammatic_rules}
Given a basis triangle $\triangle_{n}^{\triple{ijk}} \subset P_n$, the diagrammatic algorithm constructs a specific set of $n(n-3)/2$ cross-ratios \eqref{U_inv}--\eqref{W_inv} denoted by $\mathbf{Y}_n^{\triple{ijk}}$, for which we assume the following decomposition:
\be
\label{Y_split}
\mathbf{Y}_n^{\triple{ijk}} = \mathbf{U}_n^{\triple{ijk}} \cup \mathbf{W}_n^{\triple{ijk}} =
{}^{_1}\hspace{-0.5mm}\bm \rU_n^{\triple{ijk}} \cup  {}^{_2}\hspace{-0.5mm}\bm \rU_n^{\triple{ijk}} \cup \mathbf{W}_n^{\triple{ijk}}\,.
\ee
Here, we distinguish between quadratic and cubic cross-ratios, $U[i_1,..., i_4] \in \mathbf{U}_n^{\triple{ijk}}$ and $W[j_1, ..., j_5] \in \mathbf{W}_n^{\triple{ijk}}$, respectively. The procedure described below explains the splitting of the subset of quadratic cross-ratios  $\mathbf{U}_n^{\triple{ijk}} = {}^{_1}\hspace{-0.5mm}\bm \rU_n^{\triple{ijk}} \cup {}^{_2}\hspace{-0.5mm}\bm \rU_n^{\triple{ijk}}$.

\begin{figure}
\centering
\includegraphics[width=0.8\linewidth]{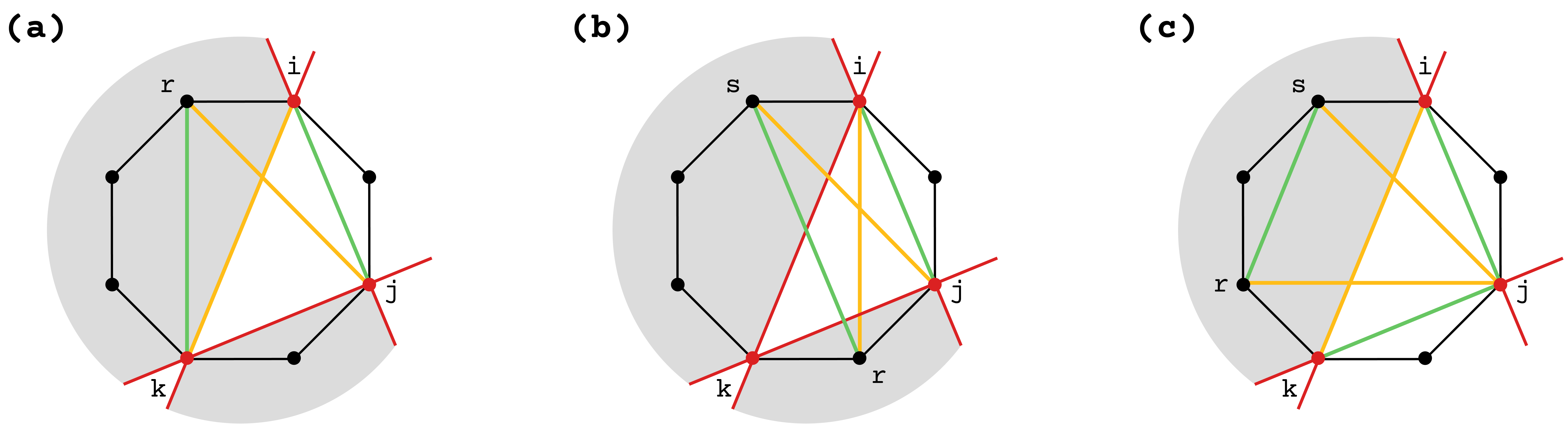}
\caption{{\bf (a)}, {\bf (b)}, {\bf (c)}:  Cross-ratio diagrams  representing elements from ${}^{_1}\hspace{-0.5mm}\bm \rU_n^{\triple{ijk}} \,, {}^{_2}\hspace{-0.5mm}\bm \rU_n^{\triple{ijk}} \,, \mathbf{W}_n^{\triple{ijk}}$, respectively.}
\label{Conf.Pol.11}
\end{figure}

\paragraph{Algorithm.} The cross-ratio set  \eqref{Y_split} is constructed  as follows.

\begin{itemize}

\item[${}^{_1}\hspace{-0.5mm}\bm \rU_n^{\triple{ijk}}$:]
A green chord connecting two vertices of the basis triangle $\triangle_{n}^{\triple{ijk}}$, for instance $i$ and $j$, is considered. A vertex $r$ is chosen within one of the chambers $\cham_n^{(i)}$ or $\cham_n^{(j)}$, and is joined by a green chord to the remaining vertex $k$ of $\triangle_n^{\triple{ijk}}$, as shown in fig.~\bref{Conf.Pol.11} \textbf{(a)}. The green chords determine the numerator, while the denominator is then formed by drawing orange chords along the diagonals of the corresponding quadrilateral  diagram. The resulting cross-ratio, either $U[i,j,r,k]$ or $U[i,j,k,r]$, is an element of the set denoted by ${}^{_1}\hspace{-0.5mm}\bm \rU_n^{\triple{ijk}}$.

\item[${}^{_2}\hspace{-0.5mm}\bm \rU_n^{\triple{ijk}}$:]
A green chord connecting two vertices of the basis triangle $\triangle_{n}^{\triple{ijk}}$, for instance $i$ and $j$, is considered. Vertices $r$ and $s$ are chosen within the open chambers $\cham_n^{(i)}$ and $\cham_n^{(j)}$, respectively, and are joined by a green chord, as shown in fig.~\bref{Conf.Pol.11} \textbf{(b)}. The green chords determine the numerator, while the denominator is formed by drawing orange chords along the diagonals of the corresponding quadrilateral diagram. The resulting cross-ratio $U[i,j,r,s]$ is an element of the set denoted by ${}^{_2}\hspace{-0.5mm}\bm \rU_n^{\triple{ijk}}$.

\item[$\bm \rW_n^{\triple{ijk}}$:]
Green chords connecting two vertices of the basis triangle $\triangle_{n}^{\triple{ijk}}$, for instance $(i,j)$ and $(j,k)$, are considered. Vertices $r$ and $s$ are chosen within the open chamber $\cham_n^{(j)}$ and are joined by a green chord, as shown in fig.~\bref{Conf.Pol.11}\textbf{(c)}. The green chords determine the numerator, while the denominator is formed by drawing orange chords along the diagonals of the corresponding pentagon diagram. The resulting cross-ratio $W[i, j, k, r,s]$ is an element of the set denoted by $\bm \rW_n^{\triple{ijk}}$.
\end{itemize}

By iterating the described procedure over all vertices and edges of $\triangle_n^{\triple{ijk}}$, we obtain the respective cardinalities of these three subsets:
\be
\label{subsets_cardinality}
\ba{l}
\dps
|{}^{_1}\hspace{-0.5mm}\bm \rU_n^{\triple{ijk}}| = 2 n - 6\,,
\quad
|{}^{_2}\hspace{-0.5mm}\bm \rU_n^{\triple{ijk}}| = L_n^{(i)} L_n^{(j)} + L_n^{(i)} L_n^{(k)} + L_n^{(j)} L_n^{(k)}\,,
\vspace{2mm}\\
\dps
|\bm \rW_n^{\triple{ijk}}| = \binom{L_n^{(i)}}{2} + \binom{L_n^{(j)}}{2} + \binom{L_n^{(k)}}{2}\,.
\ea
\ee
Consequently, the cardinality of the total set \eqref{Y_split} constructed by the algorithm is independent of the choice of  $\triangle_{n}^{\triple{ijk}}$:
\be
\label{cardinality}
|\bm \rY_n^{\triple{ijk}}| = |{}^{_1}\hspace{-0.5mm}\bm \rU_n^{\triple{ijk}}| + |{}^{_2}\hspace{-0.5mm}\bm \rU_n^{\triple{ijk}}| + |\bm \rW_n^{\triple{ijk}}| = \frac{n(n-3)}{2}\,.
\ee
This is a direct consequence of the balance relation \eqref{balance_relation}.

The diagrammatic algorithm distinguishes between basis and non-basis chords associated with the basis triangle $\triangle_n^{\triple{ijk}}$ (see Definition \bref{def:chords}), assigning them different roles in the construction of cross-ratios. This property is formalized below.
\begin{prop}
\label{rem:cross_ratio_non_basis_chord}
For a  basis triangle $\triangle_n^{\triple{ijk}}$, every cross-ratio diagram constructed by the algorithm contains a unique non-basis chord, which is always green:
\be
\label{cross_ratio_non_basis_chord}
\ba{c@{\hspace{1.5cm}}c@{\qquad}c}
\text{subset} & \text{cross-ratio} & \text{non-basis chord} \\[1mm]
{}^{_1}\hspace{-0.5mm}\bm \rU_n^{\triple{ijk}} & U[i,j,r,k]\,, \; U[i,j,k,r] & (k,r)
\vspace{1mm}
\\
{}^{_2}\hspace{-0.5mm}\bm \rU_n^{\triple{ijk}} & U[i,j,r,s] & (r,s)
\vspace{1mm}
\\
\bm \rW_n^{\triple{ijk}} & W[i,j,k,r,s] & (r,s)
\ea
\ee
Together with the cyclic permutations of $i,j,k$, these mappings exhaust all non-basis chords associated with the basis triangle, thereby establishing  a bijection:
\be
\label{1-1YC}
\bm \rY_n^{\triple{ijk}}\, \sim \, \overline{\bm C}_n^{\triple{ijk}}\,.
\ee
Moreover, in each cross-ratio diagram, apart from the non-basis chord \eqref{cross_ratio_non_basis_chord}, only the edges of $\triangle_n^{\triple{ijk}}$ may be green. Equivalently, the remaining basis chords $(a,b) \in \bm C_n^{\triple{ijk}} \setminus E_{\triangle}$ are always orange.
\end{prop}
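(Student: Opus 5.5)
The plan is to verify the statement by direct inspection of the edge sets \eqref{cross_ratio_edge_sets} for each of the three families produced by the algorithm. For each family we read off which chords are green and which are orange, and classify each chord as basis or non-basis using Definition \bref{def:chords}. Recall that a chord $(a,l)$ with $a\in V_\triangle$ is basis iff $l\in\cham_n^{(a)}$, i.e. iff $l$ lies in the chamber opposite to $a$. Consequently, a chord joining two non-triangle vertices is always non-basis. A chord $(a,l)$ with $l$ in a chamber adjacent to $a$ (namely $\cham_n^{(b)}$, $b\neq a$) is also non-basis.

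The first step is the case-by-case classification. For ${}^{_1}\hspace{-0.5mm}\bm\rU_n^{\triple{ijk}}$ with, say, $r\in\cham_n^{(i)}$ and cross-ratio $U[i,j,k,r]$, we have green $\{(i,j),(k,r)\}$ and orange $\{(i,k),(j,r)\}$.
\begin{itemize}
\item[$\bullet$] $(i,j),(i,k)\in E_\triangle$.
\item[$\bullet$] $(j,r)$ is basis only if $r\in\cham_n^{(j)}$; since $r\in\cham_n^{(i)}$, this needs to be checked against the clockwise labeling convention. We expect $(j,r)$ to be the chord to the chamber opposite $j$ after the correct assignment of which of $U[i,j,r,k]$, $U[i,j,k,r]$ occurs, so we will fix the labeling carefully.
\item[$\bullet$] $(k,r)$ is non-basis, since $r\notin\cham_n^{(k)}$.
\end{itemize}
For ${}^{_2}\hspace{-0.5mm}\bm\rU_n^{\triple{ijk}}$, $U[i,j,r,s]$ has green $(i,j),(r,s)$ and orange $(i,r),(j,s)$ with $r\in\cham_n^{(i)}$, $s\in\cham_n^{(j)}$. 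So $(r,s)$ is non-basis and both orange chords are basis. For $W[i,j,k,r,s]$ with $r,s\in\cham_n^{(j)}$, the green chords are $(i,j),(j,k),(r,s)$ and the orange chords are $(i,k),(j,r),(j,s)$. Here $(i,k)\in E_\triangle$, $(j,r),(j,s)$ are basis, and $(r,s)$ is non-basis. This establishes uniqueness of the non-basis chord, its green color, and the final claim that the basis chords outside $E_\triangle$ are orange.

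The second step is the bijection \eqref{1-1YC}. Since both sets have cardinality $n(n-3)/2$ by \eqref{basis_non_basis_cardinalities} and \eqref{cardinality}, it suffices to show the map $\bm\rY\to\overline{\bm C}$ is surjective (or injective). We partition $\overline{\bm C}_n^{\triple{ijk}}$ into three kinds of chords:
\begin{itemize}
\item[(i)] chords $(a,l)$ with $a\in V_\triangle$ and $l$ in one of the two chambers adjacent to $a$; there are $2(n-3)$ of these by \eqref{balance_relation}, matching $2n-6$ and hit by ${}^{_1}\hspace{-0.5mm}\bm\rU$;
\item[(ii)] chords joining vertices in two different chambers, counted by $\sum L^{(a)}L^{(b)}$ and hit by ${}^{_2}\hspace{-0.5mm}\bm\rU$;
\item[(iii)] chords joining two vertices within a single chamber, counted by $\sum\binom{L^{(a)}}{2}$ and hit by $\bm\rW$.
\end{itemize}
Comparing with \eqref{subsets_cardinality} and noting that the map restricted to each family is evidently injective (the pair $(r,s)$ or $(k,r)$ determines the diagram), we get the bijection.

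The main obstacle is the bookkeeping in the ${}^{_1}\hspace{-0.5mm}\bm\rU$ case. There we must confirm that the algorithm's choice between $U[i,j,r,k]$ and $U[i,j,k,r]$ (dictated by the clockwise ordering) always makes the orange chord from $r$ land on the triangle vertex opposite $r$'s chamber, rather than on an adjacent one. I would handle this by checking the two subcases $r\in\cham_n^{(i)}$ and $r\in\cham_n^{(j)}$ explicitly against the positions \eqref{L_in_vertices}, and then invoke cyclic symmetry in $i,j,k$.
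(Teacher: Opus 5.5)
Your overall strategy is the right one, and it is the verification the paper leaves implicit: the paper states this proposition without a separate proof. You read off the green and orange chords of each family, classify them with Definition~\ref{def:chords}, and then get the bijection \eqref{1-1YC} by splitting $\overline{\bm C}_n^{\triple{ijk}}$ into three kinds of chords, matched via \eqref{subsets_cardinality} with ${}^{_1}\hspace{-0.5mm}\bm\rU_n^{\triple{ijk}}$, ${}^{_2}\hspace{-0.5mm}\bm\rU_n^{\triple{ijk}}$, $\bm\rW_n^{\triple{ijk}}$:
\begin{itemize}
\item[(i)] chords from a triangle vertex to an adjacent chamber;
\item[(ii)] chords between two different chambers;
\item[(iii)] chords inside one chamber.
\end{itemize}
Your treatment of the ${}^{_2}\hspace{-0.5mm}\bm\rU_n^{\triple{ijk}}$ and $\bm\rW_n^{\triple{ijk}}$ families, and the counting and injectivity argument, are correct.

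The gap is in the ${}^{_1}\hspace{-0.5mm}\bm\rU_n^{\triple{ijk}}$ case, which you leave unresolved. Worse, the instance you actually write down, $r\in\cham_n^{(i)}$ with $U[i,j,k,r]$, is the wrong pairing. It would make $(j,r)$ an orange non-basis chord, which contradicts the statement. The algorithm does not say which of $U[i,j,r,k]$, $U[i,j,k,r]$ occurs, and the clockwise enumeration convention is not what decides it.

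What decides it is the rule, in Definition~\ref{def:green&orange}(a) and in the algorithm, that the orange chords are the diagonals of the quadrilateral whose opposite sides are the two green chords.
\begin{itemize}
\item By \eqref{L_in_vertices}, $\cham_n^{(i)}=\{j+1,\dots,k-1\}$. So for $r\in\cham_n^{(i)}$ the cyclic order is $i,j,r,k$, and $(i,j)$, $(r,k)$ are opposite sides. The diagonals are $(i,r)$, which is basis because $r\in\cham_n^{(i)}$, and $(j,k)\in E_\triangle$. The diagram is therefore $U[i,j,r,k]$.
\item Likewise $\cham_n^{(j)}=\{k+1,\dots,n,1,\dots,i-1\}$. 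So for $r\in\cham_n^{(j)}$ the order is $i,j,k,r$. The diagonals are $(i,k)\in E_\triangle$ and $(j,r)$, which is basis, and the diagram is $U[i,j,k,r]$.
\end{itemize}
In both cases the diagonal through $r$ ends at the triangle vertex opposite $r$'s chamber, and cyclic permutation of $i,j,k$ covers the other green triangle edges. The same check, with cyclic order $i,j,r,s$, confirms that $U[i,j,r,s]$ has orange chords $(i,r)$ and $(j,s)$, so there you need not rely on notation alone.

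With this step inserted, the remaining claims follow for ${}^{_1}\hspace{-0.5mm}\bm\rU_n^{\triple{ijk}}$: $(k,r)$ is the unique non-basis chord, it is green, and every other basis chord in the diagram outside $E_\triangle$ is orange. Your partition argument then completes the proof.
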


The following proposition completes the first stage of the diagrammatic algorithm --- namely, the construction of the arguments of the polygonal function.
\begin{prop}
\label{prop:arguments}
The set (\ref{Y_split}) is the set of cross-ratio arguments of the $n$-point polygonal function labelled by the index triple $\triple{ijk}\in\mathrm{R}_n$:
\be
\label{DA_cr}
\bm \rY_n^{\triple{ijk}} = \Big\{\rY_p^{\triple{ijk}}\,:\, p = 1,..., \xr\,,\; \text{where}\;\; \xr = \frac{n(n-3)}{2} \Big\}\,,
\ee
with each cross-ratio  $\rY_p^{\triple{ijk}}$ constructed  by the diagrammatic algorithm.
\end{prop}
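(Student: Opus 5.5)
The plan is to reduce Proposition \bref{prop:arguments} to three checks: each element of $\bm \rY_n^{\triple{ijk}}$ is a legitimate conformal cross-ratio, the elements are pairwise distinct and functionally independent, and the resulting set coincides with the argument set of the $n$-point polygonal function defined in \cite{Alkalaev:2025zhg}. The first check is local. For $U[i_1,i_2,i_3,i_4]$ in \eqref{U_inv} and $W[j_1,\dots,j_5]$ in \eqref{W_inv}, every vertex index appears the same number of times in the numerator as in the denominator. Hence the product is invariant under translations and rotations, and also under dilatations and inversions, since $X_{ab}\to X_{ab}\,\Omega(x_a)\Omega(x_b)$. This holds for all three families of the algorithm, because each is by construction a quadrilateral or pentagon diagram in the sense of Definition \bref{def:green&orange}.

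For distinctness and independence I will invoke Proposition \bref{rem:cross_ratio_non_basis_chord}. Take logarithms and regard each $\log \rY_p^{\triple{ijk}}$ as a linear form in the $\log X_{ab}$. By that proposition, each cross-ratio contains exactly one non-basis chord, with weight $+1$, and the map $\rY_p^{\triple{ijk}}\mapsto$ (its non-basis chord) is the bijection \eqref{1-1YC}. Restricted to the coordinates $\{\log X_{ab}\,:\,(a,b)\in \overline{\bm C}_n^{\triple{ijk}}\}$, the $\xr\times\xr$ matrix of exponents is therefore the identity, up to ordering. It follows that the $\xr=n(n-3)/2$ cross-ratios are multiplicatively independent monomials in $\bm X_n$. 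Since the $X_{ab}$ are algebraically independent for generic points when $D\geqslant n-1$, the cross-ratios are also functionally independent.

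Their number equals the dimension of the space of conformal invariants of $n$ points in generic dimension. That dimension is $\binom n2$ minus the $n$ local rescalings, and it matches the count \eqref{basis_non_basis_cardinalities}. So $\bm \rY_n^{\triple{ijk}}$ is a complete set of independent cross-ratios; equivalently, the basis chords can be fixed by a conformal frame.

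The remaining step, which I expect to be the main obstacle, is the identification with the arguments of the polygonal function as originally introduced in \cite{Alkalaev:2025fgn,Alkalaev:2025zhg}. There the arguments were read off from the Mellin--Barnes/Feynman-parameter evaluation of the conformal integral for the basis function labelled by $\triple{ijk}$. The plan is to match the three families ${}^{_1}\hspace{-0.5mm}\bm \rU_n^{\triple{ijk}}$, ${}^{_2}\hspace{-0.5mm}\bm \rU_n^{\triple{ijk}}$ and $\bm \rW_n^{\triple{ijk}}$ one by one against the three types of variables appearing in that series. The matching uses the labelling by chambers $\cham_n^{(a)}$ and the counts \eqref{subsets_cardinality}. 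For the argument I would take the non-basis chord as the label of each variable: in the series of \cite{Alkalaev:2025zhg} each summation index is attached to exactly one $X_{ab}$ that is not absorbed into the leg-factor. The bijection \eqref{1-1YC} then reduces the matching to checking that the accompanying basis-chord factors agree, which amounts to a finite case check over the three families and their cyclic images.
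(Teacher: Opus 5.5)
There is no proof in the paper to compare against, because the proposition is essentially definitional. Definition \bref{def:polygonal_function} builds the polygonal series directly on the output of the algorithm, quoting \cite{Alkalaev:2025zhg}. The only checkable content is $|\bm\rY_n^{\triple{ijk}}|=\xr=n(n-3)/2$, which is \eqref{cardinality}: the paper adds the family counts \eqref{subsets_cardinality} and uses the balance relation \eqref{balance_relation}.

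Your invariance and independence checks are correct. Your count, obtained from the bijection \eqref{1-1YC} together with \eqref{basis_non_basis_cardinalities}, is a valid route to $\xr$, but it is really the same computation. To verify that the bijection exhausts $\overline{\bm C}_n^{\triple{ijk}}$, you must sort the non-basis chords into three types:
\begin{itemize}
\item chords from a triangle vertex to a non-opposite chamber, giving $2n-6$;
\item chords joining two different chambers, giving $\sum L_n^{(a)}L_n^{(b)}$;
\item chords inside one chamber, giving $\sum\binom{L_n^{(a)}}{2}$.
\end{itemize}
These are exactly the terms of \eqref{subsets_cardinality}. Your observation that the exponent matrix restricted to non-basis chords is a permutation matrix is a real bonus. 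It is the identity block of the Gale dual \eqref{Gale_mat_polygonal}, and it shows multiplicative independence of the $\xr$ cross-ratios, which the paper only exploits later in section~\bref{sec:GKZ_data}.

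Your proposal falls short at the step you yourself flag. Conformal invariance, independence, completeness and the count $n(n-3)/2$ do not single out this particular set. The set $\bm\rY_n^{\triple{i'j'k'}}$ for another basis triangle passes the same tests, and so does any unimodular recombination of $\bm\rY_n^{\triple{ijk}}$ (e.g.\ replacing $\rY_1^{\triple{ijk}}$ by $\rY_1^{\triple{ijk}}\rY_2^{\triple{ijk}}$). So your first three checks cannot establish the identification claim. If the proposition is read as a statement about the series produced by the Mellin--Barnes evaluation in \cite{Alkalaev:2025fgn,Alkalaev:2025zhg}, its whole content is the family-by-family matching. You describe that matching as a ``finite case check'' but never carry it out, and it requires the explicit series of \cite{Alkalaev:2025zhg}; that is the missing piece. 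Read as the paper intends, however, the step is unnecessary: the identification holds by definition, and only the cardinality needs an argument, which you do supply.
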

\begin{rem}
\label{rem:nota}
Each element $\rY_p^{\triple{ijk}} \in \bm \rY_n^{\triple{ijk}}$ is assigned the set of vertices and the set of (colored) edges. The notation in \eqref{graph_polygon} is specified as follows:
\be
G_{\times}^{(p)}  = \left( V_{\times}^{(p)}\,, E_{\times}^{(p)}\right),
\quad
E_{\times}^{(p)} =E_{\textcolor{green}{\bm \times}}^{(p)} \cup  E_{\textcolor{orange}{\bm\times}}^{(p)}\,.
\ee
\end{rem}

\subsection{Transianic matrix }
\label{sec:trans_rels}

Recall that each element of the cross-ratio set $\bm \rY_n^{\triple{ijk}}$ is assigned a sequence of $n$ transianic index values, as specified in Definition \bref{def2}. These collections of numbers are conveniently organized into a matrix.\footnote{Note that the following definition of the transianic matrix differs from that in \cite{Alkalaev:2025zhg} in two respects: the signs of the second transianic indices are reversed, and these indices are placed in the $i$-th, $j$-th, and $k$-th rows.}
\begin{definition}
\label{def_of_trans_mat}
The transianic matrix of the cross-ratio set $\bm \rY_n^{\triple{ijk}}$ is the matrix
$\cB_n^{\triple{ijk}}
= \big(\cB_l^{(p)}\big)_{1\leq l \leq n,\, 1\leq p \leq \xr}
\in \operatorname{Mat}_{n\times\xr}(\mathbb{Z})$ whose entries are defined by
\be
\label{tr_mat}
\cB_l^{(p)}=
\begin{cases}
\phantom{-} b_l^{(p)}\,, & l \notin \{i,j,k\}\,,\\
-b_{ij}^{(p)}\,, & l=i\,,\\
-b_{jk}^{(p)}\,, & l=j\,,\\
-b_{ki}^{(p)}\,, & l=k\,.
\end{cases}
\ee
Here, the column label $(p)$ corresponds to the cross-ratio $\rY_p^{\triple{ijk}}\in\bm \rY_n^{\triple{ijk}}$.
\end{definition}
\noindent
Equivalently, the transianic matrix can be viewed as a  vector configuration:
\be
\label{transianic_vector_configuration}
\cB_n^{\triple{ijk}}
=
\left\{
{\bm \cB}^{(1)},...,{\bm \cB}^{(\xr)}
\;\middle|\;
{\bm \cB}^{(p)}
=
\big(\cB_1^{(p)},...,\cB_n^{(p)}\big)^\rT
\in\mathbb{Z}^n\,,
\quad
p=1,...,\xr
\right\}\,.
\ee
Here, each {\it transianic} vector ${\bm \cB}^{(p)}$ represents the $p$-th column of the  matrix $\cB_n^{\triple{ijk}}$ \eqref{tr_mat} and collects the transianic indices calculated for $\rY_p^{\triple{ijk}}$. This representation will be useful in section \bref{sec:GKZ}, when formulating the GKZ system for polygonal functions.

The cross-ratio set \eqref{DA_cr} is defined by its transianic matrix. Below we establish a few important properties described by linear relations among the transianic indices. In particular, we will use the following lemma, whose proof is given in Appendix \bref{app:trans_rels}.
\begin{lemma}
\label{cor1}
\label{cor3}
For every $\rY_p^{\triple{ijk}}\in\bm\rY_n^{\triple{ijk}}$, the transianic indices satisfy
\begin{subequations}
\label{property1}
\begin{align}
\rY_p^{\triple{ijk}} \in {}^{m}\hspace{-0.5mm}\bm\rU_n^{\triple{ijk}}:
\qquad
\sum_{l\notin\{i,j,k\}}b_l^{(p)}
&=m \,,
\qquad
b_{ij}^{(p)}+b_{jk}^{(p)}+b_{ki}^{(p)}=m-1\,,
\quad m=1,2\,;
\label{property1_U}
\\[2mm]
\rY_p^{\triple{ijk}} \in \bm\rW_n^{\triple{ijk}}:
\qquad
\sum_{l\notin\{i,j,k\}}b_l^{(p)}
&=2 \,,
\qquad
b_{ij}^{(p)}+b_{jk}^{(p)}+b_{ki}^{(p)}=1\,.
\label{property1_W}
\end{align}
\end{subequations}
\end{lemma}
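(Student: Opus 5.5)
The proof is a direct case-by-case inspection over the three subsets of the decomposition \eqref{Y_split}, using the explicit edge sets \eqref{cross_ratio_edge_sets} and Definition \bref{def2}. The lemma is symmetric under cyclic permutations of $(i,j,k)$, so it suffices in each case to treat one representative configuration from the Algorithm, say with the green triangle edge $(i,j)$ (or the pair $(i,j),(j,k)$ for $\bm\rW$). The first index sum $\sum_{l\notin\{i,j,k\}} b_l^{(p)}$ simply counts the vertices of the cross-ratio diagram lying outside the basis triangle. The second sum $b_{ij}+b_{jk}+b_{ki}$ counts, with signs, the triangle edges that occur among the green and orange chords of the diagram.

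\textbf{Case ${}^{_1}\hspace{-0.5mm}\bm\rU_n^{\triple{ijk}}$.} Take $U[i,j,r,k]$ with $r$ in $\cham_n^{(i)}$ or $\cham_n^{(j)}$; the case $U[i,j,k,r]$ is identical. Its vertex set is $\{i,j,k,r\}$, so exactly one vertex, $r$, lies outside the triangle, and the first sum equals $1$. By \eqref{cross_ratio_edge_sets}, the green edges are $(i,j)$ and $(r,k)$, and the orange edges are $(i,r)$ and $(j,k)$. Hence $b_{ij}=+1$, $b_{jk}=-1$ and $b_{ki}=0$, and the second sum is $0=m-1$. For $U[i,j,k,r]$, the orange edges are $(i,k)$ and $(j,r)$, which gives $b_{ki}=-1$ and $b_{jk}=0$, so the sum is again $0$.

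\textbf{Case ${}^{_2}\hspace{-0.5mm}\bm\rU_n^{\triple{ijk}}$.} For $U[i,j,r,s]$ with $r\in\cham_n^{(i)}$ and $s\in\cham_n^{(j)}$, the vertices outside the triangle are $r$ and $s$, so the first sum is $2$. The green edges are $(i,j)$ and $(r,s)$, and the orange edges are $(i,r)$ and $(j,s)$. The only triangle edge present is $(i,j)$, which gives a second sum of $1=m-1$.

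\textbf{Case $\bm\rW_n^{\triple{ijk}}$.} For $W[i,j,k,r,s]$ with $r,s\in\cham_n^{(j)}$, the vertices outside the triangle are $r$ and $s$, so the first sum is $2$. By \eqref{cross_ratio_edge_sets}, the green edges are $(i,j)$, $(j,k)$ and $(r,s)$, and the orange edges are $(i,k)$, $(j,r)$ and $(j,s)$. Hence $b_{ij}=b_{jk}=1$ and $b_{ki}=-1$, and the sum is $1$.

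These three cases are consistent with Proposition \bref{rem:cross_ratio_non_basis_chord}. Any orange chord incident to a chamber vertex is a basis chord not lying in $E_\triangle$. Therefore only the triangle edges, taken with their colors, contribute to the second sum.

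The only genuine care point is the labeling convention. The chamber vertices must be matched correctly to the positional slots $i_3,i_4$ (respectively $j_4,j_5$), consistent with the clockwise enumeration, so that the orange diagonals are exactly as listed above. Once this matching is fixed, every other step is a direct reading of the edge sets.
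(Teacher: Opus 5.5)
Your proof is correct and follows essentially the same route as the paper. Both arguments check the three subsets ${}^{_1}\hspace{-0.5mm}\bm\rU_n^{\triple{ijk}}$, ${}^{_2}\hspace{-0.5mm}\bm\rU_n^{\triple{ijk}}$ and $\bm\rW_n^{\triple{ijk}}$ case by case, counting the diagram vertices outside the basis triangle and the signed triangle edges; you read these off explicitly from the edge sets \eqref{cross_ratio_edge_sets}, where the paper argues in words. One harmless imprecision: $U[i,j,r,k]$ is the configuration with $r\in\cham_n^{(i)}$ and $U[i,j,k,r]$ the one with $r\in\cham_n^{(j)}$, since only then is the orange non-triangle chord $(i,r)$, resp.\ $(j,r)$, a basis chord; this does not affect either sum.
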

As a consequence, we obtain the following property.
\begin{prop}
\label{prop:FLI}
For every $\rY_p^{\triple{ijk}}\in \bm \rY_n^{\triple{ijk}}$, the components of the corresponding transianic vector satisfy
\be
\label{FLR}
\sum_{l = 1}^{n} \cB_l^{(p)}  = 1 \,,
\qquad
p = 1,..., \xr\,.
\ee
\end{prop}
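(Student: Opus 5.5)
The identity \eqref{FLR} follows by splitting the column sum of $\cB_n^{\triple{ijk}}$ according to the definition \eqref{tr_mat} and then invoking Lemma \bref{cor1}. For fixed $p$ we separate the rows $l\notin\{i,j,k\}$ from the three rows $l=i,j,k$, which gives
\be
\sum_{l=1}^n \cB_l^{(p)} = \sum_{l\notin\{i,j,k\}} b_l^{(p)} - \Big(b_{ij}^{(p)}+b_{jk}^{(p)}+b_{ki}^{(p)}\Big)\,.
\ee
This step is purely definitional. The only thing to check is that the row labels $i,j,k$ receive exactly the three second transianic indices $b_{ij},b_{jk},b_{ki}$, one for each edge of $E_{\triangle}$, each with a minus sign, so that nothing is double-counted or omitted.

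Next, since the decomposition \eqref{Y_split} is a disjoint union, every $\rY_p^{\triple{ijk}}$ lies in exactly one of ${}^{_1}\hspace{-0.5mm}\bm\rU_n^{\triple{ijk}}$, ${}^{_2}\hspace{-0.5mm}\bm\rU_n^{\triple{ijk}}$ or $\bm\rW_n^{\triple{ijk}}$. We substitute the two sums given by Lemma \bref{cor1} in each case:
\begin{itemize}
\item for ${}^{m}\hspace{-0.5mm}\bm\rU_n^{\triple{ijk}}$ the result is $m-(m-1)=1$, for both $m=1$ and $m=2$;
\item for $\bm\rW_n^{\triple{ijk}}$ the result is $2-1=1$.
\end{itemize}
This gives \eqref{FLR} for every $p=1,\dots,\xr$.

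All the geometric content sits in Lemma \bref{cor1}, which may be assumed, so there is no real obstacle here. If one wanted a check independent of the lemma, one could verify it directly on the representatives in \eqref{cross_ratio_non_basis_chord}:
\begin{itemize}
\item For $U[i,j,r,k]$, the only vertex outside the triangle is $r$, so the first sum is $1$. The green edge is $(i,j)$ and the orange diagonal $(i,k)$ is a triangle edge, so the second-index sum is $1-1=0$.
\item For $U[i,j,r,s]$, both $r$ and $s$ lie outside the triangle. The edge $(i,j)$ is green and no orange chord is a triangle edge, so the sums are $2$ and $1$.
\item For $W[i,j,k,r,s]$, the green edges $(i,j),(j,k)$ and the orange edge $(i,k)$ give second-index sum $1$, and $r,s$ give first sum $2$.
\end{itemize}
By Proposition \bref{rem:cross_ratio_non_basis_chord}, all other chords of these diagrams are basis chords off the triangle and therefore do not contribute.
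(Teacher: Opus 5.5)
Your argument is correct and matches the paper's, which presents the proposition as an immediate consequence of Lemma~\bref{cor1}: split the column sum via \eqref{tr_mat} into $\sum_{l\notin\{i,j,k\}}b_l^{(p)}-\big(b_{ij}^{(p)}+b_{jk}^{(p)}+b_{ki}^{(p)}\big)$, then insert $m-(m-1)=1$ or $2-1=1$. One small slip in your optional direct check: for $U[i,j,r,k]=X_{ij}X_{rk}/(X_{ir}X_{jk})$ the orange triangle edge is $(j,k)$, not $(i,k)$ (that one belongs to $U[i,j,k,r]$), although the sum $1-1+0=0$ is unaffected.
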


\noindent The linear identity \eqref{FLR} will encode the key analytic properties of the polygonal functions and play a crucial role in formulating the associated GKZ system.

In addition to \eqref{FLR}, the bijection between cross-ratio variables and non-basis chords established in Proposition \bref{rem:cross_ratio_non_basis_chord} leads to additional linear relations among the transianic indices.
\begin{lemma}
\label{lemma:local_inversion_balance}
Let $(r_p,s_p)$ be the unique non-basis chord associated with $\rY_p^{\triple{ijk}}$. Then,
\be
\label{local_inversion_balance}
-b_{ij}^{(p)}-b_{ki}^{(p)}
+\sum_{l\in\cham_n^{(i)}}b_l^{(p)}
=
\begin{cases}
1\,, & i\in\{r_p,s_p\}\,,\\
0\,, & i\notin\{r_p,s_p\}\,.
\end{cases}
\ee
Analogous relations are obtained by cyclic permutations of the vertices $i,j,k$.
\end{lemma}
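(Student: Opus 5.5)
The plan is a direct case check over the three families produced by the algorithm, using the explicit edge sets \eqref{cross_ratio_edge_sets} and Proposition~\bref{rem:cross_ratio_non_basis_chord}. Before any cases, I would rewrite the left-hand side of \eqref{local_inversion_balance} in chord language. The quantity $-b_{ij}^{(p)}-b_{ki}^{(p)}$ is minus the signed weight of the two triangle edges incident to $i$. The quantity $\sum_{l\in\cham_n^{(i)}}b_l^{(p)}$ counts the vertices of the cross-ratio diagram lying in the chamber opposite $i$.

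The key structural observation comes from Proposition~\bref{rem:cross_ratio_non_basis_chord}. Every basis chord $(a,l)$ with $l\in\cham_n^{(a)}$ that occurs in the diagram is orange. Every diagram vertex $l\in\cham_n^{(i)}$ is an endpoint of the non-basis chord, and in the constructions it is joined by an orange chord to $i$. Hence I expect the identity
\be
\nonumber
\sum_{l\in\cham_n^{(i)}}b_l^{(p)} \;=\; -\!\!\sum_{l\in\cham_n^{(i)}} w_{il}^{(p)}\,,
\ee
where $w$ is the chord weight. This reduces the claim to a signed-degree statement. The signed degree of vertex $i$ in $G_\times^{(p)}$, counting green as $+1$ and orange as $-1$, equals $-1$ if $i\in\{r_p,s_p\}$ and $0$ otherwise. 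The correction terms are the orange chords from $i$ into the other chambers. Because of this, I would in the end rather verify \eqref{local_inversion_balance} directly, case by case.

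The cases are as follows.
\begin{itemize}
\item[(1)] ${}^{_1}\rU$ with the green triangle edge opposite $i$, say $(j,k)$ with $r\in\cham^{(j)}$ or $\cham^{(k)}$ joined to $i$. Here $i\in\{r_p,s_p\}$. The edges $(i,j)$ and $(i,k)$ are respectively orange and absent, or absent and orange. No diagram vertex lies in $\cham^{(i)}$. The total is $1$.
\item[(2)] ${}^{_1}\rU$ with the green edge incident to $i$. If $r\in\cham^{(i)}$, then $r$ contributes $+1$, the green edge contributes $-1$, and the other edge is orange; one checks that the diagonal lands so as to give a total of $0$. If $r$ lies in the other chamber, a similar count again gives $0$.
\item[(3)] ${}^{_2}\rU$ with green edge $(i,j)$. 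Here $r\in\cham^{(i)}$ contributes $+1$, $b_{ij}=1$ contributes $-1$, and $(k,i)$ is absent. The total is $0$. If the green edge is $(j,k)$ with chambers $\cham^{(j)},\cham^{(k)}$, nothing touches $i$ or $\cham^{(i)}$, and the total is again $0$.
\item[(4)] $\rW$ with the green pair meeting at the apex $j$ and $r,s\in\cham^{(j)}$. For the vertex $i$: $b_{ij}=1$, $(k,i)$ is orange so $b_{ki}=-1$, and $\cham^{(i)}$ contains no diagram vertex. The total is $0$. For the apex vertex the chamber term is $2$ and both edges are green, giving $0$. When the chamber holding $r,s$ is the one opposite the vertex in question, the total is again $0$.
\end{itemize}

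In no case except (1) is $i$ an endpoint of the non-basis chord, so the right-hand side is confirmed. The remaining cases follow by cyclic relabelling.

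The main obstacle is the bookkeeping in cases (2) and (4). One has to determine exactly which orange diagonal joins which vertices. This depends on the clockwise labelling convention $\{i_1,i_2,i_4,i_3\}$ and $\{j_1,j_2,j_4,j_5,j_3\}$, and on whether $r$ lies before or after the triangle vertices in the cyclic order. I would handle this by first fixing, for each subset, the explicit edge sets, for example $U[i,j,r,k]$: green $(i,j),(r,k)$ and orange $(i,r),(j,k)$. I would then read off $b_{ij},b_{jk},b_{ki}$ and the chamber locations directly. Lemma~\bref{cor3} serves as a consistency check: summing \eqref{local_inversion_balance} over the three cyclic versions must reproduce $\sum_l b_l-2\sum b_{ab}$ appropriately.
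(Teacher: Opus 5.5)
Your final route, an exhaustive check at a fixed triangle vertex $i$, is sound once the edge sets are read off correctly (see below). It covers ${}^{_1}\rU$ with the green triangle edge opposite or incident to $i$, ${}^{_2}\rU$, and $\rW$ with apex $i$, $j$ or $k$. The totals are $1$ exactly in your case (1) and $0$ otherwise. The three families and the identity at $i$ are symmetric under $j\leftrightarrow k$, so the mirror configurations you do not write out are covered.

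The paper argues uniformly instead. It writes $-b_{ij}^{(p)}-b_{ki}^{(p)}=\textsf{Or}_i^{\triangle}-\textsf{Gr}_i^{\triangle}$ and $\sum_{l\in\cham_n^{(i)}}b_l^{(p)}=\textsf{Or}_i^{\vee}$. Via Proposition \bref{rem:cross_ratio_non_basis_chord}, these two counts exhaust the orange chords at $i$. Each label occurs equally often in the numerator and denominator of \eqref{U_inv} and \eqref{W_inv}, so $\textsf{Or}_i=\textsf{Gr}_i$. What remains is $\textsf{Gr}_i-\textsf{Gr}_i^{\triangle}$, the indicator that the green non-basis chord touches $i$. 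This is exactly the reduction you began and then dropped. It needs no knowledge of which diagonal is drawn. Your case check instead yields the explicit transianic values for every configuration, at the cost of bookkeeping.

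You dropped that reduction because of two misstatements.
\begin{itemize}
\item The signed degree of a vertex over \emph{all} chords of a $U$ or $W$ diagram is always $0$. The quantity that equals $-1$ or $0$, according to whether $i\in\{r_p,s_p\}$, is the signed degree over \emph{basis} chords only. The left-hand side of \eqref{local_inversion_balance} is precisely minus that quantity.
\item The ``correction terms'' do not exist. A chord from $i$ into $\cham_n^{(j)}$ or $\cham_n^{(k)}$ is non-basis, and the only non-basis chord in a diagram is green.
\end{itemize}

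The same proposition removes your ``main obstacle'' without any appeal to the clockwise convention. The orange chord at a diagram vertex $r\in\cham_n^{(a)}$ must be the basis chord $(a,r)$.

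Apply this in case (2), where ``the other edge is orange'' is false if it means $(k,i)$. For green $(i,j)$ and $r\in\cham_n^{(i)}$, the diagram is $U[i,j,r,k]$ with orange chords $(i,r)$ and $(j,k)$. So $(k,i)$ is absent; were it orange, the total would be $1$, not $0$.
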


There is a unified representation of the cross-ratios in terms of the transianic matrix.

\begin{prop}
\label{prop:diagrammatic_cross_ratio}
Each cross-ratio $\rY_p^{\triple{ijk}}\in \bm \rY_n^{\triple{ijk}}$ is a weighted product of the squared distances $X_{ab}\in \bm X_n$ associated with the basis chords and the squared distance associated with the unique non-basis chord:
\be
\label{diagrammatic_cross_ratio}
\rY_p^{\triple{ijk}}
=
X_{r_p s_p}\,
X_{ij}^{b_{ij}^{(p)}}
X_{jk}^{b_{jk}^{(p)}}
X_{ki}^{b_{ki}^{(p)}}
\prod_{a\in V_{\triangle}\vphantom{\cham_n^{(a)}}}\,
\prod_{l\in\cham_n^{(a)}}
X_{al}^{-b_l^{(p)}}\,,
\ee
where $X_{r_p s_p}$ corresponds to the non-basis chord $(r_p,s_p) \in \overline{\bm C}_n^{\triple{ijk}}$.
\end{prop}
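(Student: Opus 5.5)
The identity is proved by decomposing the chord set of the cross-ratio diagram and matching each weighted chord with a factor in \eqref{diagrammatic_cross_ratio}. By construction, each $\rY_p^{\triple{ijk}}$ equals $\prod_{(a,b)\in E_\times^{(p)}} X_{ab}^{w_{ab}}$, with $w=+1$ on green and $w=-1$ on orange chords, and every such chord lies in $\bm C_n$. We split $E_\times^{(p)}$ along the chord decomposition \eqref{chord_set_decomposition} into three parts: chords in $E_\triangle$, chords of the form $(a,l)$ with $a\in V_\triangle$ and $l\in\cham_n^{(a)}$, and non-basis chords. By Proposition~\bref{rem:cross_ratio_non_basis_chord} the last part is exactly the single green chord $(r_p,s_p)$, which contributes the prefactor $X_{r_ps_p}$.

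\textbf{Triangle edges.} For $(r,s)\in E_\triangle$, Definition~\bref{def2} gives $b_{rs}^{(p)}=w_{rs}$ if the edge belongs to the diagram and $0$ otherwise. Hence the triangle-edge factors are precisely $X_{ij}^{b_{ij}^{(p)}}X_{jk}^{b_{jk}^{(p)}}X_{ki}^{b_{ki}^{(p)}}$.

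\textbf{Remaining basis chords.} By the last part of Proposition~\bref{rem:cross_ratio_non_basis_chord}, every chord $(a,l)$ with $l\in\cham_n^{(a)}$ that appears in the diagram is orange and so contributes $X_{al}^{-1}$. It remains to show the following: for each outer vertex $l$, lying in the unique chamber $\cham_n^{(a)}$, the diagram contains the chord $(a,l)$ exactly when $l\in V_\times^{(p)}$ (i.e.\ $b_l^{(p)}=1$), and in that case it contains no other chord $(a',l)$ with $a'\in V_\triangle$, $a'\ne a$. The chambers are disjoint, so a given outer vertex $l$ can appear in at most the one factor $X_{al}^{-b_l^{(p)}}$; the whole content of this step is therefore that the orange basis chords are in bijection with the outer vertices of the diagram.

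\textbf{Case check.} This bijection is verified case by case on the three subsets, and this inspection is the main work.
\begin{itemize}
\item For $U[i,j,r,k]$ with $r\in\cham_n^{(i)}$: the chords are green $(i,j),(r,k)$ and orange $(i,r),(j,k)$. The only outer vertex is $r$, it is joined orange to $i$ (the vertex whose chamber contains it), and $(j,k)$ is a triangle edge.
\item For $U[i,j,r,s]$ with $r\in\cham_n^{(i)}$, $s\in\cham_n^{(j)}$: the orange chords are $(i,r)$ and $(j,s)$, each matching its chamber vertex.
\item For $W[i,j,k,r,s]$ with $r,s\in\cham_n^{(j)}$: the orange chords are $(i,k)$, $(j,r)$ and $(j,s)$, again with $(i,k)$ a triangle edge.
\end{itemize}
One must also check that the clockwise labelling convention pairs each outer vertex with its own chamber vertex, i.e.\ the orange diagonal from $r$ goes to $a$ and not to another triangle vertex. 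This follows from the placement of $r$ inside $\cham_n^{(a)}$ together with the convexity of $P_n$. The orientation variants obtained by cyclic permutations of $i,j,k$ and the alternative form $U[i,j,k,r]$ are handled the same way.

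As a consistency check, the orange outer chord counts are $1$, $2$ and $2$ for the three subsets, which matches $\sum_l b_l^{(p)}$ in Lemma~\ref{cor1}. In particular, each outer vertex of the diagram is accounted for once.

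Multiplying the three groups of factors reproduces \eqref{diagrammatic_cross_ratio}. The expected obstacle is the orientation and labelling bookkeeping in the case check, which must confirm that the orange diagonals from the outer vertices always land on the triangle vertex opposite their chamber.
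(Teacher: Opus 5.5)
Your argument is correct and follows the paper's proof. Both split the chords of the diagram into three groups: the unique non-basis chord and the remaining basis chords (both handled by Proposition~\ref{rem:cross_ratio_non_basis_chord}), and the triangle edges (handled by Definition~\ref{def2}). Your case check only makes explicit a step the paper asserts without checking, namely that for $l\in\cham_n^{(a)}$ the chord $(a,l)$ occurs exactly when $b_l^{(p)}=1$.

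One small correction. Your side claim that the diagram contains no other chord $(a',l)$ with $a'\in V_{\triangle}$, $a'\neq a$, is false for $U[i,j,r,k]$: the green non-basis chord $(r,k)$ is such a chord. The proof only needs the weaker statement that no other \emph{basis} chord occurs at $l$, and this holds automatically because the chambers are disjoint.
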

\begin{proof}
Given a cross-ratio diagram, its green and orange chords determine, respectively, the numerator and denominator of the corresponding function of the squared distances:
\be
\label{cross_general_prod}
\rY_p^{\triple{ijk}}
=
\frac{
\dps\prod_{(a,b)\in E^{(p)}_{\textcolor{green}{\bm\times}}}
X_{ab}}
{
\dps\prod_{(c,d)\in E^{(p)}_{\textcolor{orange}{\bm\times}}}
X_{cd}}\,.
\ee
The chords in each product can be split into basis and non-basis ones, according to the chord decomposition \eqref{chord_set_decomposition}. Proposition \bref{rem:cross_ratio_non_basis_chord} states that each cross-ratio diagram contains a unique non-basis chord $(r_p,s_p)$, which is always green.  In turn, the basis-chord set  \eqref{set_of_basic_chords} is decomposed into the edges of the basis triangle $E_{\triangle}$ and the remaining basis chords. For every $(a,b) \in E_{\triangle}$, the second transianic index \eqref{transianic} equals  $1$, $-1$ or $0$, depending on whether the chord is green, orange, or absent, respectively. Consequently, the cross-ratio \eqref{cross_general_prod} takes the form
\be
\rY_p^{\triple{ijk}}
=
X_{r_p s_p}\,
X_{ij}^{b_{ij}^{(p)}}
X_{jk}^{b_{jk}^{(p)}}
X_{ki}^{b_{ki}^{(p)}}
\,
\frac{
\dps\prod_{(a,b)\in\,
E^{(p)}_{\textcolor{green}{\bm\times}}
\cap\,(\bm C_n^{\triple{ijk}}\setminus E_{\triangle})}
X_{ab}}
{
\dps\prod_{(c,d)\in\,
E^{(p)}_{\textcolor{orange}{\bm\times}}
\cap\,(\bm C_n^{\triple{ijk}}\setminus E_{\triangle})}
X_{cd}}\,.
\ee
Proposition \bref{rem:cross_ratio_non_basis_chord} further states that any basis chord outside $E_{\triangle}$ that occurs in the diagram is orange, so $E^{(p)}_{\textcolor{green}{\bm\times}}
\cap\left(\bm C_n^{\triple{ijk}}\setminus E_{\triangle}\right) = \varnothing $. Finally, for $a\in V_{\triangle}$ and $l\in\cham_n^{(a)}$, the first transianic index $b_l^{(p)}$ determines whether the chord $(a,l)$ occurs in the diagram. Thus,
\be
\prod_{(c,d)\in
E^{(p)}_{\textcolor{orange}{\bm\times}}
\cap\left(\bm C_n^{\triple{ijk}}\setminus E_{\triangle}\right)}
X_{cd}^{-1}
=
\prod_{a\in V_{\triangle}\vphantom{\cham_n^{(a)}}}
\prod_{l\in\cham_n^{(a)}}
X_{al}^{-b_l^{(p)}}\,,
\ee
which completes the proof of \eqref{diagrammatic_cross_ratio}.
\end{proof}
\begin{rem}
Proposition \bref{prop:FLI} and Lemma \bref{lemma:local_inversion_balance}, which encode the properties of the transianic indices, provide an independent verification of the  $O(D+1,1)$ invariance of the representation  \eqref{diagrammatic_cross_ratio}.
\end{rem}

\begin{rem}
\label{rem:diagrammatic_interpretation}
Recalling the chord decomposition (\ref{chord_set_decomposition}), we observe that for each
non-basis chord $(r_p,s_p) \in \overline{\bm C}_n^{\triple{ijk}}$, where $p=1,..., \xr$, the diagrammatic algorithm selects a unique subset of basis
chords $\bm S_p^{\triple{ijk}} \subset \bm C_n^{\triple{ijk}}$. The transianic matrix then defines a sign map on $\bm S_p^{\triple{ijk}}$ that determines which squared distances  enter the numerator and which enter the denominator, ensuring that the resulting ratio is conformally invariant.
In this way, the algorithm explicitly realizes the bijection between $\bm \rY_n^{\triple{ijk}}$ and $\overline{\bm C}_n^{\triple{ijk}}$ established in Proposition \bref{rem:cross_ratio_non_basis_chord}.
\end{rem}

In summary, the transianic matrix and the chord decomposition, intertwined within Proposition \bref{prop:diagrammatic_cross_ratio}, constitute a central component of the diagrammatic algorithm.

\subsection{Polygonal functions}
\label{sec:polygonal}

Given an index triple $\triple{ijk}\in \mathrm{R}_n$, the diagrammatic algorithm provides the set of cross-ratios and the corresponding transianic matrix. We then use these data to define the hypergeometric-type function.
\begin{definition}
\label{def:cyclic_parameter_sums}
For any pair of vertices $r,s\in V_{\hexagon}$, we use the notation
\be
\label{aa}
|{\bm a}_{r,s}|=
\begin{cases}
\dps\sum_{l=r}^s a_l\,, & r<s\,,
\\
\dps\sum_{l=r}^n a_l+\sum_{l=1}^s a_l\,, & r>s\,;
\end{cases}
\qquad
|{\bm a}_{r,s}|'=\frac{D}{2}-|{\bm a}_{r,s}|\,.
\ee
\end{definition}
\begin{definition}
\label{def:polygonal_function}
Fixing $\triple{ijk}\in\mathrm{R}_n$,  let $\bm \rY_n^{\triple{ijk}}$  and $\cB_n^{\triple{ijk}}$  be the cross-ratio set and the transianic matrix, respectively. Given the parameter vector  ${\bm a}$ from  Definition \bref{def:powers}, the $n$-point polygonal function is defined as the following hypergeometric series \cite{Alkalaev:2025zhg}:
\be
\label{general_polygonal}
\HG_n^{\triple{ijk}}({\bm a}|\mathbf{Y}_n^{\triple{ijk}})
=\sum_{\bm m \in \mathbb{Z}_{+}^{\xr}}{\rm A}^{\triple{ijk}}({\bm a}|{\bm m})
\prod_{p=1}^{\xr}\frac{\big(\rY_p^{\triple{ijk}}\big)^{m_p}}{m_p!}\,,
\ee
where the series coefficients are given by
\be
\label{bare_n_2}
{\rm A}^{\triple{ijk}}({\bm a}|{\bm m})
=\frac{\dps\prod_{l\notin\{i,j,k\}}(-)^{M_l}(a_l)_{M_l}}
{(1+|{\bm a}_{i,j}|')_{M_{ij}}(1+|{\bm a}_{j,k}|')_{M_{jk}}(1+|{\bm a}_{k,i}|')_{M_{ki}}}\,.
\ee
Here, $(a)_M = \Gamma(a+M)/\Gamma(a)$ is the Pochhammer symbol, and $\Gamma(z)$ is the Gamma function. The summation index ${\bm m}=(m_1,...,m_\xr)\in\mathbb{Z}_{+}^{\xr}$ and the transianic indices determine the following linear combinations of the summation variables:
\be
\label{MM}
M_l=\sum_{p=1}^{\xr}b_l^{(p)}m_p\,,
\qquad
M_{st}=\sum_{p=1}^{\xr}b_{st}^{(p)}m_p\,.
\ee
The superscript $(p)$ enumerates the transianic indices associated with $\rY_p^{\triple{ijk}}\in\bm\rY_n^{\triple{ijk}}$.
\end{definition}

The convergence of the polygonal functions can be effectively  analyzed by employing the framework developed in \cite{Horn1889, exton1976multiple}. To this end, one introduces a \textit{convergence index},  which  in the present context is  expressed in terms of the transianic indices \cite{Alkalaev:2025zhg}.
\begin{definition}
\label{def:conv_ind}
The convergence index associated with  the $p$-th variable of the polygonal function $\HG_n^{\triple{ijk}}({\bm a}|\mathbf{Y}_n^{\triple{ijk}})$ (\ref{general_polygonal}) is given by
\be
\label{convergence_index}
\Delta^{\triple{ijk}}_p = 1 + b_{ij}^{(p)} + b_{jk}^{(p)} + b_{ki}^{(p)} - \sum_{l \notin \{i,j,k\}} b_l^{(p)}\,.
\ee
\end{definition}
By construction, there are $\xr$ such indices, which characterize  the convergence of the polygonal function in the space of variables.  A given power series diverges when the convergence indices are negative, converges everywhere when they are positive, and possesses a non-vanishing radius of convergence $r_p$ ($p=1,..., \xr$), which is determined   by the following relation when these indices vanish:
\be
\label{convergence_rad}
r_p = |F_p(\bm {m})|^{-1}\,,
\qquad
F_p(\bm {m})=\frac{\dps \prod_{l \notin \{i,j,k\}} (M_l)^{b_l^{(p)}}}{m_p(M_{ij})^{b_{ij}^{(p)}}(M_{jk})^{b_{jk}^{(p)}}(M_{ki})^{b_{ki}^{(p)}} }\,.
\ee
Crucially, the convergence index \eqref{convergence_index} coincides with the left-hand side of the  linear identity \eqref{FLR}. Consequently, the transianic matrix completely determines the convergence domain of a given polygonal function. This connection allows us to establish the following key result regarding the analyticity of the polygonal functions.
\begin{prop}
\label{prop:conv}
The polygonal function $\HG_n^{\triple{ijk}}({\bm a}|\mathbf{Y}_n^{\triple{ijk}})$ (\ref{general_polygonal})  admits  a non-empty convergence domain because    its convergence indices (\ref{convergence_index}) vanish identically:
\be
\Delta^{\triple{ijk}}_p = 0\,,
\qquad
p = 1, ..., \xr\,.
\ee
\end{prop}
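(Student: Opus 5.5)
The plan is to express the convergence index \eqref{convergence_index} through the column sums of the transianic matrix and then apply Proposition \bref{prop:FLI}. By Definition \bref{def_of_trans_mat}, the $p$-th column of $\cB_n^{\triple{ijk}}$ has entries $b_l^{(p)}$ for $l\notin\{i,j,k\}$. In the rows $i,j,k$ it has the entries $-b_{ij}^{(p)}$, $-b_{jk}^{(p)}$, $-b_{ki}^{(p)}$. Summing gives
\be
\sum_{l=1}^n \cB_l^{(p)} = \sum_{l\notin\{i,j,k\}} b_l^{(p)} - b_{ij}^{(p)} - b_{jk}^{(p)} - b_{ki}^{(p)}\,.
\ee
Comparing with \eqref{convergence_index}, we get the exact identity $\Delta_p^{\triple{ijk}} = 1 - \sum_{l=1}^n \cB_l^{(p)}$ for every $p=1,...,\xr$. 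Proposition \bref{prop:FLI} then gives $\Delta_p^{\triple{ijk}}=0$ at once.

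To keep the argument self-contained at the level of the three subsets, I would also check this case by case against Lemma \bref{cor1}. For $\rY_p^{\triple{ijk}}\in{}^{m}\bm\rU_n^{\triple{ijk}}$ one has $\Delta_p = 1+(m-1)-m=0$ for $m=1,2$. For $\rY_p^{\triple{ijk}}\in\bm\rW_n^{\triple{ijk}}$ one has $\Delta_p=1+1-2=0$. Since \eqref{Y_split} exhausts the cross-ratio set, this covers all $p$.

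It remains to justify the conclusion about a non-empty convergence domain. Here I would invoke the Horn--Exton criterion quoted before Definition \bref{def:conv_ind}: when all convergence indices vanish, each $F_p(\bm m)$ in \eqref{convergence_rad} is a ratio of homogeneous linear forms of total degree zero in $\bm m$. This is exactly the content of $\Delta_p=0$: the degree of the numerator $\sum_{l} b_l^{(p)}$ equals the degree of the denominator $1+\sum b_{st}^{(p)}$. Hence $F_p$ is bounded along rays, and the radii $r_p$ are finite and positive.

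The main obstacle is a subtle point in this last step: $F_p$ must stay bounded away from zero and infinity on generic rays. This amounts to the linear forms $M_l$, $M_{st}$ not vanishing identically. That holds because every column has at least one nonzero entry, as the sums in Lemma \bref{cor1} show. The vanishing of $\Delta_p$ itself is immediate.
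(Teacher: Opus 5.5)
Your argument is correct and is essentially the paper's own: the paper ties the convergence index \eqref{convergence_index} to the column sum in \eqref{FLR}, i.e.\ $\Delta_p^{\triple{ijk}} = 1-\sum_{l=1}^n\cB_l^{(p)}$ exactly as you write, so Proposition~\bref{prop:FLI} (equivalently Lemma~\bref{cor1}) gives $\Delta_p^{\triple{ijk}}=0$, and the non-empty convergence domain is taken directly from the quoted Horn--Exton criterion, as you also do. Your last paragraph is not needed, and its stated reason is not the right one: a nonzero column does not by itself tell you anything about the linear forms, whereas each linear form that enters $F_p(\bm m)$ with a nonzero exponent is not identically zero simply because it contains $m_p$ with coefficient $\pm1$.
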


Next, we  introduce the so-called basis function,\footnote{Both the function and its name were proposed  in \cite{Alkalaev:2025fgn,Alkalaev:2025zhg} in the context of calculating conformal integrals.} which is  a suitable  modification of the polygonal function that naturally arises within the associated GKZ hypergeometric system, see section \bref{sec:GKZ}.
\begin{definition}
\label{def:bas}
Given the polygonal function $\HG_n^{\triple{ijk}}({\bm a}|\mathbf{Y}_n^{\triple{ijk}})$, the basis function is defined as
\be
\label{basisfunc}
\Phi_n^{\triple{ijk}}(\bm a|\bm x) = {\rm S}_n^{\triple{ijk}}(\bm a) \, {\rm V}_n^{\triple{ijk}}(\bm a|\bm x)\, \HG_n^{\triple{ijk}}({\bm a}|\mathbf{Y}_n^{\triple{ijk}})\,,
\ee
where the $\bm x$-independent \textit{triangle-factor} and the $\bm x$-dependent \textit{leg-factor}  are given by
\be
\label{star_n}
{\rm S}_{n}^{\triple{ijk}}({\bm a}) =
\frac{\Gamma(-|\bm a_{i,j}|') \Gamma(-|\bm a_{j,k}|')
\Gamma(-|\bm a_{k,i}|')}
{\Gamma(a_i) \Gamma(a_j) \Gamma(a_k)}\,,
\ee
\be
\label{V_n}
\hspace{13mm}{\rm V}_n^{\triple{ijk}}({\bm a}|\bm x) =
X_{ij}^{|\bm a_{i,j}|'}
X_{jk}^{|\bm a_{j,k}|'}
X_{ki}^{|\bm a_{k,i}|'}
\prod_{b\in V_{\triangle}\vphantom{\cham_n^{(b)}}}\,
\prod_{l\in\cham_n^{(b)}}
X_{bl}^{-a_l}\,.
\ee
\end{definition}

\begin{rem}
\label{rem:legs}
The leg-factor has  two crucial properties:
\begin{enumerate}

\item Since the parameter vector $\bm a$ satisfies the constraint (\ref{conf}), the leg-factor is covariant under  the $O(D+1,1)$ conformal transformations of the point set  $\bm x$ \cite{Alkalaev:2025fgn,Alkalaev:2025zhg}.

\item  The leg-factor depends only on the squared distances $X_{rs}\in\bm X_n$ associated with all basis chords $(r,s)\in\bm C_n^{\triple{ijk}}$.

\end{enumerate}

\end{rem}

The chord decomposition $\bm C_n  = \bm C_n^{\triple{ijk}} \cup \, \overline{\bm C}_n^{\triple{ijk}}$ underlies  the construction of the conformally covariant and invariant components of the basis function, respectively. Indeed, according to Proposition \bref{rem:cross_ratio_non_basis_chord}, the arguments $\bm \rY_n^{\triple{ijk}}$ of the polygonal function  are conformally invariant cross-ratios that are  in one-to-one correspondence with the non-basis chords $\overline{\bm C}_n^{\triple{ijk}}$. On the other hand, according to Remark \bref{rem:legs}, the conformally covariant leg-factor is associated with the entire set of  basis chords. The triangle-factor serves as a normalization constant.\footnote{When the basis function is considered as a part of the decomposition of the conformal integral, this normalization constant ensures the invariance under the action of the cyclic group \cite{Alkalaev:2025zhg}.}

\section{GKZ hypergeometric systems}
\label{sec:GKZ}

\subsection{Basic ingredients}
\label{sec:basics}

In what follows, we provide a self-contained definition of the  GKZ hypergeometric system and  outline the construction of its formal solutions (for reviews see e.g.  \cite{Weinzierl:2022eaz, Cattani, Beukers, klausen2023hypergeometricfeynmanintegrals}).
\begin{rem}
Throughout this subsection, $n$, $N$, and $\xr$, which specify the dimensions of the relevant matrices, are treated as general integers subject only to $\xr=N-n$. Specialization to polygonal functions imposes further relations, as shown in the next subsection.
\end{rem}
\begin{definition}
\label{def:par_tor}
Let $\bm \alpha = (\alpha_1,...,\alpha_n) \in \mathbb{C}^n$ be a parameter vector, and let
$\cA\in\operatorname{Mat}_{n\times N}(\mathbb{Z})$ be a toric matrix, which  can be viewed as the vector configuration
\be
\cA=
\left\{
{\bm \cA}^{(1)},...,{\bm \cA}^{(N)}
\;\middle|\;
{\bm \cA}^{(p)}
=\big(\cA_1^{(p)},...,\cA_n^{(p)}\big)^\rT
\in\mathbb{Z}^n\,,
\quad p=1,...,N
\right\}\,.
\ee
The toric matrix is required to satisfy the conditions of maximal rank and homogeneity:
\begin{enumerate}

\item The column vectors in $\cA$ span $\mathbb{Z}^n$, so that $\cA$ has rank $n$.

\item There exists a linear form $h:\mathbb{Z}^n\rightarrow\mathbb{Z}$ such that $h({\bm \cA}^{(p)})=1$ for all $p=1,...,N$.

\end{enumerate}
\end{definition}
\noindent For $\cA$ to have maximal rank, it is necessary that $N\geq n$. The homogeneity condition states that all column vectors in $\cA$ lie on a common hyperplane that does not pass through the origin in $\mathbb{Z}^n$.
\begin{definition}
The lattice of (integer linear) relations among the columns of $\cA$:
\be
\label{lat}
\mathbb{L}
= {\rm ker}_{\mathbb{Z}}\cA
=\bigg\{
\bm\ell=(\ell_1,...,\ell_N)\in\mathbb{Z}^N
\;\big|\;
\sum_{p=1}^{N}\ell_p{\bm \cA}^{(p)}=0
\bigg\}\,.
\ee
\end{definition}
\noindent Since $\cA$ has rank $n$, the lattice $\mathbb{L}$ has rank $N-n\equiv\xr$. In addition, the homogeneity condition implies that its elements are constrained by
\be
\label{linear}
\sum_{p=1}^{N}\ell_p=0\,.
\ee
\begin{definition}
\label{prop:GKZ}
The GKZ hypergeometric system defined by $\cA$ and $\bm\alpha$
consists of the following differential equations for a function
$\Phi$ of $N$ variables $\bm v=(v_1,...,v_N)\in\mathbb{C}^N$:

\begin{itemize}

\item the toric equations, one for each lattice vector
$\bm\ell=(\ell_1,...,\ell_N)\in\mathbb{L}$,
\be
\label{toric_eq}
\Bigg[
\prod_{\ell_p>0}\left(\frac{\partial}{\partial v_p}\right)^{\ell_p}
-\prod_{\ell_p<0}\left(\frac{\partial}{\partial v_p}\right)^{-\ell_p}
\Bigg]\Phi(\bm v)=0\,;
\ee

\item the Euler equations
\be
\label{Euler_eq}
\Bigg[
\sum_{p=1}^{N}\cA_l^{(p)}\theta_p-\alpha_l
\Bigg]\Phi(\bm v)=0\,,
\qquad
\theta_p=v_p\frac{\partial}{\partial v_p}\,,
\qquad
l=1,...,n\,.
\ee

\end{itemize}
\end{definition}
\begin{prop}
\label{prop:solution}
The following $\Gamma$-series
\be
\label{Gamma_series}
\Phi_{\mathbb{L},\bm \gamma}(\bm v)
=\sum_{\bm\ell\in\mathbb{L}}\,
\prod_{p=1}^{N}\,
\frac{v_p^{\ell_p+\gamma_p}}
{\Gamma(\ell_p+\gamma_p+1)}
\ee
is a formal solution to the GKZ system (\ref{toric_eq})--(\ref{Euler_eq}), provided that the parameter vector $\bm\gamma=(\gamma_1,...,\gamma_N)\in\mathbb{C}^N$ satisfies the linear equations
\be
\label{alpha_A_gamma}
\bm\alpha=\cA\bm\gamma\,.
\ee
\end{prop}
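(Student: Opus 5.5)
We verify the two families of equations directly, term by term. The working convention is $1/\Gamma(z)=0$ for $z\in\{0,-1,-2,\dots\}$, so that $1/\Gamma$ is entire. Also, $\partial_{v_p}$ acting on $v_p^{c}/\Gamma(c+1)$ gives $v_p^{c-1}/\Gamma(c)$, and this holds for every complex $c$, including the degenerate case where $\Gamma(c+1)$ has a pole and both sides vanish. The $\Gamma$-series is treated as a formal sum of monomials $v^{\bm\ell+\bm\gamma}$, indexed by $\bm\ell\in\mathbb{L}$, with $\bm\gamma$ fixed. Differentiation therefore acts monomial by monomial, and it suffices to compare coefficients.

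\emph{Euler equations.} Each monomial $\prod_p v_p^{\ell_p+\gamma_p}$ is an eigenvector of $\theta_p$ with eigenvalue $\ell_p+\gamma_p$. Hence $\sum_p\cA_l^{(p)}\theta_p$ acts on the $\bm\ell$-th term by multiplication with $\sum_p\cA_l^{(p)}(\ell_p+\gamma_p)=(\cA\bm\ell)_l+(\cA\bm\gamma)_l$. The first piece vanishes because $\bm\ell\in\mathbb{L}=\ker_{\mathbb{Z}}\cA$, and the second equals $\alpha_l$ by the assumption $\bm\alpha=\cA\bm\gamma$. So every term is annihilated by $\sum_p\cA_l^{(p)}\theta_p-\alpha_l$, for each $l=1,\dots,n$.

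\emph{Toric equations.} Fix $\bm\ell'\in\mathbb{L}$ and write $\bm\ell'=\bm\ell'_+-\bm\ell'_-$, where $\bm\ell'_\pm\in\mathbb{Z}_+^N$ have disjoint supports. Applying $\partial^{\bm\ell'_+}=\prod_{\ell'_p>0}\partial_{v_p}^{\ell'_p}$ to the $\bm\ell$-th term and iterating the derivative rule gives
\be
\partial^{\bm\ell'_+}\prod_p\frac{v_p^{\ell_p+\gamma_p}}{\Gamma(\ell_p+\gamma_p+1)}=\prod_p\frac{v_p^{\ell_p-\ell'_{+,p}+\gamma_p}}{\Gamma(\ell_p-\ell'_{+,p}+\gamma_p+1)}\,.
\ee
The same identity holds with $\bm\ell'_-$ in place of $\bm\ell'_+$. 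Now reindex the sum in the first case by $\bm\ell\mapsto\bm\ell-\bm\ell'_+$ and in the second by $\bm\ell\mapsto\bm\ell-\bm\ell'_-$. The two resulting sums run over the cosets $\mathbb{L}-\bm\ell'_+$ and $\mathbb{L}-\bm\ell'_-$. These cosets coincide because $\bm\ell'_+-\bm\ell'_-=\bm\ell'\in\mathbb{L}$. So both $\partial^{\bm\ell'_+}\Phi$ and $\partial^{\bm\ell'_-}\Phi$ equal the same formal series $\sum_{\bm k\in\mathbb{L}-\bm\ell'_+}\prod_p v_p^{k_p+\gamma_p}/\Gamma(k_p+\gamma_p+1)$, and their difference vanishes.

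\emph{Main obstacle.} The delicate point is the bookkeeping at poles of $\Gamma$ when some $\gamma_p$ is an integer. There, terms that appear to be killed by differentiation must be matched with terms that carry $1/\Gamma$ at a non-positive integer. The convention $1/\Gamma(-m)=0$ makes the derivative rule hold uniformly in all cases, so the reindexing argument above needs no case distinction. I would state this convention explicitly at the start.

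I would also note that the whole argument is formal. Both the Euler and toric verifications compare coefficients of identical monomials $v^{\bm k+\bm\gamma}$ on the two sides. No convergence is needed, which is exactly why the statement says "formal solution."
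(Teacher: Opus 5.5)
Your proof is correct and complete. The Euler equations follow from the termwise eigenvalue $\sum_p\cA_l^{(p)}(\ell_p+\gamma_p)=(\cA\bm\ell)_l+(\cA\bm\gamma)_l=\alpha_l$. The toric equations follow from the rule $\partial_v\big(v^{c}/\Gamma(c+1)\big)=v^{c-1}/\Gamma(c)$, which holds for all $c\in\mathbb{C}$ once $1/\Gamma$ is treated as entire, together with the coset identity $\mathbb{L}-\bm\ell'_+=\mathbb{L}-\bm\ell'_-$.

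The paper does not actually prove this proposition; it quotes it from the standard GKZ literature. The only related verification it gives concerns the Euler equations in the block form $\cA=(\mathbb{1}_n\mid\cB)$ with the particular choice $\bm\gamma=(\bm\alpha,\bm{0}_\xr)$. There it factors out $V_{\bm\alpha}(\bm v)=\prod_l v_l^{\alpha_l}$, notes that $\theta_l+\sum_p\cB_l^{(p)}\theta_{n+p}$ annihilates every term of the remaining sum, and invokes the Leibniz rule. That is your eigenvalue computation, split between the prefactor and the sum and specialized to one solution of $\bm\alpha=\cA\bm\gamma$. Your version covers an arbitrary $\cA$ and any such $\bm\gamma$.

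For the toric equations the paper supplies no argument at all. Your coset-reindexing step, which is the standard GKZ proof, is exactly what is needed there. You are also right that neither the rank condition nor the homogeneity condition is needed for the formal statement.
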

\begin{definition}
The Gale dual of $\cA$ is the matrix
$\cG\in\operatorname{Mat}_{N\times\xr}(\mathbb{Z})$, whose columns form a basis of the lattice of relations $\mathbb{L}$. Equivalently, $\cG$ can be viewed as the vector configuration
\be
\label{Gale_dual_first}
\cG
=\left\{
{\bm \cG}^{(1)},...,{\bm \cG}^{(\xr)}
\;\middle|\;
{\bm \cG}^{(p)}
=\big(\cG_1^{(p)},...,\cG_N^{(p)}\big)^\rT
\in\mathbb{Z}^N\,,
\quad p=1,...,\xr
\right\}\,.
\ee
\end{definition}

 Formally, the toric part of the GKZ system consists of infinitely many equations. The Gale dual serves to organize this infinite family. Then, every $\bm\ell\in\mathbb{L}$ admits a unique representation
\be
\label{lattice_vector_expansion}
\bm\ell
=\sum_{p=1}^{\xr}{\bm \cG}^{(p)}m_p
\,,
\qquad
{\bm \cG}^{(p)}=(\cG_1^{(p)},...,\cG_N^{(p)})^\rT\in\mathbb{Z}^N \,,
\qquad
{\bm m}=(m_1,...,m_\xr)\in\mathbb{Z}^\xr\,.
\ee
\begin{rem}
Equivalently, the Gale dual of $\cA$ is any matrix $\cG$ satisfying
\be
\cA\cG=0\,,
\qquad
{\rm rank}\,\cG=\xr\,,
\qquad
{\rm im}_{\mathbb{Z}}\,\cG={\rm ker}_{\mathbb{Z}}\,\cA\,.
\ee
\end{rem}
\begin{rem}
\label{rem:freedom}
The matrices representing the GKZ data are not unique. The transformation $(\cA,\bm\alpha)\longmapsto(U\cA,U\bm\alpha)$, where $U\in{\rm GL}(n,\mathbb{Z})$,
leaves the lattice $\mathbb{L}$ intact and reorganizes the Euler equations as particular linear combinations of the original ones. A permutation of the columns of $\cA$ amounts to a relabeling of the GKZ variables $\bm v$. Independently, the transformation $\cG\longmapsto\cG T$, where
$T\in{\rm GL}(\xr,\mathbb{Z})$,  corresponds to changing the basis in $\mathbb{L}$.
\end{rem}

The $\Gamma$-series \eqref{Gamma_series} can be rewritten directly in terms of the Gale dual:
\be
\label{eq:gamma-series-gale-dual}
\Phi_{\mathbb{L},\bm \gamma}({\bm v})
\equiv\Phi_{\cG,\bm \gamma}({\bm v})
=\sum_{{\bm m}\in\mathbb Z^\xr}\,
\prod_{s=1}^{N}\,
\frac{v_s^{\gamma_s+(\cG{\bm m})_s}}
{\Gamma\!\left(\gamma_s+(\cG{\bm m})_s+1\right)}\,.
\ee
Here, $(\cG{\bm m})_s$ denotes the $s$-th component of the vector $\cG{\bm m}$. For the GKZ systems considered below, the toric matrix can  always be brought  to the following block  form (see Remark \bref{rem:freedom}):
\be
\label{matrix_A_another_link}
\cA=\left(\mathbb{1}_n\mid\cB\right)
\in\operatorname{Mat}_{n\times (n+\xr)}(\mathbb{Z})\,,
\qquad
\text{where} \quad  \cB\in\operatorname{Mat}_{n\times\xr}(\mathbb{Z})\,.
\ee
Then, the Gale dual and the solution of \eqref{alpha_A_gamma} can be chosen as
\be
\label{normalized_Gale_dual}
\cG =
\begin{pmatrix}
-\cB \\[2pt]
\phantom{-}\mathbb{1}_\xr
\end{pmatrix}
\qquad \text{and}\qquad
{\bm \gamma}=
\begin{pmatrix}
{\bm \alpha}\\
{\bm {0}}_\xr
\end{pmatrix},
\ee
where $\mathbb{1}_\xr$ is the $\xr\times\xr$ identity matrix and  ${\bm {0}}_\xr$ is the $\xr$-component zero vector.
With these choices, the $\Gamma$-series \eqref{eq:gamma-series-gale-dual} takes the form
\be
\label{eq:gamma-series-block-form}
\Phi_{\cG,\bm \gamma}({\bm v})
\equiv
\Phi_{{\cB},\bm \alpha}({\bm v})
=V_{\bm \alpha}(\bm v)\, \sum_{\bm m \in \ZZ^\xr_{+}}
\left[
\prod_{l=1}^{n}
\frac{v_l^{-(\cB{\bm m})_l}}
{\Gamma\!\left(\alpha_l-(\cB{\bm m})_l+1\right)}
\right]
\prod_{p=1}^{\xr}\frac{v_{n+p}^{m_p}}{m_p!}\,,
\ee
where the factor independent of the summation variables $\bm m$ is explicitly factored out:
\be
\label{GKZ_leg}
V_{\bm \alpha}(\bm v) = \prod_{l=1}^{n} v_l^{\alpha_l} \,.
\ee
Note that the summation domain is restricted to $\ZZ^\xr_+$, since $1/\Gamma(m_p+1)=0$ for negative integers  $m_p<0$.

The role of  $V_{\bm\alpha}(\bm v)$ is to ensure that the $\Gamma$-series \eqref{eq:gamma-series-block-form} satisfies the Euler equations \eqref{Euler_eq}. For the toric matrix \eqref{matrix_A_another_link}, these equations require
\be
\bigg[\theta_l+\sum_{p=1}^{\xr}\cB_l^{(p)}\theta_{n+p}\bigg] \Phi_{{\cB},\bm \alpha}({\bm v}) = \alpha_l\, \Phi_{{\cB},\bm \alpha}({\bm v})\,,
\qquad
l\in \NN_n\,.
\ee
Applying the operator on the left-hand side to the sum in \eqref{eq:gamma-series-block-form} gives zero, since in each term the contribution from $\theta_l$ cancels the combined contribution from the remaining derivatives. On the other hand, acting on $V_{\bm\alpha}(\bm v)$ yields  $\alpha_l V_{\bm\alpha}(\bm v)$, because it contains $v_l^{\alpha_l}$ and is independent of all $v_{n+p}$. The Leibniz product rule  yields the Euler equations for the full $\Gamma$-series.

Thus, for the toric matrix \eqref{matrix_A_another_link}, the formal solution to the corresponding GKZ system is fully determined by the matrix $\cB\in\operatorname{Mat}_{n\times\xr}(\mathbb{Z})$ and the parameter vector $\bm \alpha$. In what follows, we work entirely within this representation.

\subsection{GKZ data for the polygonal function}
\label{sec:GKZ_data}
In this section, we reconstruct the GKZ data associated with the $n$-point polygonal function directly from its series representation \eqref{general_polygonal}. We first recast the coefficients of the polygonal power series in a form that can be directly compared with the coefficients of the GKZ solution \eqref{eq:gamma-series-block-form}, and thereby extract the corresponding GKZ data. We then substitute these data into \eqref{eq:gamma-series-block-form} and show that the resulting $\Gamma$-series coincides with the basis function \eqref{basisfunc} up to an overall coordinate-independent factor.

Using the Euler reflection formula, the  polygonal function \eqref{general_polygonal} can be represented as
\be
\label{H_in_the_jungle}
\ba{l}
\dps
\HG^{\triple{ijk}}_n({\bm a}|\bm \rY^{\triple{ijk}}_n)
=
\dps \frac{{\rm F}_{n}^{\triple{ijk}}({\bm a})}{{\rm S}_{n}^{\triple{ijk}}({\bm a})}
\sum_{\bm m \in \mathbb{Z}_{+}^{\xr}}\,
\prod_{p=1}^{\xr}\,
\frac{
\big(\rY^{\triple{ijk}}_p\big)^{m_p}}{m_p!}
\prod_{l\notin\{i,j,k\}}
\frac{1}{\Gamma(1-a_l-M_l)}
\vspace{2mm}
\\
\dps
\hspace{25mm}\times\;
\frac{1}{
\Gamma(1+|{\bm a}_{i,j}|'+M_{ij})
\Gamma(1+|{\bm a}_{j,k}|'+M_{jk})
\Gamma(1+|{\bm a}_{k,i}|'+M_{ki})}\,,
\ea
\ee
where the cyclic sums $|{\bm a}_{r,s}|'$ and the triangle-factor ${ \rm S}_{n}^{\triple{ijk}}({\bm a})$ are defined in \eqref{aa} and \eqref{star_n}, respectively, the prefactor ${\rm F}_{n}^{\triple{ijk}}({\bm a})$ is given by
\be
\label{pref_ya}
\ba{l}
\dps
{\rm F}_{n}^{\triple{ijk}}({\bm a})
=
-\frac{\pi^n}
{ \dps\prod_{l=1}^{n}\Gamma(a_l)
 \dps\prod_{l\notin\{i,j,k\}}\sin(\pi a_l)}
\frac{1}
{\sin\!\big(\pi|{\bm a}_{i,j}|'\big)
 \sin\!\big(\pi|{\bm a}_{j,k}|'\big)
 \sin\!\big(\pi|{\bm a}_{k,i}|'\big)}\,.
\ea
\ee
The combinations of the summation variables  $M_l$ and $M_{st}$ \eqref{MM} can be written in terms of the transianic matrix \eqref{tr_mat}:
\be
\label{M_Bm}
\ba{c}
M_l=(\cB_n^{\triple{ijk}}\bm m)_l\,,\qquad l\notin \{i,j,k\}\,,
\\
M_{ij}=-(\cB_n^{\triple{ijk}}\bm m)_i\,,
\quad
M_{jk}=-(\cB_n^{\triple{ijk}}\bm m)_j\,,
\quad
M_{ki}=-(\cB_n^{\triple{ijk}}\bm m)_k\,.
\ea
\ee
Additionally, substituting the representation of the cross-ratio $\rY_p^{\triple{ijk}}$ \eqref{diagrammatic_cross_ratio} into \eqref{H_in_the_jungle}, we obtain
\be
\label{polygonal_GKZ_form}
\ba{c}
\dps
\HG^{\triple{ijk}}_n({\bm a}|\bm \rY^{\triple{ijk}}_n)
=
\dps \frac{{\rm F}_{n}^{\triple{ijk}}({\bm a})}{{\rm S}_{n}^{\triple{ijk}}({\bm a})}
\sum_{\bm m \in \mathbb{Z}_{+}^{\xr}}\,
\prod_{p=1}^{\xr}\frac{X_{r_ps_p}^{m_p}}{m_p!}
\vspace{2mm}
\bigg[
\prod_{a\in V_{\triangle}\vphantom{\cham_n^{(a)}}}\,
\prod_{l\in\cham_n^{(a)}}
\frac{X_{al}^{-(\cB_n^{\triple{ijk}}\bm m)_l}}
{\Gamma\!\left(1-a_l-(\cB_n^{\triple{ijk}}\bm m)_l\right)}
\bigg]
\\
\dps
\times\,
\frac{
X_{ij}^{-(\cB_n^{\triple{ijk}}\bm m)_i}
X_{jk}^{-(\cB_n^{\triple{ijk}}\bm m)_j}
X_{ki}^{-(\cB_n^{\triple{ijk}}\bm m)_k}}
{\Gamma\!\left(1+|{\bm a}_{i,j}|'-(\cB_n^{\triple{ijk}}\bm m)_i\right)
 \Gamma\!\left(1+|{\bm a}_{j,k}|'-(\cB_n^{\triple{ijk}}\bm m)_j\right)
 \Gamma\!\left(1+|{\bm a}_{k,i}|'-(\cB_n^{\triple{ijk}}\bm m)_k\right)}\,.
\ea
\ee

Formula \eqref{polygonal_GKZ_form} shares the functional structure of the GKZ solution \eqref{eq:gamma-series-block-form} with a block-form toric matrix \eqref{matrix_A_another_link}. Based on  this comparison, the GKZ data for the $n$-point polygonal function can be directly extracted and are summarized below.
\begin{definition}
\label{def:GKZ_polyg}
The upper block of the Gale dual matrix (\ref{Gale_dual_first}), (\ref{normalized_Gale_dual}) is given by the negative of the transianic matrix (\ref{tr_mat}):
\be
\label{Gale_mat_polygonal}
\cG_n^{\triple{ijk}}=
\begin{pmatrix}
-\cB_n^{\triple{ijk}}\\[2pt]
\phantom{-}\mathbb{1}_\xr
\end{pmatrix}
\in\operatorname{Mat}_{N\times\xr}(\mathbb{Z})\,,
\qquad
N=\frac{n(n-1)}{2}\,.
\ee
The corresponding toric matrix takes the form
\be
\label{toric_pol}
\cA_n^{\triple{ijk}} = \Big(\mathbb{1}_n \mid \cB_n^{\triple{ijk}}\Big)
\in\operatorname{Mat}_{n\times N}(\mathbb{Z})\,.
\ee
The parameter vectors are given by
\be
\label{alpha_polyg}
\bm \gamma^{\triple{ijk}}
=\big(\bm \alpha^{\triple{ijk}}, \mathbb{0}_{\xr}\big)
\in\mathbb{R}^{N}\,,\;
\;\;
\bm \alpha^{\triple{ijk}}\in\mathbb{R}^n:
\;\;
\big(\bm \alpha^{\triple{ijk}}\big)_l=
\begin{cases}
\;\;\;\;-a_l\,, & l\notin \{i,j,k\}\,,\\[2mm]
\phantom{-}|{\bm a}_{i,j}|'\,, & l=i\,,\\[2mm]
\phantom{-}|{\bm a}_{j,k}|'\,, & l=j\,,\\[2mm]
\phantom{-}|{\bm a}_{k,i}|'\,, & l=k\,.
\end{cases}
\ee
\end{definition}
\noindent By construction, both parameter vectors  satisfy the general relation \eqref{alpha_A_gamma}.
Also, note that the ordering of the rows of the Gale dual matrix \eqref{Gale_mat_polygonal} can be different, which   reflects the chosen ordering of the Gamma functions in the denominator of \eqref{polygonal_GKZ_form}. Here, we order these factors so that the second transianic indices $b_{ij}^{(p)}$, $b_{jk}^{(p)}$,  $b_{ki}^{(p)}$ occupy the $i$-th, $j$-th, $k$-th rows, respectively. Any other ordering is equally admissible provided that the corresponding components of $\boldsymbol{\gamma}$ and the associated variables are permuted consistently.
\begin{definition}
\label{def:gale_lat_poly}
A basis of the lattice of relations $\mathbb{L}$ is given by the vectors (cf. (\ref{Gale_dual_first})):
\be
\label{Gale_vector_configuration_polygonal}
{\bm \cG}^{(p)}
=
\begin{pmatrix}
-{\bm \cB}^{(p)}\\[1mm]
\phantom{-}{\bm e}^{(p)}
\end{pmatrix}
\in\mathbb{Z}^{N}\,,
\quad
p=1,...,\xr\,.
\ee
Here, ${\bm e}^{(p)}=(0,...,1,...,0)\in\mathbb{Z}^{\xr}$ is the $p$-th canonical basis vector, and ${\bm \cB}^{(p)}$ is read off from the transianic matrix (cf. (\ref{transianic_vector_configuration})).
\end{definition}
\noindent It follows that there is a one-to-one correspondence between $\rY_p^{\triple{ijk}}$ and ${\bm \cG}^{(p)}$ for all $p=1,...,\xr$. In other words, the cross-ratio set constructed by the diagrammatic algorithm uniquely defines a basis of the lattice of relations.

Next, comparing the arguments of the GKZ solution \eqref{eq:gamma-series-block-form} with the arguments of the polygonal function
\eqref{polygonal_GKZ_form}, we establish  the following identification of variables.
\begin{prop}
\label{prop:mapping_new}
Given a basis triangle $\triangle_n^{\triple{ijk}}\subset P_n$ and the respective chord decomposition (\ref{chord_set_decomposition}),  the GKZ variables $\bm v=(v_1,...,v_N)$ are  identified with the squared distances $\bm X_n$ as follows:
\begin{itemize}

\item for the basis chords,
\be
\label{cross_ratio_depend_on_v0_new}
\ba{c}
v_l=X_{al}\,,\quad  l\in\cham_n^{(a)}\,, \; a\in V_{\triangle}\,,\\
v_i=X_{ij}\,, \quad v_j=X_{jk}\,, \quad v_k=X_{ki}\,;
\ea\ee

\item for the non-basis chords,
\be
\label{cross_ratio_depend_on_v_new}
v_{n+p}=X_{r_ps_p}\,,
\qquad
p=1,...,\xr\,,
\ee
where $(r_p,s_p)\in\overline{\bm C}_n^{\triple{ijk}}$ is the unique non-basis chord associated with the cross-ratio $\rY_p^{\triple{ijk}}$, as stated in Proposition  \bref{rem:cross_ratio_non_basis_chord}.

\end{itemize}
\end{prop}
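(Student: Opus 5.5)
The plan is to check directly that, under the stated identification, the $\Gamma$-series \eqref{eq:gamma-series-block-form} built from the data of Definition \bref{def:GKZ_polyg} reproduces \eqref{polygonal_GKZ_form} term by term, up to the constant ${\rm F}_n^{\triple{ijk}}/{\rm S}_n^{\triple{ijk}}$ and the leg factor. First, we note that the assignment is a bijection between $\{v_1,\dots,v_N\}$ and $\bm X_n$. The first $n$ variables are indexed by $l\in\NN_n$. Each $l\notin\{i,j,k\}$ lies in exactly one chamber $\cham_n^{(a)}$ and receives the basis chord $(a,l)$, while $i,j,k$ receive the three triangle edges. This exhausts $\bm C_n^{\triple{ijk}}$, which has cardinality $n$ by \eqref{basis_non_basis_cardinalities}. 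The remaining $\xr$ variables receive the non-basis chords, and the bijection \eqref{1-1YC} of Proposition \bref{rem:cross_ratio_non_basis_chord} shows that these exhaust $\overline{\bm C}_n^{\triple{ijk}}$ without repetition. Hence $N=n+\xr=n(n-1)/2$ is consistent with \eqref{all_chords_cardinality}.

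Next, we substitute the identification into \eqref{eq:gamma-series-block-form} with $\cB=\cB_n^{\triple{ijk}}$ and $\bm\alpha=\bm\alpha^{\triple{ijk}}$ from \eqref{alpha_polyg}, and compare it factor by factor with \eqref{polygonal_GKZ_form}.

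For $l\notin\{i,j,k\}$ the $\Gamma$-series contains the factor
\[
v_l^{-(\cB\bm m)_l}/\Gamma(1-a_l-(\cB\bm m)_l)=X_{al}^{-(\cB\bm m)_l}/\Gamma(1-a_l-(\cB\bm m)_l),
\]
which matches the bracket in \eqref{polygonal_GKZ_form}. For $l=i$ the factor is $X_{ij}^{-(\cB\bm m)_i}/\Gamma(1+|\bm a_{i,j}|'-(\cB\bm m)_i)$, and the cases $l=j,k$ are analogous. This matching uses \eqref{M_Bm} together with the sign convention of \eqref{tr_mat}. Finally, $v_{n+p}^{m_p}/m_p!=X_{r_ps_p}^{m_p}/m_p!$.

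The leg factor $V_{\bm\alpha}(\bm v)=\prod_l v_l^{\alpha_l}$ becomes exactly ${\rm V}_n^{\triple{ijk}}(\bm a|\bm x)$ in \eqref{V_n}. This follows because $\alpha_l=-a_l$ on chamber vertices and $\alpha_i,\alpha_j,\alpha_k$ equal the primed cyclic sums. Hence $\Phi_{\cB,\bm\alpha}(\bm v)={\rm F}_n^{\triple{ijk}}({\rm S}_n^{\triple{ijk}})^{-2}\,\Phi_n^{\triple{ijk}}$ under the identification. The identification is therefore the one forced by the comparison, and the basis function solves the GKZ system in these variables.

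The main obstacle is the step that legitimizes rewriting the polygonal series as \eqref{polygonal_GKZ_form}, which requires Proposition \bref{prop:diagrammatic_cross_ratio}. Each monomial $\prod_p(\rY_p)^{m_p}$ must split as $\prod_p X_{r_ps_p}^{m_p}$ times powers of basis-chord distances. Moreover, the exponent of $X_{al}$ must be exactly $-(\cB\bm m)_l$ and the exponent of $X_{ij}$ must be $M_{ij}=-(\cB\bm m)_i$.

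This splitting relies on two facts from Proposition \bref{rem:cross_ratio_non_basis_chord}. First, each basis chord $(a,l)$ with $l\in\cham_n^{(a)}$ is attached to a unique vertex $l$, so the first transianic index alone records its (always orange) occurrence. Second, distinct cross-ratios involve distinct non-basis chords, so the non-basis distances never combine with each other or with basis ones. Summing the exponents over $p$ then gives the stated linear combinations $M_l$ and $M_{st}$.

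I would verify the orange-only claim for $(a,l)$ chords case by case over the three subsets ${}^{_1}\rU$, ${}^{_2}\rU$ and $\rW$, using the explicit edge sets \eqref{cross_ratio_edge_sets}.
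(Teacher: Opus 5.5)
Your proposal is correct and follows essentially the paper's route. The paper gives no separate proof and simply reads off the identification by comparing \eqref{polygonal_GKZ_form}, which rests on Proposition \bref{prop:diagrammatic_cross_ratio}, factor by factor with the block-form $\Gamma$-series \eqref{eq:gamma-series-block-form}; your extra check that the assignment is a bijection onto $\bm C_n$ is a welcome addition.

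One small slip: \eqref{polygonal_GKZ_form} gives $\HG_n^{\triple{ijk}}=({\rm F}_n^{\triple{ijk}}/{\rm S}_n^{\triple{ijk}})\times(\text{sum})$. The $\Gamma$-series is therefore $\Phi_n^{\triple{ijk}}/{\rm F}_n^{\triple{ijk}}$, as in \eqref{Gamma_series_with_F_V_H}, not ${\rm F}_n^{\triple{ijk}}({\rm S}_n^{\triple{ijk}})^{-2}\Phi_n^{\triple{ijk}}$. This is harmless for the identification, since the factor is $\bm x$-independent.
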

\noindent  Following Remark \bref{rem:diagrammatic_interpretation}, we emphasize the role of the chord decomposition \eqref{chord_set_decomposition}. For a fixed basis triangle, this decomposition naturally partitions the GKZ variables: the first $n$ variables are indexed by the basis chords, while the remaining $\xr$ variables correspond to the non-basis ones. By construction, the latter are in a one-to-one correspondence with the cross-ratios generated by the diagrammatic algorithm.
\begin{prop}
\label{prop:Gale_cross_ratios}
Under the variable identification introduced in Proposition \bref{prop:mapping_new}, the cross-ratios \eqref{diagrammatic_cross_ratio} take the form
\be
\label{cross_ratio_Gale_monomial}
\rY_p^{\triple{ijk}}
=
\prod_{s=1}^{N}v_s^{\cG_s^{(p)}}
=
\prod_{(a,b)\in\bm C_n}X_{ab}^{\cG_{(a,b)}^{(p)}}\,,
\qquad
p=1,...,\xr\,,
\ee
where the rows of the Gale dual are labelled by chords $(a,b)\in\bm C_n$, and its entries are given by
\be
\label{Gale_matrix_chord_entries}
\cG_{(a,b)}^{(p)}
=
\begin{cases}
\phantom{-}1\,,
& (a,b)\in E_{\textcolor{green}{\bm\times}}^{(p)}\,,
\\
-1\,,
& (a,b)\in E_{\textcolor{orange}{\bm\times}}^{(p)}\,,
\\
\phantom{-}0\,,
& \text{otherwise}\,.
\end{cases}
\ee
Thus, the $p$-th column of the Gale dual collects the powers of all squared distances entering the corresponding cross-ratio.
\end{prop}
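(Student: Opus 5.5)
The argument is a direct substitution: combine the variable identification of Proposition \bref{prop:mapping_new} with the block form of the Gale dual \eqref{Gale_mat_polygonal}, then compare with the diagrammatic representation \eqref{diagrammatic_cross_ratio} of Proposition \bref{prop:diagrammatic_cross_ratio}.

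\emph{First equality.} Write $\prod_{s=1}^N v_s^{\cG_s^{(p)}}$ and split it into the first $n$ factors and the last $\xr$ factors. By Definition \bref{def:gale_lat_poly}, the lower block of ${\bm\cG}^{(p)}$ is ${\bm e}^{(p)}$. Hence the last factors reduce to $v_{n+p}=X_{r_ps_p}$, using \eqref{cross_ratio_depend_on_v_new}. The upper block is $-{\bm\cB}^{(p)}$, so the first factors give $\prod_{l=1}^n v_l^{-\cB_l^{(p)}}$. Inserting the entries \eqref{tr_mat} splits this into two pieces.

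For $l\notin\{i,j,k\}$, each such $l$ lies in exactly one chamber $\cham_n^{(a)}$, because of the vertex count \eqref{L_in_vertices}. This piece therefore yields $\prod_{a\in V_\triangle}\prod_{l\in\cham_n^{(a)}}X_{al}^{-b_l^{(p)}}$.

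For $l=i,j,k$, the minus sign in \eqref{tr_mat} cancels the minus sign from the Gale dual. This gives $X_{ij}^{b_{ij}^{(p)}}X_{jk}^{b_{jk}^{(p)}}X_{ki}^{b_{ki}^{(p)}}$.

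Multiplying the pieces reproduces \eqref{diagrammatic_cross_ratio} term by term, which proves the first equality.

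\emph{Second equality and the entries \eqref{Gale_matrix_chord_entries}.} Proposition \bref{prop:mapping_new} is a bijection between $\{1,\dots,N\}$ and $\bm C_n$, since it covers the basis chords and, via \eqref{1-1YC}, the non-basis chords. We may therefore relabel the rows of $\cG$ by chords; the second product in \eqref{cross_ratio_Gale_monomial} is this relabeling. It remains to identify the exponent of each chord, going through the chord types.

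The non-basis chord $(r_p,s_p)$ has exponent $1$, and it is green by Proposition \bref{rem:cross_ratio_non_basis_chord}. Every other non-basis chord has exponent $0$, and it does not occur in diagram $p$, by the uniqueness part of the same proposition.

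A triangle edge has exponent $b_{rs}^{(p)}$, which by Definition \bref{def2} equals $+1$, $-1$ or $0$ according as the edge is green, orange or absent.

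A chord $(a,l)$ with $l\in\cham_n^{(a)}$ has exponent $-b_l^{(p)}$. Here lies the one nontrivial step: we must show that $b_l^{(p)}=1$ exactly when $(a,l)$ is an (orange) edge of diagram $p$.

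\emph{Main obstacle.} The difficulty is that $b_l^{(p)}=1$ only records $l\in V_\times^{(p)}$. We must rule out that $l$ enters the diagram through some chord other than $(a,l)$ while $(a,l)$ itself is absent. The preferred route is to take this from the last step in the proof of Proposition \bref{prop:diagrammatic_cross_ratio}, where exactly this identification was used. Failing that, a case check over the three subsets ${}^{1}\bm\rU$, ${}^{2}\bm\rU$, $\bm\rW$ settles it. In each diagram, every vertex $r$ or $s$ lying in a chamber $\cham_n^{(a)}$ is joined by an orange diagonal to the triangle vertex $a$. This is visible from the explicit forms \eqref{U_inv}, \eqref{W_inv}: for instance, in $U[i,j,r,s]$ with $r\in\cham_n^{(i)}$ the orange chords are $(i,r)$ and $(j,s)$. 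In addition, by Proposition \bref{rem:cross_ratio_non_basis_chord}, the remaining basis chords are orange. Together these give \eqref{Gale_matrix_chord_entries}.
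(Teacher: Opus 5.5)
Your proposal is correct and follows essentially the same route as the paper. The first equality is obtained exactly as there, by combining Proposition \bref{prop:diagrammatic_cross_ratio} with the block form $(-{\bm\cB}^{(p)},{\bm e}^{(p)})$ of the Gale dual columns and the variable identification of Proposition \bref{prop:mapping_new}.

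For the entries \eqref{Gale_matrix_chord_entries}, the paper simply matches exponents against the green/orange fraction \eqref{cross_general_prod}. Your chord-by-chord check, in particular the case check that each diagram vertex in $\cham_n^{(a)}$ is joined to $a$ by an orange diagonal, explicitly justifies the identification $b_l^{(p)}=1 \Leftrightarrow (a,l)\in E^{(p)}_{\textcolor{orange}{\bm\times}}$. The paper only asserts this identification in the proof of Proposition \bref{prop:diagrammatic_cross_ratio}.
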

\begin{proof}
A direct consequence of Proposition \bref{prop:diagrammatic_cross_ratio} and the variable identification of Proposition \bref{prop:mapping_new} is
\be
\label{cross_ratio_GKZ_variables}
\rY_p^{\triple{ijk}}
=
v_{n+p}\prod_{l=1}^{n}v_l^{-\cB_l^{(p)}}\,,
\qquad
p=1,...,\xr\,,
\ee
where $\cB_l^{(p)}$ are the elements of the transianic matrix \eqref{tr_mat}. Thus, each cross-ratio contains a single GKZ variable $v_{n+p}$ associated with its corresponding non-basis chord, while the remaining factor involves only the variables associated with basis chords. According to \eqref{Gale_vector_configuration_polygonal}, the first $n$ components of the $p$-th column ${\bm\cG}^{(p)}$ of the Gale dual are given by the negative transianic vector $-{\bm\cB}^{(p)}$, while its remaining $\xr$ components form ${\bm e}^{(p)}$. Hence, the powers of the GKZ variables in \eqref{cross_ratio_GKZ_variables} are given by ${\bm\cG}^{(p)}$, which proves the first equality in \eqref{cross_ratio_Gale_monomial}. The second  equality follows from Proposition \bref{prop:mapping_new}. Finally, green and orange chords contribute squared distances to the numerator and denominator, respectively, whereas absent chords contribute no factors. This yields \eqref{Gale_matrix_chord_entries}.
\end{proof}
\begin{rem}
\label{rem:Gale_adjacency}
The Gale dual matrix in the form (\ref{Gale_matrix_chord_entries}) coincides with the  signed adjacency matrix of the cross-ratio diagram (\ref{adjacency_cross_ratio_matrix}), see Appendix \bref{app:adjacency} for more details.
\end{rem}
\begin{rem}
The parameter vector (\ref{alpha_polyg}) satisfies the constraint
\be
\label{hypersurf}
\sum_{l=1}^n \big(\bm \alpha^{\triple{ijk}}\big)_l=-\frac{D}{2}\,.
\ee
Thus, the parameter vectors $\bm a$ and $\bm \alpha^{\triple{ijk}}$ both belong to $\mathbb{R}^{n}$ but lie on different hyperplanes, cf. (\ref{conf}) and (\ref{hypersurf}).
\end{rem}
\begin{rem}
\label{rem:GKZ_leg}
Under the variable identification introduced in  Proposition
\bref{prop:mapping_new}, the factor (\ref{GKZ_leg}) coincides with the leg-factor produced by the diagrammatic algorithm:
\be
V_{\bm \alpha^{\triple{ijk}}}(\bm v)
=
{\rm V}_n^{\triple{ijk}}({\bm a}|\bm x)\,,
\ee
where $\bm \alpha^{\triple{ijk}}$ is constrained by \eqref{hypersurf}.
This identity ensures the consistency of the diagrammatic algorithm with the GKZ approach at this stage of the construction.
\end{rem}

Summarizing these results, we establish the connection between the polygonal function and  the  $\Gamma$-series solution  of the GKZ hypergeometric system.
\begin{theorem}
The $\Gamma$-series representation for the $n$-point polygonal function $\HG^{\triple{ijk}}_n({\bm a}|\bm \rY^{\triple{ijk}}_n)$ (\ref{general_polygonal}) has the form
\be
\label{Gamma_series_with_F_V_H}
\ba{l}
\dps
\Phi_{\cB_n^{\triple{ijk}},\,\bm\alpha^{\triple{ijk}}}(\bm v)
=
\frac{1}{{\rm F}_{n}^{\triple{ijk}}({\bm a})} \,
{\rm S}_n^{\triple{ijk}}({\bm a})
{\rm V}_n^{\triple{ijk}}({\bm a}|\bm x)\,
\HG_n^{\triple{ijk}}(\bm a|\bm\rY_n^{\triple{ijk}})\,.
\ea
\ee
Here, ${ \rm S}_{n}^{\triple{ijk}}({\bm a})$, ${\rm V}_n^{\triple{ijk}}({\bm a}|\bm x)$, and ${\rm F}_{n}^{\triple{ijk}}({\bm a})$ are given by (\ref{star_n}), (\ref{V_n}), and (\ref{pref_ya}), respectively. The GKZ data expressed in  diagrammatic terms are gathered in Definition \bref{def:GKZ_polyg}.  This $\Gamma$-series is identified, up to a constant prefactor, with the basis function:

\be
\label{Gamma_series_Phi}
\Phi_{\cB_n^{\triple{ijk}},\, \bm \alpha^{\triple{ijk}}}(\bm v) = \Phi_{\cB_n^{\triple{ijk}},\bm \alpha^{\triple{ijk}}}(\bm X_n)
\;\propto\;
\Phi_n^{\triple{ijk}}(\bm a|\bm x)\,,
\ee
which is given in Definition \bref{def:bas}. The variable identification  is given in Proposition \bref{prop:mapping_new}.
\end{theorem}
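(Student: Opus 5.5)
The plan is a direct term-by-term comparison of the block-form $\Gamma$-series \eqref{eq:gamma-series-block-form} with the Gamma-function representation \eqref{polygonal_GKZ_form} of the polygonal function. First, I substitute the GKZ data of Definition \bref{def:GKZ_polyg} into \eqref{eq:gamma-series-block-form}: take $\cB=\cB_n^{\triple{ijk}}$ and $\bm\alpha=\bm\alpha^{\triple{ijk}}$, and use the variable identification of Proposition \bref{prop:mapping_new}. The prefactor $V_{\bm\alpha}(\bm v)=\prod_l v_l^{\alpha_l}$ then becomes ${\rm V}_n^{\triple{ijk}}(\bm a|\bm x)$ by Remark \bref{rem:GKZ_leg}. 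This is immediate from \eqref{alpha_polyg} and \eqref{V_n}: for $l\in\cham_n^{(a)}$ we get $v_l^{\alpha_l}=X_{al}^{-a_l}$, and for $l=i,j,k$ we get $X_{ij}^{|\bm a_{i,j}|'}$, $X_{jk}^{|\bm a_{j,k}|'}$, $X_{ki}^{|\bm a_{k,i}|'}$.

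Next I match the summands. The variable part of the $p$-sum is $\prod_l v_l^{-(\cB\bm m)_l}\prod_p v_{n+p}^{m_p}/m_p!$. By \eqref{cross_ratio_GKZ_variables} (equivalently Proposition \bref{prop:Gale_cross_ratios}) this equals $\prod_p (\rY_p^{\triple{ijk}})^{m_p}/m_p!$. For the Gamma factors, $\Gamma(\alpha_l-(\cB\bm m)_l+1)$ gives $\Gamma(1-a_l-M_l)$ for $l\notin\{i,j,k\}$, and $\Gamma(1+|\bm a_{i,j}|'+M_{ij})$ etc.\ for $l=i,j,k$, using \eqref{M_Bm}. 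Consequently the sum in \eqref{eq:gamma-series-block-form} coincides exactly with the sum appearing in \eqref{H_in_the_jungle}, i.e.\ it equals $({\rm S}_n^{\triple{ijk}}/{\rm F}_n^{\triple{ijk}})\,\HG_n^{\triple{ijk}}$. Multiplying by ${\rm V}_n^{\triple{ijk}}$ gives \eqref{Gamma_series_with_F_V_H}. Comparing with \eqref{basisfunc} then yields \eqref{Gamma_series_Phi}, with proportionality constant $1/{\rm F}_n^{\triple{ijk}}(\bm a)$, which is independent of $\bm x$.

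The one step that is not bookkeeping is the validity of \eqref{H_in_the_jungle} itself. I would verify it factor by factor with the reflection formula $\Gamma(z)\Gamma(1-z)=\pi/\sin\pi z$ and the identity $(c)_M=\Gamma(c+M)/\Gamma(c)$. For $l\notin\{i,j,k\}$, write
\[
(-1)^{M}(a)_M=\frac{\Gamma(1-a)}{\Gamma(a)\,\Gamma(1-a-M)},
\]
using $\sin\pi(a+M)=(-1)^M\sin\pi a$. This produces $\pi/(\Gamma(a_l)^2\sin\pi a_l)$ times $1/\Gamma(1-a_l-M_l)$. The denominators $1/(1+c)_{M}$ give $\Gamma(1+c)/\Gamma(1+c+M)$ with $c=|\bm a_{\cdot,\cdot}|'$. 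I then collect the constant factors $\Gamma(1-a_l)$, $\Gamma(1+c)$, and $\Gamma(a_l)$, and show that they equal ${\rm F}/{\rm S}$. This uses $\Gamma(1+c)\Gamma(-c)=-\pi/\sin\pi c$ for the three triangle parameters, together with the $\Gamma(a_i)\Gamma(a_j)\Gamma(a_k)$ in ${\rm S}$, to reproduce \eqref{pref_ya}.

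I expect this constant bookkeeping to be the main obstacle. The overall sign $-$ in \eqref{pref_ya} arises from three factors of $-1$, and the powers of $\pi$ and $\Gamma(a_l)$ must come out exactly as $\pi^n/\prod_l\Gamma(a_l)$. Everything else follows directly from the earlier propositions.
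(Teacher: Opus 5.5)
Your route is the same as the paper's. The theorem summarizes the comparison of \eqref{polygonal_GKZ_form} with the block-form $\Gamma$-series \eqref{eq:gamma-series-block-form}, using the data of Definition~\ref{def:GKZ_polyg}, the monomial identity \eqref{cross_ratio_GKZ_variables}, and Remark~\ref{rem:GKZ_leg} for the leg-factor. Your matching of the Gamma factors, and of the constant $1/{\rm F}_n^{\triple{ijk}}$ in \eqref{Gamma_series_Phi}, is correct.

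There is, however, an error in the one step you flag as non-trivial. The identity $(-1)^M(a)_M=\Gamma(1-a)/\big(\Gamma(a)\Gamma(1-a-M)\big)$ is false: at $M=0$ its right-hand side is $1/\Gamma(a)$ rather than $1$. The reflection formula gives $\Gamma(a+M)=(-1)^M\Gamma(a)\Gamma(1-a)/\Gamma(1-a-M)$, hence
\[
(-1)^M(a)_M=\frac{\Gamma(1-a)}{\Gamma(1-a-M)}=\frac{\pi}{\Gamma(a)\sin(\pi a)\,\Gamma(1-a-M)}\,.
\]
So each vertex $l\notin\{i,j,k\}$ contributes $\pi/(\Gamma(a_l)\sin\pi a_l)$, not $\pi/(\Gamma(a_l)^2\sin\pi a_l)$. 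With your version, the collected constant times ${\rm S}_n^{\triple{ijk}}$ would carry $\Gamma(a_i)\Gamma(a_j)\Gamma(a_k)\prod_{l\notin\{i,j,k\}}\Gamma(a_l)^2$ in the denominator, so it could not reproduce \eqref{pref_ya}.

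With the corrected identity the check closes. The constant in \eqref{H_in_the_jungle} is $\prod_{l\notin\{i,j,k\}}\Gamma(1-a_l)\prod_{c}\Gamma(1+c)$, with $c$ running over $|\bm a_{i,j}|'$, $|\bm a_{j,k}|'$, $|\bm a_{k,i}|'$. Inserting $\Gamma(1+c)=-\pi/\big(\sin(\pi c)\,\Gamma(-c)\big)$ gives exactly ${\rm F}_n^{\triple{ijk}}/{\rm S}_n^{\triple{ijk}}$. The three minus signs produce the overall sign, and the factor $\Gamma(a_i)\Gamma(a_j)\Gamma(a_k)$ from ${\rm S}_n^{\triple{ijk}}$ completes $\prod_{l=1}^n\Gamma(a_l)$.
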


\subsection{GKZ hypergeometric  system for the polygonal function}
\label{sec:GKZ_polygon}

\begin{theorem}[toric equations]
\label{prop:general_toric_equation}
For every $\bm m=(m_1,...,m_\xr)\in\mathbb{Z}^{\xr}$, the $\Gamma$-series associated with the $n$-point polygonal function \eqref{Gamma_series_with_F_V_H} satisfies the toric equation
\be
\label{general_toric_equation_compact}
\Bigg[
\prod_{\substack{(a,b)\in\bm C_n\\ \ell_{ab}>0}}
\left(\frac{\partial}{\partial X_{ab}}\right)^{\ell_{ab}}
-
\prod_{\substack{(a,b)\in\bm C_n\\ \ell_{ab}<0}}
\left(\frac{\partial}{\partial X_{ab}}\right)^{-\ell_{ab}}
\Bigg]
\Phi_{\cB_n^{\triple{ijk}},\bm\alpha^{\triple{ijk}}}(\bm X_n)=0\,,
\ee
where $\ell_{ab}$ denotes the component of the lattice vector $\bm\ell=\sum_{p=1}^{\xr}m_p{\bm\cG}^{(p)}\in\mathbb{L}$ associated with the chord $(a,b)\in\bm C_n$ and is given by
\be
\label{general_lattice_vector_component_diagrammatic}
\ell_{ab}
=\smashoperator{\sum_{\substack{p=1\\
(a,b)\in E^{(p)}_{\textcolor{green}{\bm\times}}}}^{\xr}}m_p
\quad-\quad
\smashoperator{\sum_{\substack{p=1\\
(a,b)\in E^{(p)}_{\textcolor{orange}{\bm\times}}}}^{\xr}}m_p\,.
\ee
\end{theorem}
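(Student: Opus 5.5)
The plan is to verify directly that the $\Gamma$-series \eqref{eq:gamma-series-gale-dual}, with the data of Definition \bref{def:GKZ_polyg}, is annihilated by the box operator \eqref{toric_eq} for every $\bm\ell\in\mathbb{L}$. The statement then follows by rewriting the operator in terms of squared distances via Proposition \bref{prop:mapping_new} and Proposition \bref{prop:Gale_cross_ratios}.

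\textbf{Step 1: action of a single derivative power.} For a single monomial $v_s^{c}/\Gamma(c+1)$ one has
\[
\left(\frac{\partial}{\partial v_s}\right)^{k}\frac{v_s^{c}}{\Gamma(c+1)}=\frac{v_s^{c-k}}{\Gamma(c-k+1)},
\qquad k\in\mathbb{Z}_+ .
\]
This identity holds even when $c-k+1$ is a non-positive integer, since $1/\Gamma$ then vanishes consistently.

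\textbf{Step 2: reindexing the series.} Write $\bm\ell=\bm\ell_+-\bm\ell_-$ with $\bm\ell_\pm\in\mathbb{Z}_+^N$. Applying $\prod_{\ell_p>0}\partial_p^{\ell_p}$ to the series $\sum_{\bm\ell'\in\mathbb{L}}\prod_s v_s^{\ell'_s+\gamma_s}/\Gamma(\ell'_s+\gamma_s+1)$ maps the summand with $\bm\ell'$ to the same expression with exponents $\bm\ell'+\bm\gamma-\bm\ell_+$. Applying the second product instead gives exponents $\bm\ell'+\bm\gamma-\bm\ell_-$. Because $\bm\ell\in\mathbb{L}$, the shift $\bm\ell'\mapsto\bm\ell'+\bm\ell$ is a bijection of $\mathbb{L}$. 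Under this shift the first sum becomes
\[
\sum_{\bm\ell''\in\mathbb{L}}\prod_s \frac{v_s^{\ell''_s+\gamma_s-\ell_{-,s}}}{\Gamma(\ell''_s+\gamma_s-\ell_{-,s}+1)},
\]
which coincides term by term with the second sum, so the difference vanishes. I expect this formal reindexing to be the main subtle point. To be rigorous in the block form \eqref{eq:gamma-series-block-form}, the sum over $\bm m\in\mathbb{Z}^{\xr}$ must be extended from $\mathbb{Z}_+^{\xr}$. This is allowed because the factors $1/\Gamma(m_p+1)$ vanish for $m_p<0$, and they remain zero after differentiation by Step 1. Hence no boundary terms arise, and the identity holds as formal series (and analytically inside the convergence domain of Proposition \bref{prop:conv}).

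\textbf{Step 3: translation to the chord variables.} Use the basis \eqref{Gale_vector_configuration_polygonal} to write $\bm\ell=\sum_p m_p\bm\cG^{(p)}$, which is legitimate because $\cG_n^{\triple{ijk}}$ contains $\mathbb{1}_\xr$ and therefore its columns form a $\mathbb{Z}$-basis of $\ker_{\mathbb{Z}}\cA_n^{\triple{ijk}}$. Label the components by chords through Proposition \bref{prop:mapping_new}; then $\partial/\partial v_s=\partial/\partial X_{ab}$ for the corresponding chord. By \eqref{Gale_matrix_chord_entries}, $\cG^{(p)}_{(a,b)}$ equals $+1$ on green chords of $\rY_p^{\triple{ijk}}$, $-1$ on orange chords and $0$ otherwise. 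Summing over $p$ gives exactly \eqref{general_lattice_vector_component_diagrammatic}. Substituting this $\ell_{ab}$ into the operator of Step 2 yields \eqref{general_toric_equation_compact}.
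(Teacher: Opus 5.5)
Your proposal is correct and follows the paper's proof. Step 3 is exactly the paper's argument: read off $\ell_{ab}=\sum_p m_p\cG^{(p)}_{(a,b)}$ from \eqref{Gale_matrix_chord_entries}, relabel variables via Proposition \bref{prop:mapping_new}, and use that the columns of $\cG_n^{\triple{ijk}}$ form a basis of $\mathbb{L}$. The only difference is that your Steps 1--2 re-derive, by the standard lattice-shift argument, the toric half of Proposition \bref{prop:solution}, which the paper simply invokes.
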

\begin{proof}
Taking the component associated with $(a,b)$ in \eqref{lattice_vector_expansion} and using \eqref{Gale_matrix_chord_entries}, we obtain $\ell_{ab}=\sum_{p=1}^{\xr}m_p\cG_{(a,b)}^{(p)}$, which gives \eqref{general_lattice_vector_component_diagrammatic}. Under the variable identification of Proposition \bref{prop:mapping_new}, substituting these components into \eqref{toric_eq} yields \eqref{general_toric_equation_compact}. Since the columns of the Gale dual  form a basis of $\mathbb{L}$, every lattice vector is obtained as $\bm m$ ranges over $\mathbb{Z}^{\xr}$.
\end{proof}
\begin{theorem}[Euler equations]
\label{prop:diagrammatic_GKZ}
\label{proposition_for_Euler_eq}
The $\Gamma$-series associated with the $n$-point polygonal function (\ref{Gamma_series_with_F_V_H}) satisfies the following $n$ Euler equations, one for each basis chord in $\bm C_n^{\triple{ijk}}$.

\begin{itemize}
\item For any $(a,b)\in E_{\triangle} \subset \bm C_n^{\triple{ijk}}$:
\end{itemize}
\be
\label{Euler_triangle_diagrammatic}
\Bigg[
X_{ab}\frac{\partial}{\partial X_{ab}}
+
\smashoperator{\sum_{\substack{p=1\\(a,b)\in E^{(p)}_{\textcolor{orange}{\bm\times}}}}^{\xr}}
X_{r_p s_p}\frac{\partial}{\partial X_{r_p s_p}}
-
\smashoperator{\sum_{\substack{p=1\\(a,b)\in E^{(p)}_{\textcolor{green}{\bm\times}}}}^{\xr}}
X_{r_p s_p}\frac{\partial}{\partial X_{r_p s_p}}
-|{\bm a}_{a,b}|'
\Bigg]
\Phi_{\cB_n^{\triple{ijk}},\,\bm \alpha^{\triple{ijk}}}(\bm X_n)=0\,.
\ee

\begin{itemize}
\item For any $(a,b)\in\bm C_n^{\triple{ijk}}\setminus E_{\triangle}$ with $b\in\cham_n^{(a)}$:
\end{itemize}
\be
\label{Euler_basis_diagrammatic}
\Bigg[
X_{ab}\frac{\partial}{\partial X_{ab}}
+
\smashoperator{\sum_{\substack{p=1\\(a,b)\in E^{(p)}_{\textcolor{orange}{\bm\times}}}}^{\xr}}
X_{r_p s_p}\frac{\partial}{\partial X_{r_p s_p}}
+a_b
\Bigg]
\Phi_{\cB_n^{\triple{ijk}},\,\bm \alpha^{\triple{ijk}}}(\bm X_n)=0\,.
\ee
In equations (\ref{Euler_triangle_diagrammatic}) and (\ref{Euler_basis_diagrammatic}), $(r_p,s_p)\in \overline{\bm C}_n^{\triple{ijk}}$ is the unique non-basis chord entering the cross-ratio diagram of  $\rY_p^{\triple{ijk}}\in \bm \rY_n^{\triple{ijk}}$, as specified in Proposition \bref{rem:cross_ratio_non_basis_chord}.
\end{theorem}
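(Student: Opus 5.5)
The plan is to specialize the general Euler equations \eqref{Euler_eq} to the block-form toric matrix \eqref{toric_pol}, and then to translate each row into diagrammatic language using the variable identification of Proposition \bref{prop:mapping_new}. For the block form $\cA_n^{\triple{ijk}}=(\mathbb{1}_n\mid\cB_n^{\triple{ijk}})$, the $l$-th Euler equation reads
\be
\bigg[\theta_l+\sum_{p=1}^{\xr}\cB_l^{(p)}\theta_{n+p}-\big(\bm\alpha^{\triple{ijk}}\big)_l\bigg]\Phi_{\cB_n^{\triple{ijk}},\bm\alpha^{\triple{ijk}}}(\bm v)=0\,,\qquad l\in\NN_n\,.
\ee
Section \bref{sec:basics} already showed that the $\Gamma$-series \eqref{eq:gamma-series-block-form} satisfies this equation. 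The reason is that $\theta_l$ and $\sum_p\cB_l^{(p)}\theta_{n+p}$ cancel termwise on the sum, while $V_{\bm\alpha}$ contributes $\alpha_l$. So the content of the theorem is only the rewriting of these $n$ equations. Since the rows $l\in\NN_n$ are in bijection with the basis chords $\bm C_n^{\triple{ijk}}$ via \eqref{cross_ratio_depend_on_v0_new}, there is exactly one equation per basis chord.

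First I treat a triangle edge. Take $l=i$, so that $v_i=X_{ij}$ and $\theta_i=X_{ij}\partial_{X_{ij}}$. By \eqref{tr_mat}, $\cB_i^{(p)}=-b_{ij}^{(p)}$, and by Definition \bref{def2} this equals $-1$ if $(i,j)$ is green in $\rY_p^{\triple{ijk}}$, $+1$ if it is orange, and $0$ otherwise. Using \eqref{cross_ratio_depend_on_v_new}, $\theta_{n+p}=X_{r_ps_p}\partial_{X_{r_ps_p}}$. The sum $\sum_p\cB_i^{(p)}\theta_{n+p}$ therefore splits into the orange sum with a plus sign and the green sum with a minus sign, as in \eqref{Euler_triangle_diagrammatic}. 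The constant term is $(\bm\alpha^{\triple{ijk}})_i=|\bm a_{i,j}|'$ by \eqref{alpha_polyg}. The rows $j$ and $k$ follow in the same way by cyclic permutation.

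Next I treat a basis chord $(a,l)$ with $a\in V_\triangle$ and $l\in\cham_n^{(a)}$, labelled by the row $l\notin\{i,j,k\}$. Here $v_l=X_{al}$, $\cB_l^{(p)}=b_l^{(p)}$ and $(\bm\alpha^{\triple{ijk}})_l=-a_l$, so the constant term in the equation becomes $+a_l$. The remaining step is to show that $b_l^{(p)}=1$ exactly when the chord $(a,l)$ is orange in $\rY_p^{\triple{ijk}}$. One direction comes from Proposition \bref{rem:cross_ratio_non_basis_chord}: basis chords outside $E_\triangle$ can occur only as orange chords. The identity in the proof of Proposition \bref{prop:diagrammatic_cross_ratio} also says that $b_l^{(p)}$ records precisely whether $(a,l)$ occurs.

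I expect this equivalence to be the only delicate point. The worry is that $l\in V_\times^{(p)}$ (so $b_l^{(p)}=1$) might hold because $l$ is an endpoint of the non-basis chord, without $(a,l)$ actually being a chord of the diagram. I would settle this by inspecting the three diagram types of the algorithm. In each of ${}^{1}\bm\rU$, ${}^{2}\bm\rU$ and $\bm\rW$, every vertex $r$ lying in a chamber $\cham_n^{(a)}$ is joined by an orange diagonal to $a$: for $U[i,j,r,k]$ the orange chords are $(i,r)$ or $(j,r)$ as appropriate, and for $W[i,j,k,r,s]$ with $r,s\in\cham_n^{(j)}$ the orange chords are $(j,r)$ and $(j,s)$. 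This confirms $b_l^{(p)}=\big[(a,l)\in E^{(p)}_{\textcolor{orange}{\bm\times}}\big]$, which gives \eqref{Euler_basis_diagrammatic}. Alternatively, the equivalence can be read off directly from the Gale entries \eqref{Gale_matrix_chord_entries} of Proposition \bref{prop:Gale_cross_ratios}, since $-\cB_l^{(p)}=\cG^{(p)}_{(a,l)}$.
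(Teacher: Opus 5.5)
Your proposal is correct and follows essentially the same route as the paper: you specialize \eqref{Euler_eq} to the block toric matrix \eqref{toric_pol}, then translate each row via the variable identification of Proposition~\ref{prop:mapping_new}, using $\cB_l^{(p)}=-\cG_l^{(p)}$ together with \eqref{Gale_matrix_chord_entries} or \eqref{diagrammatic_cross_ratio}. Where the paper simply cites \eqref{diagrammatic_cross_ratio} for the step $b_l^{(p)}=1$ exactly when $(a,l)$ is orange, you check it case by case over the three diagram types of the algorithm, which is a slightly more explicit version of the same argument.
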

\begin{proof}
Substituting the toric  matrix \eqref{toric_pol} into the Euler equations \eqref{Euler_eq} yields
\be
\label{Euler_original_diagrammatic}
\Bigg[
\theta_l
+\sum_{p=1}^{\xr}\cB_l^{(p)}\theta_{n+p}
-\big(\bm \alpha^{\triple{ijk}}\big)_l
\Bigg]
\Phi_{\cB_n^{\triple{ijk}},\bm \alpha^{\triple{ijk}}}(\bm v)=0\,,
\qquad l=1,...,n\,.
\ee
Here, $\theta_s=v_s\partial/\partial v_s$. According to  Proposition \bref{prop:mapping_new}, $\theta_{n+p}=X_{r_p s_p}\partial/\partial X_{r_p s_p}$, while the first $n$ variables correspond to the basis chords \eqref{set_of_basic_chords}.

For the row of \eqref{Gale_matrix_chord_entries} corresponding to the chord $(i,j)$, we have $v_i=X_{ij}$, hence $\theta_i=v_i \partial/\partial v_i = X_{ij}\partial/\partial X_{ij}$, and $(\bm\alpha^{\triple{ijk}})_i=|{\bm a}_{i,j}|'$ by \eqref{alpha_polyg}. Since $\cB_i^{(p)}=-\cG_i^{(p)}$, the relation \eqref{Gale_matrix_chord_entries} makes the sum in \eqref{Euler_original_diagrammatic} equal to the orange-chord contribution minus the green-chord contribution. For the chord $(a,b)=(i,j)$, this yields  \eqref{Euler_triangle_diagrammatic}. The rows associated with the chords $(j,k)$ and $(k,i)$ follow by cyclic permutation. The ordered representatives $(i,j),(j,k),(k,i)$ also specify the cyclic parameter sums $|{\bm a}_{a,b}|'$ in these three equations.

For the remaining basis chords $(a,l)$ with $l\in\cham_n^{(a)}$, the variable and parameter identifications are $v_l=X_{al}$ and $(\bm\alpha^{\triple{ijk}})_l=-a_l$, respectively. By \eqref{diagrammatic_cross_ratio}, $(a,l)$ can occur only in the denominator of a cross-ratio. Thus, \eqref{Gale_matrix_chord_entries} and $\cB_l^{(p)}=-\cG_l^{(p)}$ select only the orange-chord terms in the sum in \eqref{Euler_original_diagrammatic}. The substitution of $\theta_l= v_l \partial/\partial v_l = X_{al}\partial/\partial X_{al}$, then yields \eqref{Euler_basis_diagrammatic}.
\end{proof}

Choosing  $\bm m={\bm e}^{(p)}$ in Theorem \bref{prop:general_toric_equation} yields the lattice vector  $\bm\ell={\bm\cG}^{(p)}$. As $p$ ranges from $1$ to $\xr$, these choices select the $\xr$ basis vectors of $\mathbb{L}$ given by the columns of the Gale dual. {The relation (\ref{Gale_matrix_chord_entries}) then shows that the positive and negative parts select exactly the green and orange chords, respectively, with all nonzero exponents equal to $1$.}
\begin{cor}[a finite toric subsystem]
\label{cor:finite_toric_subsystem}
Choosing the basis vectors of the lattice of relations  reduces
(\ref{general_toric_equation_compact}) to $n(n-3)/2$ toric equations of the form
\be
\label{toric_proposition}
\Bigg[
\prod_{(a,b)\in E^{(p)}_{\textcolor{green}{\bm\times}}}
\frac{\partial}{\partial X_{ab}}
-
\prod_{(c,d)\in E^{(p)}_{\textcolor{orange}{\bm\times}}}
\frac{\partial}{\partial X_{cd}}
\Bigg]
\Phi_{\cB_n^{\triple{ijk}},\,\bm \alpha^{\triple{ijk}}}
(\bm X_n)=0\,.
\ee
\end{cor}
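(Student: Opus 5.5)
The plan is to specialize Theorem \bref{prop:general_toric_equation} to the $\xr$ choices $\bm m={\bm e}^{(p)}$, $p=1,\dots,\xr$, and then read off the resulting differential operators from Proposition \bref{prop:Gale_cross_ratios}.

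\textbf{Step 1: the lattice vectors.} For $\bm m={\bm e}^{(p)}$, the lattice vector \eqref{lattice_vector_expansion} is the single Gale column, $\bm\ell={\bm\cG}^{(p)}$. By Definition \bref{def:gale_lat_poly}, these $\xr=n(n-3)/2$ columns form a basis of $\mathbb{L}$, so the selected equations are exactly $n(n-3)/2$ in number. They are pairwise distinct because each ${\bm\cG}^{(p)}$ carries its own unit entry in the row of $v_{n+p}$.

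\textbf{Step 2: the components.} Formula \eqref{general_lattice_vector_component_diagrammatic} with only $m_p=1$ nonzero gives $\ell_{ab}=+1$ if $(a,b)\in E^{(p)}_{\textcolor{green}{\bm\times}}$, $\ell_{ab}=-1$ if $(a,b)\in E^{(p)}_{\textcolor{orange}{\bm\times}}$, and $\ell_{ab}=0$ otherwise. This is just \eqref{Gale_matrix_chord_entries}.

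\textbf{Step 3: the operators.} Substituting into \eqref{general_toric_equation_compact}, the positive part becomes a product of first-order derivatives over the green chords. The negative part becomes the corresponding product over the orange chords. This yields \eqref{toric_proposition}.

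\textbf{Main obstacle.} The only point needing care is that the green and orange edge sets of one diagram are disjoint and consist of distinct chords. Otherwise exponents could cancel, or add up to $\pm 2$, instead of all being $\pm1$. I would check this directly from the explicit edge sets \eqref{cross_ratio_edge_sets}: the indices $i_1,\dots,i_4$ (respectively $j_1,\dots,j_5$) are distinct vertices, so all listed pairs are different chords. Equivalently, the monomial \eqref{cross_ratio_Gale_monomial} is a genuine cross-ratio with no cancellations.

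\textbf{Order.} From this check it also follows that the equations are of order $2$ for $U$-type variables and order $3$ for $W$-type variables, matching the claim in the abstract. Since Theorem \bref{prop:general_toric_equation} already holds for every $\bm m$, no further verification that the $\Gamma$-series satisfies these equations is needed.
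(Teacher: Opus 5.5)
Your proposal is correct and follows essentially the same route as the paper: specialize Theorem~\ref{prop:general_toric_equation} to $\bm m={\bm e}^{(p)}$, so that $\bm\ell={\bm\cG}^{(p)}$ runs over the basis columns of $\mathbb{L}$, and read off the green and orange exponents $\pm1$ from \eqref{Gale_matrix_chord_entries}. Your extra check, that the green and orange chords of each diagram are pairwise distinct so that no exponents cancel or double, makes explicit a detail the paper leaves implicit in \eqref{Gale_matrix_chord_entries}.
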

\begin{rem}
\label{rem:diagrammatic_subsystem}
The complete GKZ hypergeometric system is obtained by combining the toric equations from  Theorem \bref{prop:general_toric_equation} with the Euler equations  from Theorem \bref{proposition_for_Euler_eq}. Corollary \bref{cor:finite_toric_subsystem} singles out a finite subsystem of PDEs that, although associated with a basis of the lattice
$\mathbb{L}$, does not necessarily generate all toric equations.
\end{rem}
For generic values of $\bm a$, the finite toric subsystem (\ref{toric_proposition}) is sufficient to recover the polygonal series (\ref{general_polygonal}) up to a constant factor. This corresponds to  solving the differential equations using the Frobenius method. Specifically, substituting a power series with arbitrary coefficients in the same variables into the toric equations (\ref{toric_proposition}) obtained above yields recurrence relations for these coefficients. These relations then completely determine the coefficients, thereby reproducing  the desired polygonal series.
\begin{rem}
The chord decomposition (\ref{chord_set_decomposition}) determines  the way we label equations in Corollary \bref{cor:finite_toric_subsystem} and Theorem \bref{proposition_for_Euler_eq}: the $\xr$ non-basis chords label the basis toric equations, while the $n$ basis chords label the Euler equations.
\end{rem}
\begin{rem}
\label{remark_tells_the_order}
For the  finite subsystem in Corollary \bref{cor:finite_toric_subsystem}, the differential order coincides with the order of the corresponding cross-ratio. Specifically, the quadratic cross-ratios in ${}^{_1}\hspace{-0.5mm}\bm\rU_n^{\triple{ijk}} \cup{}^{_2}\hspace{-0.5mm}\bm\rU_n^{\triple{ijk}}$ correspond to   second-order equations, while the cubic cross-ratios in $\bm\rW_n^{\triple{ijk}}$ correspond to   third-order equations. The full family (\ref{general_toric_equation_compact}) also contains equations of higher order.
\end{rem}

Appendix \bref{sec:examples} illustrates the Euler equations (Theorem \bref{proposition_for_Euler_eq}) and a finite subsystem of toric equations (Corollary \bref{cor:finite_toric_subsystem}) for several low-point polygonal functions. These include the fourth Appell function $F_4$ for $n=4$ and the Srivastava--Daoust functions for $n=5,6$.

\section{Conclusions}
\label{sec:conclusion}

In this paper, we have constructed the GKZ hypergeometric systems associated with $n$-point polygonal functions, which define multipoint one-loop parametric conformal integrals in arbitrary dimensions. Starting from their power series representation, we have shown that the transianic matrix determines the nontrivial block of the toric matrix, the corresponding Gale dual, and, therefore, a basis of the lattice of relations. The propagator powers determine the GKZ parameter vectors, while the  linear identities for the transianic indices ensure the homogeneity condition for the toric matrix and imply  that all convergence indices of the polygonal functions vanish. The latter property ensures that the polygonal functions possess non-empty convergence domains in the coordinate space.

The GKZ variables identified with the squared distances are organized according to the chord decomposition. The non-basis chords correspond one-to-one with the cross-ratios that serve as arguments of the polygonal function. Meanwhile, the leg-factor is built from the variables associated with the basis chords. Within the diagrammatic algorithm, these two components are constructed independently. They combine, up to a coordinate-independent factor, into the basis function, which thus naturally arises within the GKZ framework.

The GKZ data extracted from the polygonal function yield a diagrammatic formulation of both parts of the GKZ system. The Euler equations inherit the decomposition into the edges of the basis triangle and the remaining basis chords. For an arbitrary lattice vector, the two differential monomials in the corresponding toric equation are determined by the green and orange chords of the full cross-ratio set. Restricting to the basis vectors selected by the diagrammatic algorithm singles out a finite subsystem of $n(n-3)/2$ equations: quadratic cross-ratios give second-order equations, while cubic cross-ratios give third-order ones. Although this finite subsystem is sufficient to reconstruct the polygonal function, it does not, in general, generate the full toric ideal, and the complete toric family also contains higher-order equations.

Low-point representatives of polygonal functions include the familiar hypergeometric functions \cite{Alkalaev:2025zhg}: the fourth Appell function $F_4$ for $n=4$ and the Srivastava--Daoust functions \cite{SrivastavaDaoust,Srivastava1985MultipleGH} for $n=5,6$. In Appendix \bref{sec:examples}, we have considered the corresponding GKZ hypergeometric systems as illustrations of the general results obtained in the main text. In particular, for $n=4$, we have  reproduced the second-order differential system for the fourth Appell function $F_4$. For $n=5$ and $n=6$, we have obtained systems of $5$ and $6$ Euler equations, respectively, together with finite subsystems of $5$ and $9$ toric equations. In both cases, the toric subsystems are PDEs of  second and third order.

Several questions deserve further investigation. In particular, simple linear combinations reorganize the Euler equations \eqref{Euler_triangle_diagrammatic}--\eqref{Euler_basis_diagrammatic} into the triangle-independent form \cite{ZAM:workinprogress}:
\be
\Bigg[
\sum_{\substack{m=1\\m\neq l}}^n
X_{lm}\frac{\partial}{\partial X_{lm}}
+a_l
\Bigg]
\Phi_{\cB_n^{\triple{ijk}},\,\bm\alpha^{\triple{ijk}}}(\bm X_n)
=0\,,
\qquad
l=1,...,n\,.
\ee
Formally, this system can be viewed as a new set of Euler equations defined by the toric matrix\footnote{Note that the toric matrix $\cA_n$ can be viewed as the incidence matrix of the complete graph $K_n=(V_{\hexagon},\bm C_n)$ constructed on the vertex set of the conformal polygon.} and the parameter vector
\be
\cA_n
=
\big(\cA_{l}^{(a,b)}\big)
\in\operatorname{Mat}_{n\times N}(\mathbb{Z}) \,,
\qquad
\cA_{l}^{(a,b)}
=
\delta_{l}^a+\delta_{l}^b \,,
\qquad
\bm\alpha=-\bm a\,,
\ee
where the columns are indexed by the chords $(a,b) \in \bm C_n$. The same toric matrix was previously obtained for the full conformal integral using  different methods in \cite{Pal:2021llg,Pal:2023kgu,Levkovich-Maslyuk:2024zdy}. It would be important to clarify the relation between the two toric matrices reconstructed from the same polygonal function. This may help to determine the principle that governs the reconstruction of the full conformal integral as a linear combination of the polygonal functions associated with different basis triangles. Also, it may provide a useful example how to fix the coefficients in a linear combination of GKZ solutions representing a given Feynman integral.

A related question concerns the construction of a finite generating set for the full toric ideal. Potential guidance comes from the Yangian bootstrap for conformal integrals \cite{Loebbert:2019vcj,Levkovich-Maslyuk:2024zdy}. This approach starts from a finite set of differential constraints expressing invariance under the level-zero and level-one generators of the Yangian algebra, while the higher-level constraints follow from the Yangian relations \cite{Loebbert:2016cdm}. It would be interesting to uncover a diagrammatic interpretation of the Yangian equations and verify whether they generate the full toric ideal. Conversely, the diagrammatic algorithm may provide a natural framework for constructing Yangian invariants relevant to more general Feynman integrals with extended symmetries. The appearance of the Baxter lattice in both settings suggests a possible connection whose precise form remains to be understood.

\vspace{3mm}

\noindent \textbf{Acknowledgements.}  We are  grateful to Wladyslaw Wachowski for many useful discussions and for a series of lectures on GKZ  systems  presented at the QFT journal club at the  Lebedev Physical Institute in 2025,  as well as to all the participants for discussions during the sessions. We are also grateful to Ekaterina Mandrygina for her help in drawing diagrams. Y.Z. also thanks his dear friend Christine Batrakova for her unwavering moral support, encouragement, and inspiration throughout this work. Our work was supported by the Foundation for the Advancement of Theoretical Physics and Mathematics “BASIS”.

\appendix

\section{Lemmas on transianic indices}
\label{app:trans_rels}

\paragraph{Proofs of lemmas.} In what follows, we collect proofs of lemmas regarding transianic indices given in the main text.

\paragraph{Lemma \bref{cor1}.}
\begin{proof}
By construction, a quadratic cross-ratio $\rY_p^{\triple{ijk}}\in {}^{_1}\hspace{-0.5mm}\bm \rU_n^{\triple{ijk}}$ is represented by a quadrilateral whose four vertices consist of the three vertices of the basis triangle $\triangle_{n}^{\triple{ijk}}$ and one additional vertex $r$  (see fig.~\bref{Conf.Pol.11} ({\bf a})). According to Definition \bref{def2}, the first transianic index $b_l^{(p)}$ equals $1$  if  vertex $l$ belongs to the cross-ratio graph, and $0$ otherwise. Since exactly one vertex of the cross-ratio lies outside $\triangle_{n}^{\triple{ijk}}$, namely $r$, the first sum in \eqref{property1_U} equals $1$.

To evaluate the sum of the second transianic indices, we note that two vertices of $\triangle_{n}^{\triple{ijk}}$ are connected by a green edge; therefore, one of the indices $(b_{ij}^{(p)},b_{jk}^{(p)},b_{ki}^{(p)})$  equals $1$. The third vertex of $\triangle_{n}^{\triple{ijk}}$ is connected by a green edge to the additional vertex $r$, while the denominator is formed by the diagonals of the quadrilateral; as a consequence, exactly one edge of $\triangle_{n}^{\triple{ijk}}$ appears as an orange edge and contributes $-1$, whereas the remaining edge of $\triangle_{n}^{\triple{ijk}}$ does not belong to the cross-ratio graph and contributes $0$. Hence, the second sum in \eqref{property1_U} is  $1-1+0=0$, which proves the claim for the subset ${}^{_1}\hspace{-0.5mm}\bm \rU_n^{\triple{ijk}}$.

By construction, a quadratic cross-ratio $\rY_p^{\triple{ijk}} \in {}^{_2}\hspace{-0.5mm}\bm \rU_n^{\triple{ijk}}$ is represented by a quadrilateral whose four vertices consist of two vertices of $\triangle_{n}^{\triple{ijk}}$ and two additional vertices $r$ and $s$ (see fig.~ \bref{Conf.Pol.11} ({\bf b})). Since exactly  two vertices of the cross-ratio lie outside $\triangle_{n}^{\triple{ijk}}$, namely $r$ and $s$, the first sum in \eqref{property1_U} equals 2.

To evaluate the sum of the second transianic indices, we observe that the vertices $r$ and $s$ are chosen from the open chambers; therefore, they do not coincide with the vertices of $\triangle_{n}^{\triple{ijk}}$. Consequently, the diagonals associated with the denominator of the cross-ratio (colored orange in fig.~\bref{Conf.Pol.11}) never coincide with any edge of $\triangle_{n}^{\triple{ijk}}$. In contrast, exactly one edge of $\triangle_{n}^{\triple{ijk}}$ coincides with a green edge of the cross-ratio graph. According to Definition \bref{def2}, this edge contributes $1$ to the second sum in \eqref{property1_U}, thereby completing the proof of the first line in Lemma \bref{cor1}.

By construction, a cubic cross-ratio $\rY_p^{\triple{ijk}}\in \bm \rW_n^{\triple{ijk}}$ is represented by a pentagon whose five vertices consist of the three vertices of $\triangle_{n}^{\triple{ijk}}$ and two additional vertices $r$ and $s$ (see fig.~\bref{Conf.Pol.11} ({\bf c})). Since exactly two vertices of the cross-ratio lie outside $\triangle_{n}^{\triple{ijk}}$, namely, $r$ and $s$, the first sum in \eqref{property1_W} equals  $2$.

To evaluate the sum of the second transianic indices, we note that, by construction, two edges of $\triangle_{n}^{\triple{ijk}}$ are always green, while the remaining edge is orange. Therefore, among the three indices $(b_{ij}^{(p)},b_{jk}^{(p)},b_{ki}^{(p)})$, two are equal to $1$ and the third is equal to $-1$. Hence, the second sum in \eqref{property1_W} is $1+1-1=1$, which proves the claim for the subset $\bm \rW_n^{\triple{ijk}}$.
\end{proof}

\paragraph{Lemma \bref{lemma:local_inversion_balance}.}
\begin{proof}
Consider a cross-ratio $\rY_p^{\triple{ijk}}\in \bm \rY_n^{\triple{ijk}}$. Let $\textsf{Or}_i^{\triangle}$ and $\textsf{Gr}_i^{\triangle}$ denote the numbers of orange and green chords of its diagram that are incident to the vertex $i$ and belong to the basis triangle $\triangle_n^{\triple{ijk}}$. By Definition \bref{def2}, the second transianic index equals $1$ for a green edge, $-1$ for an orange edge, and $0$ for an absent edge. Since the two edges of the basis triangle incident to $i$ are $(i,j)$ and $(k,i)$, it follows that
\be
\label{lemma_helping_prop_1}
-b_{ij}^{(p)}-b_{ki}^{(p)}
=\textsf{Or}_i^{\triangle}-\textsf{Gr}_i^{\triangle}\,.
\ee

For $l\in\cham_n^{(i)}$, the diagrammatic algorithm ensures that the basis chord $(i,l)$ is orange precisely when $l$ belongs to the cross-ratio diagram, and is absent otherwise. By Definition \bref{def2}, these two cases correspond to $b_l^{(p)}=1$ and $b_l^{(p)}=0$, respectively. Therefore,
\be
\label{lemma_helping_prop_2}
\sum_{l\in\cham_n^{(i)}}b_l^{(p)}=\textsf{Or}_i^{\vee}\,,
\ee
where $\textsf{Or}_i^{\vee}$ denotes the number of orange basis chords $(i,l)$ with $l\in\cham_n^{(i)}$ appearing in the cross-ratio diagram.

By Proposition \bref{rem:cross_ratio_non_basis_chord}, the unique non-basis chord is green. Hence, every orange chord is a basis chord. In particular, the orange chords incident to $i$ consist of the $\textsf{Or}_i^{\triangle}$ basis-triangle edges and the $\textsf{Or}_i^{\vee}$ remaining basis chords. Combining \eqref{lemma_helping_prop_1} and
\eqref{lemma_helping_prop_2}, we thus obtain
\be
-b_{ij}^{(p)}-b_{ki}^{(p)}
+\sum_{l\in\cham_n^{(i)}}b_l^{(p)}
=\textsf{Or}_i^{\vee}+\textsf{Or}_i^{\triangle}-\textsf{Gr}_i^{\triangle}
=\textsf{Or}_i-\textsf{Gr}_i^{\triangle}\,,
\ee
where $\textsf{Or}_i$ is the total number of orange chords incident to the vertex $i$.

For both the quadratic and cubic cross-ratios \eqref{U_inv} and \eqref{W_inv}, each vertex label occurs equally many times among the endpoints of the squared distances in the numerator and denominator. Indeed, for the quadratic cross-ratio \eqref{U_inv} each label occurs once in each, while for the cubic cross-ratio \eqref{W_inv} the label $j_2$ occurs twice in each and every other label occurs once in each. Consequently, the total number $\textsf{Gr}_i$ of green chords incident to the vertex $i$ equals $\textsf{Or}_i$. This also holds when $i$ is absent from the diagram, in which case both numbers vanish. Therefore,
\be
\label{lemma_helping_prop_3}
-b_{ij}^{(p)}-b_{ki}^{(p)}
+\sum_{l\in\cham_n^{(i)}}b_l^{(p)}
=\textsf{Gr}_i-\textsf{Gr}_i^{\triangle}\,.
\ee

It remains to specify the right-hand side. According to  the diagrammatic algorithm, every green chord is either an edge of the basis triangle or the unique non-basis chord $(r_p,s_p)$ identified in Proposition \bref{rem:cross_ratio_non_basis_chord}. Consequently, $\textsf{Gr}_i-\textsf{Gr}_i^{\triangle}$ counts precisely whether this non-basis chord is incident to the vertex  $i$:
\be
\textsf{Gr}_i-\textsf{Gr}_i^{\triangle}
=\begin{cases}
1\,, & i\in\{r_p,s_p\}\,,\\
0\,, & i\notin\{r_p,s_p\}\,.
\end{cases}
\ee
Substituting this expression into
\eqref{lemma_helping_prop_3} proves
\eqref{local_inversion_balance}. The relations at $j$
and $k$ follow by cyclic permutations of $i,j,k$.
\end{proof}

\paragraph{Additional lemmas.} In what follows, we establish a few lemmas that  demonstrate additional constraints satisfied by the transianic indices.

\begin{lemma}
\label{lemma:second}
The admissible second transianic indices  for each type of cross-ratio variable are listed below. Every line gives one admissible assignment.

\vspace{4mm}
$\rY_p^{\triple{ijk}}\in\bm \rW_n^{\triple{ijk}}$:
\be
\label{group_1}
\begin{aligned}
b_{ij}^{(p)} &= 1\,,  & b_{jk}^{(p)} &= 1\,,  & b_{ki}^{(p)} &= -1\,, \\[2pt]
b_{ij}^{(p)} &= 1\,,  & b_{jk}^{(p)} &= -1\,, & b_{ki}^{(p)} &= 1\,,  \\[2pt]
b_{ij}^{(p)} &= -1\,, & b_{jk}^{(p)} &= 1\,,  & b_{ki}^{(p)} &= 1\,.
\end{aligned}
\ee

$\rY_p^{\triple{ijk}}\in{}^{_1}\hspace{-0.5mm}\bm \rU_n^{\triple{ijk}}$:
\be
\label{group_2}
\begin{aligned}
b_{ij}^{(p)} &= 1\,,  & b_{jk}^{(p)} &= -1\,, & b_{ki}^{(p)} &= 0\,, \\[2pt]
b_{ij}^{(p)} &= -1\,, & b_{jk}^{(p)} &= 1\,,  & b_{ki}^{(p)} &= 0\,, \\[2pt]
b_{ij}^{(p)} &= 1\,,  & b_{jk}^{(p)} &= 0\,,  & b_{ki}^{(p)} &= -1\,, \\[2pt]
b_{ij}^{(p)} &= -1\,, & b_{jk}^{(p)} &= 0\,,  & b_{ki}^{(p)} &= 1\,, \\[2pt]
b_{ij}^{(p)} &= 0\,,  & b_{jk}^{(p)} &= 1\,,  & b_{ki}^{(p)} &= -1\,, \\[2pt]
b_{ij}^{(p)} &= 0\,,  & b_{jk}^{(p)} &= -1\,, & b_{ki}^{(p)} &= 1\,.
\end{aligned}
\ee

$\rY_p^{\triple{ijk}}\in{}^{_2}\hspace{-0.5mm}\bm \rU_n^{\triple{ijk}}$:
\be
\label{group_3}
\begin{aligned}
b_{ij}^{(p)} &= 0\,, & b_{jk}^{(p)} &= 0\,, & b_{ki}^{(p)} &= 1\,, \\[2pt]
b_{ij}^{(p)} &= 0\,, & b_{jk}^{(p)} &= 1\,, & b_{ki}^{(p)} &= 0\,, \\[2pt]
b_{ij}^{(p)} &= 1\,, & b_{jk}^{(p)} &= 0\,, & b_{ki}^{(p)} &= 0\,.
\end{aligned}
\ee
\end{lemma}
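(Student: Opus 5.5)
The plan is a case-by-case inspection of the three constructions in the diagrammatic algorithm. For each construction we identify which edges of $\triangle_n^{\triple{ijk}}$ appear among the green chords, which among the orange chords, and which not at all. Definition \bref{def2} then converts this directly into values of $b_{ij}^{(p)},b_{jk}^{(p)},b_{ki}^{(p)}$. Lemma \bref{cor1} serves as a consistency check: the sums of the three indices must be $1$, $0$, $1$ for $\bm\rW$, ${}^{_1}\bm\rU$, ${}^{_2}\bm\rU$ respectively.

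\emph{Case ${}^{_2}\bm\rU_n^{\triple{ijk}}$.} Take the green triangle edge $(i,j)$ and $r\in\cham_n^{(i)}$, $s\in\cham_n^{(j)}$, giving $U[i,j,r,s]$. Its orange chords are $(i,r)$ and $(j,s)$. Neither is a triangle edge, since $r,s\notin V_\triangle$. The other green chord $(r,s)$ is non-basis. Hence $b_{ij}=1$ and $b_{jk}=b_{ki}=0$. Cyclically permuting the green edge to $(j,k)$ or $(k,i)$ gives the other two rows of \eqref{group_3}.

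\emph{Case ${}^{_1}\bm\rU_n^{\triple{ijk}}$.} Take green $(i,j)$ and green $(k,r)$ with $r\in\cham_n^{(i)}$ or $r\in\cham_n^{(j)}$. The cross-ratio is $U[i,j,r,k]$ or $U[i,j,k,r]$, and by \eqref{U_inv} its orange chords are $\{(i,r),(j,k)\}$ or $\{(i,k),(j,r)\}$ respectively. In the first case the triangle edge $(j,k)$ is orange and $(k,i)$ is absent, giving $(1,-1,0)$. In the second case $(k,i)$ is orange and $(j,k)$ is absent, giving $(1,0,-1)$. Rotating the green triangle edge through its three positions produces exactly the six rows of \eqref{group_2}. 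The signs $(-1,1,0)$ and similar rows come from the case where the green edge is $(j,k)$ and the orange edge is $(i,j)$, and so on.

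\emph{Case $\bm\rW_n^{\triple{ijk}}$.} Take green $(i,j)$ and $(j,k)$ with $r,s\in\cham_n^{(j)}$, giving $W[i,j,k,r,s]$. By \eqref{W_inv} its orange chords are $(i,k)$, $(j,r)$ and $(j,s)$. The only triangle edge among them is $(k,i)$, so the assignment is $(1,1,-1)$. Taking the common green vertex to be $i$ or $k$ instead gives the remaining two rows of \eqref{group_1}.

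The main obstacle is the bookkeeping of the vertex ordering. In each case we must confirm that the clockwise labelling convention places the vertices in the slots of \eqref{U_inv}--\eqref{W_inv} so that the orange diagonals are the ones claimed. This matters especially in ${}^{_1}\bm\rU$, where the chamber containing $r$ decides which adjacent triangle edge becomes orange. We would check it against the chamber positions $L_n^{(a)}$ in \eqref{L_in_vertices}. We would also confirm that cyclic relabelling of $i,j,k$ preserves the sign convention $b_{rs}$ for unordered edges, so that no other sign patterns can arise and the lists \eqref{group_1}--\eqref{group_3} are exhaustive.
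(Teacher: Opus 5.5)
Your proposal is correct and follows essentially the same route as the paper: both read off, from the three constructions of the diagrammatic algorithm, which edges of the basis triangle are green, orange, or absent, and convert this into the values of $b_{ij}^{(p)},b_{jk}^{(p)},b_{ki}^{(p)}$ via Definition~\ref{def2}. The difference is only in presentation. The paper states the colour counts (two green and one orange; one green, one orange and one absent; one green and two absent), uses the sums from Lemma~\ref{cor1}, and lists the compatible permutations. You instead compute the orange chords explicitly from the formulas for $U$ and $W$ for one representative and obtain the other rows by cyclic rotation, which also shows that each listed row is actually realized.
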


\begin{proof}
Lemma \bref{cor3}  for  $\rY_p^{\triple{ijk}} \in \bm \rW_n^{\triple{ijk}}$ yields $b_{ij}^{(p)} + b_{jk}^{(p)} + b_{ki}^{(p)} = 1$. By the construction of a cubic cross-ratio, two edges of $\triangle_{n}^{\triple{ijk}}$ are always green and the remaining one is necessarily orange. Therefore, two of the three second transianic indices equal $1$, while the third  equals  $-1$. It follows that the only admissible assignments are precisely those listed in \eqref{group_1}.

Lemma \bref{cor1} for $\rY_p^{\triple{ijk}} \in {}^{_1}\hspace{-0.5mm}\bm \rU_n^{\triple{ijk}}$  yields $b_{ij}^{(p)} + b_{jk}^{(p)} + b_{ki}^{(p)} = 0$. By the construction of this type of cross-ratio, one edge of $\triangle_{n}^{\triple{ijk}}$ is green; hence, one of the indices $(b_{ij}^{(p)},b_{jk}^{(p)},b_{ki}^{(p)})$ equals $1$. At the same time, the diagonal construction of the denominator produces exactly one orange edge of $\triangle_{n}^{\triple{ijk}}$, so that the second index equals $-1$, while the remaining edge is not involved in the cross-ratio graph and, therefore, contributes $0$. Thus, the only admissible assignments are those listed in \eqref{group_2}.

Lemma \bref{cor1} for $\rY_p^{\triple{ijk}} \in {}^{_2}\hspace{-0.5mm}\bm \rU_n^{\triple{ijk}}$   yields  $b_{ij}^{(p)} + b_{jk}^{(p)} + b_{ki}^{(p)} = 1$. In this case, by construction, exactly one edge of $\triangle_{n}^{\triple{ijk}}$ is green, whereas the other two edges are not highlighted by either color. Consequently, exactly one of the three second transianic indices is non-vanishing and equals $1$, while the remaining two are zero. Hence, the only admissible assignments are those listed in \eqref{group_3}.

\end{proof}

\begin{figure}
    \centering
    \includegraphics[width=0.5\linewidth]{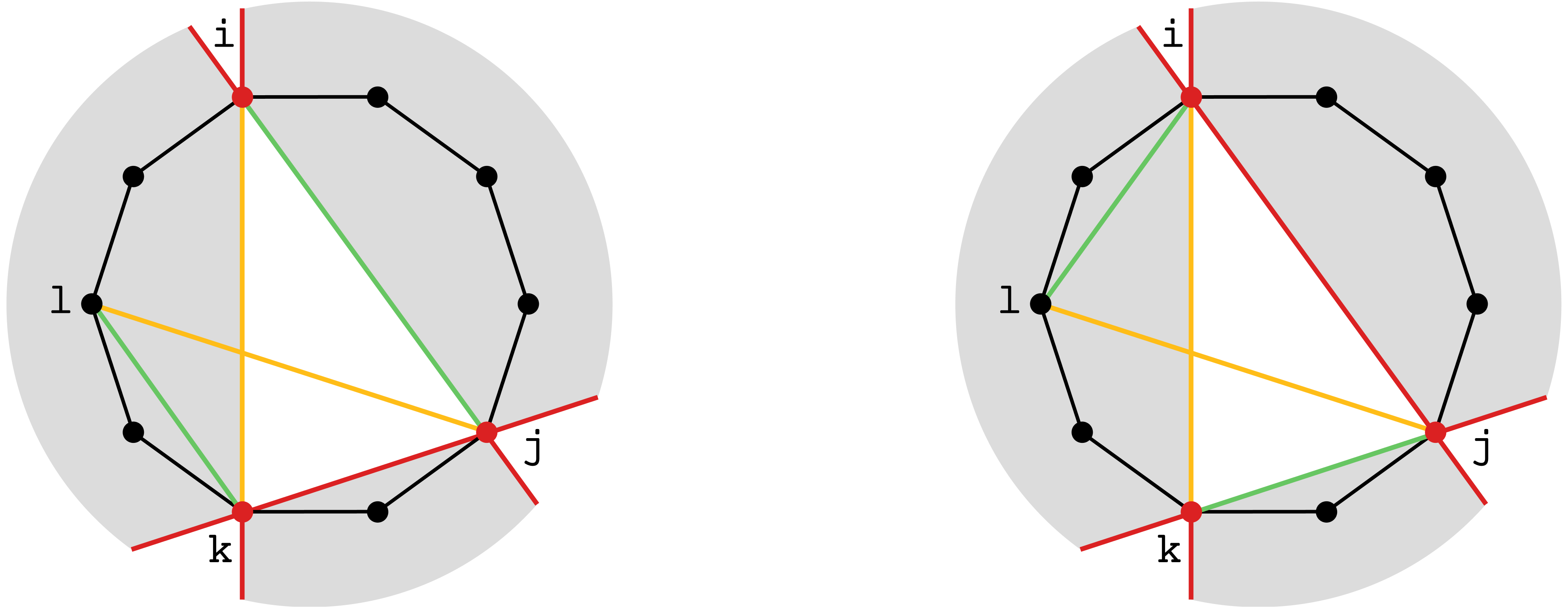}
    \caption{Configurations of ${}^{_1}\hspace{-0.5mm}\bm \rU_n^{\triple{ijk}}$ for a fixed vertex $l$.}
    \label{fan_lemma_pic_1}
\end{figure}

The identities contained  in the preceding lemmas  are column-wise relations in the transianic matrix: for a fixed cross-ratio $\rY_p^{\triple{ijk}}$, one sums the corresponding transianic index values over the vertices of the conformal polygon or over the edges of the basis triangle. The next two lemmas have a different nature. They describe row-wise relations: one fixes either a vertex $l\notin \{i,j,k\}$ or an edge of the basis triangle, and then sums the associated transianic index values over all cross-ratios $\rY_p^{\triple{ijk}}$ for $p=1,...,\xr$. Thus, while Lemmas \bref{cor1} and \bref{lemma:local_inversion_balance}, together with Lemma \bref{lemma:second}, describe the internal structure of each individual cross-ratio diagram, the following Lemmas \bref{fan_lemma_1} and \bref{fan_lemma_2} characterize how a given vertex or a basis triangle edge is distributed throughout the whole set of cross-ratio diagrams.

\begin{lemma}
\label{fan_lemma_1}
For any  $l \notin \{i,j,k\}$, the first transianic indices satisfy
\be
\sum_{p=1}^{\xr} b^{(p)}_l = n-2\,.
\ee
\end{lemma}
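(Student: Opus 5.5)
The plan is to reduce the row sum to a chord count, using the bijection $\bm\rY_n^{\triple{ijk}}\sim\overline{\bm C}_n^{\triple{ijk}}$ of Proposition \bref{rem:cross_ratio_non_basis_chord}. Fix $l\notin\{i,j,k\}$, and let $a\in V_\triangle$ be the unique vertex with $l\in\cham_n^{(a)}$; the open chambers partition $V_{\hexagon}\setminus V_\triangle$.

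The first step is to show that, for every $p$,
\be
b_l^{(p)}=1 \quad\Longleftrightarrow\quad l\in\{r_p,s_p\}\,,
\ee
where $(r_p,s_p)$ is the non-basis chord attached to $\rY_p^{\triple{ijk}}$. I would check this on the three subsets of the algorithm, using the table \eqref{cross_ratio_non_basis_chord}. For $U[i,j,r,k]$ and $U[i,j,k,r]$ in ${}^{_1}\hspace{-0.5mm}\bm\rU_n^{\triple{ijk}}$, the only diagram vertex outside $V_\triangle$ is $r$, and it is an endpoint of the chord $(k,r)$. For $U[i,j,r,s]\in{}^{_2}\hspace{-0.5mm}\bm\rU_n^{\triple{ijk}}$ and $W[i,j,k,r,s]\in\bm\rW_n^{\triple{ijk}}$, the vertices outside $V_\triangle$ are exactly $r$ and $s$, the endpoints of the chord $(r,s)$. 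In each case the set $V_\times^{(p)}\setminus V_\triangle$ coincides with $\{r_p,s_p\}\setminus V_\triangle$, which gives the equivalence. This is consistent with the first sums in Lemma \bref{cor1}, which equal $1$ and $2$ respectively.

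The second step uses the bijection to rewrite the sum as
\be
\sum_{p=1}^{\xr}b_l^{(p)}=\#\big\{(r,s)\in\overline{\bm C}_n^{\triple{ijk}}\;:\;l\in\{r,s\}\big\}\,.
\ee
There are $n-1$ chords in $\bm C_n$ incident to $l$. Among the basis chords \eqref{set_of_basic_chords}, the triangle edges do not contain $l$. The chords $(b,m)$ with $m\in\cham_n^{(b)}$ contain $l$ only when $m=l$, which forces $b=a$. So exactly one basis chord, $(a,l)$, is incident to $l$, and the count of non-basis chords at $l$ is $n-2$.

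The step that needs the most care is the first one: confirming that no diagram contains a vertex outside the triangle that is not an endpoint of its non-basis chord. This is a finite inspection of the three constructions. As a cross-check I would also derive the equivalence from Lemma \bref{lemma:local_inversion_balance} together with Proposition \bref{prop:diagrammatic_cross_ratio}: there, $b_l^{(p)}=1$ exactly when the orange chord $(a,l)$ occurs, and the green and orange degrees at $l$ balance, so $l$ must be an endpoint of the unique green non-basis chord.
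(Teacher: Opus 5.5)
Your proof is correct, but it takes a different route from the paper's. The paper fixes $l\in\cham_n^{(i)}$ and counts, subset by subset, the cross-ratios whose diagram contains $l$: it finds $2$ in ${}^{_1}\hspace{-0.5mm}\bm\rU_n^{\triple{ijk}}$, $L_n^{(j)}+L_n^{(k)}$ in ${}^{_2}\hspace{-0.5mm}\bm\rU_n^{\triple{ijk}}$ and $L_n^{(i)}-1$ in $\bm\rW_n^{\triple{ijk}}$, and then finishes with the balance relation \eqref{balance_relation}. You instead prove $b_l^{(p)}=1\Leftrightarrow l\in\{r_p,s_p\}$ and push the sum through the bijection of Proposition \ref{rem:cross_ratio_non_basis_chord}. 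The sum then becomes the number of non-basis chords at $l$, which you compute by complement as $(n-1)-1$.

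The two counts are the same bookkeeping seen from opposite sides. The paper's three contributions are exactly the non-basis chords at $l$, sorted by where their other endpoint lies: a triangle vertex other than $a$, a different chamber, or the same chamber.

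What your route gains:
\begin{itemize}
\item It explains why the answer does not depend on which chamber contains $l$.
\item It never uses the chamber sizes $L_n^{(a)}$ or the balance relation, only that each $l$ lies in exactly one chamber.
\item It fits the chord-decomposition and Gale-dual picture of section \ref{sec:GKZ} (cf.\ \eqref{adjacency_first_Gale}).
\end{itemize}

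What it costs: it rests on Proposition \ref{rem:cross_ratio_non_basis_chord}, which the paper states without proof. Verifying that proposition, namely that the three families hit every non-basis chord exactly once, amounts to essentially the chamber-by-chamber count the paper carries out directly. The paper's count is therefore self-contained relative to the description of the algorithm, and it also shows how the $n-2$ splits among the three subsets.

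One small imprecision in your cross-check: Lemma \ref{lemma:local_inversion_balance} is a statement about the triangle vertices. What you actually use at $l$ is the green/orange degree balance from its proof, not the lemma itself. Since this is only a cross-check, it does not affect the main argument.
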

\begin{proof}
Let $l\in \mathbb{C}_n^{(i)}$, while the two remaining chambers are $\mathbb{C}_n^{(j)}$ and $\mathbb{C}_n^{(k)}$. We now count, separately for each subset of cross-ratios, the number of elements for which the index $b_l^{(p)}$ is non-zero.

${}^{_1}\hspace{-0.5mm}\bm \rU_n^{\triple{ijk}}$: Each cross-ratio contains exactly one vertex that does not belong to $\triangle_{n}^{\triple{ijk}}$. Therefore, for a given vertex $l$, there are precisely two admissible configurations, shown in fig.~\bref{fan_lemma_pic_1}.

${}^{_2}\hspace{-0.5mm}\bm \rU_n^{\triple{ijk}}$: Each cross-ratio contains two vertices that do not belong to $\triangle_{n}^{\triple{ijk}}$. By construction, these two vertices lie in two different open chambers. Since $l\in\mathbb{C}_n^{(i)}$, the second vertex can be chosen arbitrarily either from $\mathbb{C}_n^{(j)}$ or from $\mathbb{C}_n^{(k)}$. Therefore, for a given  $l$, the number of such cross-ratios is equal to $L_n^{(j)} + L_n^{(k)}$.

$\bm \rW_n^{\triple{ijk}}$: Each cross-ratio contains exactly two vertices outside $\triangle_{n}^{\triple{ijk}}$, and these two vertices lie in the same open chamber. Since $l\in \mathbb{C}_n^{(i)}$, the second vertex can be chosen arbitrarily among the remaining vertices of the same chamber $\mathbb{C}_n^{(i)}$. Therefore, for a given $l$, the number of such cross-ratios is equal to  $L_n^{(i)} - 1$.  Thus, the sum is given by
\be
\sum_{p=1}^{\xr} b_l^{(p)}= 2 + L_n^{(j)} + L_n^{(k)} + L_n^{(i)} - 1 = n-2\,,
\ee
where in the last step we used the balance relation \eqref{balance_relation}.
\end{proof}

\begin{lemma}
\label{fan_lemma_2}
The second transianic indices satisfy
\be
\label{equations_for_fan_lemma_2}
\sum_{p=1}^{\xr}b_{ij}^{(p)} = \frac{(n-3)(n-2)}{2} - (n-1)L_n^{(k)}\,,
\ee
where $L_n^{(k)}$ is given in  (\ref{L_in_vertices}). Two more relations are obtained by the cyclic permutations of $i,j,k$.
\end{lemma}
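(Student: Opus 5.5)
The plan is to compute $\sum_p b_{ij}^{(p)}$ directly, row-wise, as (number of cross-ratio diagrams in which $(i,j)$ is green) minus (number in which $(i,j)$ is orange). I will write $L_a \equiv L_n^{(a)}$. Note that the edge $(i,j)$ is opposite the vertex $k$, so it is adjacent to the chamber $\cham_n^{(k)}$. I go through the three subsets ${}^{_1}\hspace{-0.5mm}\bm \rU_n^{\triple{ijk}}$, ${}^{_2}\hspace{-0.5mm}\bm \rU_n^{\triple{ijk}}$, $\bm \rW_n^{\triple{ijk}}$ separately, using the algorithm and Lemma \bref{lemma:second}. Throughout I use Proposition \bref{rem:cross_ratio_non_basis_chord}: besides the unique non-basis chord, every chord in a diagram is either a triangle edge or an orange basis chord $(a,l)$ with $l\in\cham_n^{(a)}$.

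\emph{Step 1: ${}^{_1}\hspace{-0.5mm}\bm \rU_n^{\triple{ijk}}$.} The edge $(i,j)$ is green exactly when it is the chosen triangle edge. In that case $r$ ranges over $\cham_n^{(i)}\cup\cham_n^{(j)}$, which gives $L_i+L_j$ terms, each contributing $+1$.

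When the green triangle edge is $(j,k)$, with $r$ joined to $i$, there are two possible orange diagonal pairs: $\{(j,i),(k,r)\}$ or $\{(j,r),(k,i)\}$. Only one pair consists of basis chords. For $r\in\cham_n^{(k)}$ the pair must contain $(k,r)$, so $(i,j)$ is orange. For $r\in\cham_n^{(j)}$ the orange triangle edge is $(k,i)$ instead.

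The same argument with green edge $(k,i)$ and $r$ joined to $j$ makes $(i,j)$ orange exactly when $r\in\cham_n^{(k)}$. The contribution of ${}^{_1}\hspace{-0.5mm}\bm \rU$ is therefore $L_i+L_j-2L_k$.

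\emph{Step 2: ${}^{_2}\hspace{-0.5mm}\bm \rU_n^{\triple{ijk}}$.} By Lemma \bref{lemma:second}, only the green triangle edge carries a nonzero index. Hence $(i,j)$ contributes $+1$ for each pair $r\in\cham_n^{(i)}$, $s\in\cham_n^{(j)}$, giving $L_iL_j$.

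\emph{Step 3: $\bm \rW_n^{\triple{ijk}}$.} A pentagon whose apex $a\in V_\triangle$ has both non-triangle vertices in $\cham_n^{(a)}$ has the two triangle edges at $a$ green and the opposite edge orange. So $(i,j)$ is green for apex $i$ or $j$, giving $\binom{L_i}{2}+\binom{L_j}{2}$. It is orange for apex $k$, giving $-\binom{L_k}{2}$.

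\emph{Step 4: combine.} Set $S=L_i+L_j=n-3-L_k$, using the balance relation \eqref{balance_relation}. Then $\binom{L_i}{2}+\binom{L_j}{2}+L_iL_j=\binom{S}{2}$, and the total becomes
\be
\frac{S(S+1)}{2}-\frac{L_k(L_k+3)}{2}.
\ee
Substituting $S=n-3-L_k$, the $L_k^2$ terms cancel and one obtains $\frac{(n-3)(n-2)}{2}-(n-1)L_k$. The other two relations follow by cyclic permutation of $i,j,k$.

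The main obstacle is Step 1: deciding, for the two non-adjacent green-edge configurations, which triangle edge becomes orange. The plan is to settle this by the requirement that orange chords be basis chords, which is guaranteed by Proposition \bref{rem:cross_ratio_non_basis_chord}.
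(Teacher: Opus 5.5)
Your proposal is correct and follows essentially the same route as the paper: a row-wise count over ${}^{_1}\hspace{-0.5mm}\bm \rU_n^{\triple{ijk}}$, ${}^{_2}\hspace{-0.5mm}\bm \rU_n^{\triple{ijk}}$ and $\bm \rW_n^{\triple{ijk}}$. The three contributions are $L_n^{(i)}+L_n^{(j)}-2L_n^{(k)}$, $L_n^{(i)}L_n^{(j)}$ and $\binom{L_n^{(i)}}{2}+\binom{L_n^{(j)}}{2}-\binom{L_n^{(k)}}{2}$, combined via the balance relation. One small difference: in the ${}^{_1}\hspace{-0.5mm}\bm \rU_n^{\triple{ijk}}$ case you decide which triangle edge is orange by requiring orange chords to be basis chords (Proposition \bref{rem:cross_ratio_non_basis_chord}), whereas the paper reads this off its figures; that is a valid and slightly more explicit justification.
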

\begin{proof}
Let us count, separately for each subset of cross-ratios in \eqref{Y_split}, the number of elements for which the index $b_{ij}^{(p)}$ is non-zero.

${}^{_1}\hspace{-0.5mm}\bm \rU_n^{\triple{ijk}}$: First, assume that the edge $(i,j)$ is green. Then, an additional vertex $l$ can be chosen from either of the two chambers adjacent to this edge, namely, $\mathbb{C}_n^{(i)}$ or $\mathbb{C}_n^{(j)}$. Hence, there are $L_n^{(i)}+L_n^{(j)}$ possible choices for $l$, as illustrated in fig.~\bref{fan_lemma_pic_2} {\bf (a)},~{\bf (b)}. These configurations have $b_{ij}^{(p)}=1$. Now, assume that the edge $(i,j)$ is orange. Then, the vertex outside $\triangle_{n}^{\triple{ijk}}$ must lie opposite to the edge $(i,j)$, i.e., in the chamber $\mathbb{C}_n^{(k)}$. For each such vertex, there are exactly two admissible cross-ratio diagrams, shown in fig.~\bref{fan_lemma_pic_2} {\bf (c)}, {\bf (d)}. Hence, the total number of such diagrams equals  $2L_n^{(k)}$. These diagrams have $b_{ij}^{(p)}=-1$.

${}^{_2}\hspace{-0.5mm}\bm \rU_n^{\triple{ijk}}$: The edge $(i,j)$ cannot be orange. If the edge $(i,j)$ is green, then the two external vertices must be chosen from the adjacent chambers $\mathbb{C}_n^{(i)}$ and $\mathbb{C}_n^{(j)}$. Therefore, the total number of such diagrams equals  $L_n^{(i)} L_n^{(j)}$, and each contributes $+1$ to the sum, see fig.~\bref{fan_lemma_pic_2} {\bf (e)}.

$\bm \rW_n^{\triple{ijk}}$: For the diagrams in this subset, two edges of $\triangle_{n}^{\triple{ijk}}$ are always green, while the remaining one is orange. The edge $(i,j)$ is  green if the two external vertices of the corresponding cross-ratio are chosen either from the chamber $\mathbb{C}_n^{(i)}$ or from the chamber $\mathbb{C}_n^{(j)}$. This yields the positive contributions $\binom{L_n^{(i)}}{2}$  and $\binom{L_n^{(j)}}{2}$. On the other hand, the edge $(i,j)$ is orange if the two external vertices are chosen from the opposite chamber $\mathbb{C}_n^{(k)}$. This yields  the negative contribution $\binom{L_n^{(k)}}{2}$.
All in all, the required sum is given by
\be
\ba{c}
\dps \sum_{p=1}^{\xr} b_{ij}^{(p)} = L_n^{(i)}+L_n^{(j)} - 2L_n^{(k)} {+L_n^{(i)}L_n^{(j)}} + \binom{L_n^{(i)}}{2} + \binom{L_n^{(j)}}{2} - \binom{L_n^{(k)}}{2}
\\
\dps= \frac{(n-3)(n-2)}{2} - (n-1)L_n^{(k)}\,.
\ea
\ee
\end{proof}
\begin{figure}
    \centering
    \includegraphics[width=1\linewidth]{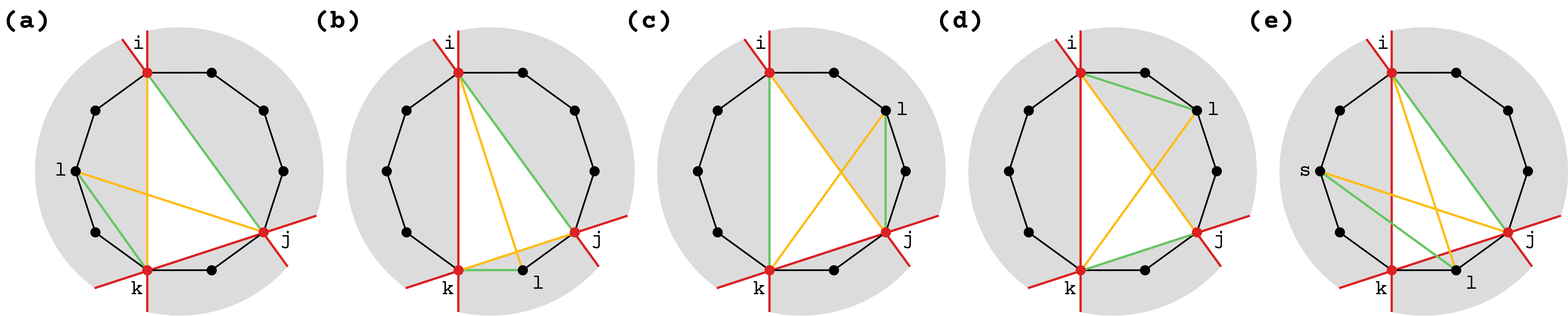}
    \caption{{\bf (a)},{\bf (b)}: Configurations of ${}^{_1}\hspace{-0.5mm}\bm \rU_n^{\triple{ijk}}$ for the green edge $(i,j)$. {\bf (c)},{\bf (d)}: Configurations of ${}^{_1}\hspace{-0.5mm}\bm \rU_n^{\triple{ijk}}$ for the orange edge  $(i,j)$. {\bf (e)}: Configurations of ${}^{_2}\hspace{-0.5mm}\bm \rU_n^{\triple{ijk}}$ for the green edge  $(i,j)$.}
    \label{fan_lemma_pic_2}
\end{figure}


\section{Examples}
\label{sec:examples}

In this section, we illustrate two complementary ways of constructing the differential systems associated with  low-point polygonal functions. We begin with the Appell $F_4$ function, which arises as the $n=4$ polygonal function, and derive its GKZ system independently of the general results of the main text. Starting from the series representation, we extract the GKZ data, formulate the corresponding GKZ equations, and show that they reduce to the standard differential system for $F_4$. We then apply Corollary \bref{cor:finite_toric_subsystem} and Theorem \bref{proposition_for_Euler_eq} directly to the $n=5,6$ polygonal functions, which are Srivastava--Daoust functions, and write their toric and Euler equations. For $n=5,6$, we further illustrate Remark \bref{rem:diagrammatic_subsystem} by providing toric equations that cannot be obtained from the finite subsystems associated with the respective lattice bases by either algebraic combination or differentiation.

\subsection{GKZ representation for the Appell $F_4$ function}
\label{sec:Appell}
Consider the fourth Appell function:
\be
\label{Appell}
F_4 \bigg[
\begin{array}{ll}
a, b
\vspace{-3mm}
\\
c, d
\end{array}\Big|
x, y \bigg]
= \sum_{m,n=0}^\infty\frac{(a)_{m+n}(b)_{m+n}}{(c)_m(d)_n}\frac{x^m}{m!}\frac{y^n}{n!}\,,
\ee
where $a,b,c,d\in\mathbb{C}$, with $c,d\notin\mathbb{Z}_{\leq 0}$, and the convergence domain is given by $\sqrt{|x|}+\sqrt{|y|}<1$.

The Euler reflection formula $\Gamma(z)\Gamma(1-z)=\pi/\sin(\pi z)$ allows us to recast the Gamma functions in the numerator and denominator into a purely reciprocal Gamma-product form. Up to a constant prefactor, we obtain
\be
\label{F_4_prop}
F_4 \propto \sum_{m,n = 0}^{\infty} \frac{1}{\Gamma(-m-n-a+1)\Gamma(-m-n-b+1)\Gamma(m+c)\Gamma(n+d)}\frac{x^m}{\Gamma(m+1)}\frac{y^n}{\Gamma(n+1)}\,.
\ee
Comparing the coefficients in this representation with those in \eqref{eq:gamma-series-gale-dual}, one deduces the GKZ data in a few steps. First, the dimensions of the relevant GKZ matrices are $N=6$, $n=4$, $\xr=2$. In particular, the Gale dual \eqref{Gale_dual_first} has the block form
\be
\cG=
\begin{pmatrix}
-\cB \\[2pt]
\phantom{-}\mathbb{1}_2
\end{pmatrix}\,,
\qquad
\text{where}
\qquad
\cB =
\begin{pmatrix}
\phantom{-}1 & \phantom{-}1 \\
\phantom{-}1 & \phantom{-}1 \\
-1 & \phantom{-}0 \\
\phantom{-}0 & -1
\end{pmatrix}\,.
\ee
The columns
\be
\label{l1l2}
{\bm \cG}^{(1)} = (-1,-1,1,0,1,0)^\rT\,,
\qquad
{\bm \cG}^{(2)} = (-1,-1,0,1,0,1)^\rT\, ,
\ee
form a basis of the lattice of relations $\mathbb{L}={\rm im}_{\mathbb{Z}}\,\cG=\left\{\cG{\bm m}\mid {\bm m} = (m,n) \in \mathbb{Z}^2\right \}$. The balance constraint \eqref{linear} is obviously satisfied.

The toric matrix $\cA$ is found as a solution to $\cA\cG=0$. Clearly, any matrix of the form $\cA = U \left(\mathbb{1}_4 \mid \cB \right)$, where $U \in{\rm GL}(4,\mathbb{Z})$, satisfies this condition:
\be
\cA\cG = U\left(\mathbb{1}_4 \mid \cB \right)\begin{pmatrix}
-\cB \\[2pt]
\mathbb{1}_2
\end{pmatrix} =  U(-\cB +\cB) = 0\,.
\ee
We adopt the convenient normalization in which $U=\mathbb{1}_4$. Then, the $4\times6$ toric matrix is given by
\be
\cA =
\begin{pmatrix*}[r]
1 & 0 & 0 & 0 & 1 & 1 \\
0 & 1 & 0 & 0 & 1 & 1 \\
0 & 0 & 1 & 0 & -1 & 0 \\
0 & 0 & 0 & 1 & 0 & -1
\end{pmatrix*}.
\ee

The denominator in \eqref{F_4_prop} also determines the parameter vector:
\be
\bm \gamma=(-a,-b,c-1,d-1,0,0)\in \mathbb{C}^6\,.
\ee
Hence, using the relation  \eqref{alpha_A_gamma}, we find that the parameter vector $\bm \alpha$ is given by
\be
\bm \alpha = (-a,-b,c-1,d-1) \in \mathbb{C}^4\,.
\ee

Note that the two vectors have the same non-vanishing components. This agreement  is a direct consequence of the above choice of $\cA$. Since the left block of $\cA$ is the identity matrix and the last two components of $\bm\gamma$ vanish, the relation $\bm\alpha = \cA\bm\gamma$ simply reproduces the first four components of $\bm\gamma$. Hence, the equality of the non-vanishing entries of $\bm\alpha$ and $\bm\gamma$ is a result of this particular normalization of $\cA$, rather than an invariant property of the GKZ data. While $\bm\alpha$ is uniquely determined by $\bm\gamma$ through $\bm\alpha = \cA\bm\gamma$, the converse is not generally true. For fixed $\bm\alpha$ and $\cA$, equation \eqref{alpha_A_gamma} admits  multiple solutions for $\bm\gamma$, since $\cA$ is typically rectangular and the system is underdetermined. Any two such solutions differ by an element of $\ker\cA$.

At this stage, all the GKZ data have been extracted from the Appell $F_4$ function \eqref{Appell}. Now, we use them to formulate the GKZ system following Definition \bref{prop:GKZ} and Proposition \bref{prop:solution}. Denoting the derivatives $\partial_i=\partial/ \partial v_i$, $i=1,...,6$, we express the  toric   \eqref{toric_eq} and Euler \eqref{Euler_eq} equations as
\be
\label{toric_F_4}
\ba{c}
\hspace{-8mm}{\bm \cG}^{(1)}:\quad \partial_3\partial_5\Phi - \partial_1\partial_2\Phi = 0\,,
\\
\hspace{-8mm}{\bm \cG}^{(2)}:\quad  \partial_4\partial_6\Phi - \partial_1\partial_2\Phi = 0\,,
\ea
\ee
where ${\bm \cG}^{(1)}$ and ${\bm \cG}^{(2)}$ are given by \eqref{l1l2}. The Euler equations are
\be
\label{Euler_F_4}
\begin{array}{rl}
& (\theta_1 + \theta_5 +\theta_6 + a)\Phi = 0\, , \\
& (\theta_2 + \theta_5 + \theta_6 + b)\Phi = 0\, , \\
& (\theta_3 -\theta_5 -c +1)\Phi = 0\, , \\
& (\theta_4 - \theta_6 -d + 1)\Phi = 0\, ,
\end{array}
\ee
where $\Phi = \Phi(v_1,...,v_6)$ is the corresponding $\Gamma$-series:
\be
\label{step1}
\Phi_{\cG,\bm \gamma} = \sum_{m,n = 0}^\infty \frac{v_1^{-m-n-a}}{\Gamma(-m-n-a+1)} \frac{v_2^{-m-n-b}}{\Gamma(-m-n-b+1)} \frac{v_3^{m+c-1}}{\Gamma(m+c)} \frac{v_4^{n+d-1}}{
\Gamma(n+d)}\frac{v_5^{m}}{\Gamma(m+1)}\frac{v_6^{n}}{\Gamma(n+1)}\,.
\ee
Using the Euler reflection formula in reverse, we represent this series, up to a constant prefactor, as follows:
\be
\label{step2}
\Phi_{\cG,\bm \gamma} \propto  v_1^{-a}v_2^{-b}v_3^{c-1}v_4^{d-1}\sum_{m,n = 0}^\infty  \frac{\Gamma(m+n+a)\Gamma(m+n+b)}{
\Gamma(m+c)\Gamma(n+d)\Gamma(m+1)\Gamma(n+1)}\left[\frac{v_3v_5}{v_1v_2}\right]^m\left[\frac{v_4v_6}{v_1v_2}\right]^n\,.
\ee
Thus, recalling the original Appell $F_4$ series \eqref{Appell}, we conclude that its GKZ representation has the form
\be
\label{Phi_is_F}
\Phi_{\cG,\bm \gamma} \propto v_1^{-a}v_2^{-b}v_3^{c-1}v_4^{d-1}F_4
\bigg[
\begin{array}{ll}
a, b
\vspace{-3mm}
\\
c, d
\end{array}\Big|
\frac{v_3v_5}{v_1v_2},\frac{v_4v_6}{v_1v_2} \bigg]\,,
\ee
where the two original variables $\{x,y\}$ are now given by homogeneous combinations of the six GKZ  variables $\{v_1, ..., v_6\}$:
\be
x= \frac{v_3v_5}{v_1v_2}\,,
\qquad
y = \frac{v_4v_6}{v_1v_2}\,.
\ee

Finally, substituting \eqref{Phi_is_F} into the toric equations \eqref{toric_F_4}, and noting that the Euler equations \eqref{Euler_F_4} have already been solved, we obtain two partial differential equations in $x$ and $y$:
\be
\label{system_for_F_4}
\ba{l}
\left(x(1-x)\partial_x^2 - 2xy \partial_x\partial_y -y^2 \partial_y^2 +(c-x(a+b+1))\partial_x -y(a+b+1)\partial_y -ab\right) F_4 = 0\,,
\vspace{2mm}
\\
\left(y(1-y)\partial_y^2 - 2xy \partial_x\partial_y -x^2 \partial_x^2 +(d-y(a+b+1))\partial_y - x(a+b+1)\partial_x-ab\right) F_4 = 0\,.
\ea
\ee
This is the standard PDE system for the Appell $F_4$ function  \cite{Bateman:100233}.

\subsection{Pentagonal function}
Consider the $n=5$ basis function \cite{Alkalaev:2025fgn,Alkalaev:2025zhg}:
\be
\label{master_5}
\BF_5^{\triple{123}}({\bm a}|\bm x)
\; \propto \;
\rV_5^{\triple{123}}(\bm a|\bm x)\,
\HG_5^{\triple{123}}\left({\bm a}\big|\rY^{\triple{123} }_1\,, ...\,, \rY^{\triple{123}}_5 \right)\,.
\ee
The diagrammatic algorithm constructs the leg-factor \eqref{V_n}:
\be
\label{V_5}
\rV_5^{\triple{123}}(\bm a|\bm x)=
X_{12}^{|\bm a_{1,2}|'} X_{23}^{|\bm a_{2,3}|'}  X_{31}^{|\bm a_{3,1}|'}\, X_{24}^{-a_4} X_{25}^{-a_5}\,,
\ee
and the set of cross-ratios \eqref{DA_cr}:
\begin{equation}
\label{cross_5_123}
\begin{array}{l}
\dps \rY^{\triple{123}}_1 = \frac{X_{12} X_{34} }{X_{13} X_{24}}\,,
\qquad \rY^{\triple{123}}_2 = \frac{X_{23} X_{14} }{X_{13} X_{24}}\,,
\qquad \rY^{\triple{123}}_3 = \frac{X_{12} X_{35} }{X_{13} X_{25}}\,,
\\[15pt]
\dps \rY^{\triple{123}}_4 = \frac{X_{23} X_{15} }{X_{13} X_{25}}\,, \qquad \rY^{\triple{123}}_5 = \frac{X_{12} X_{23} X_{45} }{X_{13} X_{24} X_{25}}\,.
\end{array}
\end{equation}
Under the variable identification established in Proposition \bref{prop:mapping_new}, equation \eqref{Gamma_series_with_F_V_H} yields
\be
\label{fin_fun}
\ba{l}
\dps
\Phi_{\cG,\bm\gamma}(\bm v)
=
-\frac{1}{\pi^5}
\bigg[\prod_{l=1}^{5}\Gamma(a_l)\bigg]
\bigg[\prod_{l=4}^{5}\sin(\pi a_l)\bigg]
\vspace{2mm}
\\
\dps
\hspace{12mm}\times\;
\sin\!\big(\pi|{\bm a}_{1,2}|'\big)
\sin\!\big(\pi|{\bm a}_{2,3}|'\big)
\sin\!\big(\pi|{\bm a}_{3,1}|'\big)\,
\BF_5^{\triple{123}}({\bm a}|\bm x)\,.
\ea
\ee
The factor relating the two functions is $\bm x$-independent; therefore, the $\Gamma$-series and the basis function satisfy the same GKZ system.

Denote $\partial_{ab}=\partial/\partial X_{ab}$ and $\theta_{ab}=X_{ab}\partial_{ab}$. Given the set \eqref{cross_5_123}, Theorem \bref{proposition_for_Euler_eq} yields the system of $5$ Euler equations:
\be
\label{51systems_euler}
\begin{array}{rl}
&(\theta_{12}-\theta_{34}-\theta_{35}-\theta_{45}-|\bm a_{1,2}|')\BF_5^{\triple{123}}=0\,,
\\
&(\theta_{23}-\theta_{14}-\theta_{15}-\theta_{45}-|\bm a_{2,3}|')\BF_5^{\triple{123}}=0\,,
\\
&(\theta_{31}+\theta_{14}+\theta_{15}+\theta_{34}+\theta_{35}+\theta_{45}-|\bm a_{3,1}|')\BF_5^{\triple{123}}=0\,,
\\
&(\theta_{24}+\theta_{14}+\theta_{34}+\theta_{45}+a_4)\BF_5^{\triple{123}}=0\,,
\\
&(\theta_{25}+\theta_{15}+\theta_{35}+\theta_{45}+a_5)\BF_5^{\triple{123}}=0\,,
\end{array}
\ee
while Corollary \bref{cor:finite_toric_subsystem} yields a finite subsystem of $5$ toric equations:
\be
\label{51systems}
\begin{array}{rl}
&(\partial_{23}\partial_{15}
-\partial_{31}\partial_{25})\BF_5^{\triple{123}}=0\,, \qquad (\partial_{12}\partial_{34}
-\partial_{31}\partial_{24})\BF_5^{\triple{123}}=0\,,
\\
& (\partial_{23}\partial_{14}
-\partial_{31}\partial_{24})\BF_5^{\triple{123}}=0\,, \qquad (\partial_{12}\partial_{35}
-\partial_{31}\partial_{25})\BF_5^{\triple{123}}=0\,,
\\
& (\partial_{12}\partial_{23}\partial_{45}
-\partial_{31}\partial_{24}\partial_{25})\BF_5^{\triple{123}}=0\,.
\end{array}
\ee
Here, the four second-order toric equations arise from the quadratic cross-ratios $\bm \rU_5^{\triple{123}} = \{\rY_1^{\triple{123}},...,\rY_4^{\triple{123}}\}$, whereas the third-order equation arises from the cubic cross-ratio $\mathbf{W}_5^{\triple{123}} = \{\rY_5^{\triple{123}}\}$, cf. Remark \bref{remark_tells_the_order}.

For $\bm m=(-1,1,1,-1,0)\in\mathbb{Z}^{5}$, Theorem \bref{prop:general_toric_equation} yields the toric equation
\be
\label{pentagon_additional_toric_equation}
\big(\partial_{14}\partial_{35}
-\partial_{15}\partial_{34}\big)
\BF_5^{\triple{123}}=0\,,
\ee
which is neither a differential nor an algebraic consequence of the finite subsystem \eqref{51systems}. Thus, it illustrates the fact that the toric equations associated with the chosen basis of the lattice do not generate all toric equations, as stated in Remark \bref{rem:diagrammatic_subsystem}.
\subsection{Hexagonal function}
\label{sec:SD_funct}
Consider the $n=6$ basis function \cite{Alkalaev:2025fgn,Alkalaev:2025zhg}:
\be
\label{master_6}
\BF_6^{\triple{123}}({\bm a}|\bm x)
\; \propto \;
\rV_6^{\triple{123}}(\bm a|\bm x)\,
\HG_6^{\triple{123}}\left({\bm a}\big|\rY^{\triple{123} }_1\,, ...\,, \rY^{\triple{123}}_9 \right)\,.
\ee
The diagrammatic algorithm constructs the leg-factor \eqref{V_n}:
\be
\label{V_6_123}
\rV_6^{\triple{123}}(\bm a|\bm x)
=
X_{12}^{|\bm a_{1,2}|'} X_{23}^{|\bm a_{2,3}|'}  X_{31}^{|\bm a_{3,1}|'}\, X_{24}^{-a_4} X_{25}^{-a_5} X_{26}^{-a_6}\,,
\ee
and the set of cross-ratios \eqref{DA_cr}:
\be
\label{cross_6_123}
\begin{split}
\rY^{\triple{123}}_1 &= \frac{X_{12} X_{34} }{X_{13} X_{24}}\,, \quad
\rY^{\triple{123}}_2 =  \frac{X_{23} X_{14} }{X_{13} X_{24}}\,, \qquad
\, \, \,
\rY^{\triple{123}}_3 =  \frac{X_{12} X_{35} }{X_{13} X_{25}}\,, \\
\rY^{\triple{123}}_4 &= \frac{X_{23} X_{15} }{X_{13} X_{25}}\,,  \quad
\rY^{\triple{123}}_5 =  \frac{X_{12} X_{23} X_{45} }{X_{13} X_{24} X_{25}}\,, \quad
\rY^{\triple{123}}_6 =  \frac{X_{12} X_{36} }{X_{13} X_{26} }\,, \\
\rY^{\triple{123}}_7 &= \frac{X_{23} X_{16} }{X_{13} X_{26}}\,,  \quad
\rY^{\triple{123}}_8 =  \frac{X_{12} X_{23} X_{46} }{X_{13} X_{24} X_{26}} \,, \quad
\rY^{\triple{123}}_9 =  \frac{X_{12} X_{23} X_{56} }{X_{13} X_{25} X_{26}} \,.
\end{split}
\ee
Under the variable identification established in Proposition \bref{prop:mapping_new}, equation \eqref{Gamma_series_with_F_V_H} yields
\be
\label{fin_fun_11111}
\ba{l}
\dps
\Phi_{\cG,\bm\gamma}(\bm v)
=
-\frac{1}{\pi^6}
\bigg[\prod_{l=1}^{6}\Gamma(a_l)\bigg]
\bigg[\prod_{l=4}^{6}\sin(\pi a_l)\bigg]
\vspace{2mm}
\\
\dps
\hspace{12mm}\times\;
\sin\!\big(\pi|{\bm a}_{1,2}|'\big)
\sin\!\big(\pi|{\bm a}_{2,3}|'\big)
\sin\!\big(\pi|{\bm a}_{3,1}|'\big)\,
\BF_6^{\triple{123}}({\bm a}|\bm x)\,.
\ea
\ee
The factor relating the two functions is $\bm x$-independent; therefore, the $\Gamma$-series and the basis function satisfy the same GKZ system.

Given the set \eqref{cross_6_123}, Theorem \bref{proposition_for_Euler_eq} yields the system of $6$ Euler equations:
\be
\label{61systems_euler}
\begin{array}{rl}
&  (\theta_{12}-\theta_{34}-\theta_{35}-\theta_{36}-\theta_{45}-\theta_{46}-\theta_{56}-|\bm a_{1,2}|')\BF_6^{\triple{123}}=0 \,,\\
&  (\theta_{23}-\theta_{14}-\theta_{15}-\theta_{16}-\theta_{45}-\theta_{46}-\theta_{56}-|\bm a_{2,3}|')\BF_6^{\triple{123}}=0\,, \\
&  (\theta_{31}+\theta_{14}+\theta_{15}+\theta_{16}+\theta_{34}+\theta_{35}+\theta_{36}+\theta_{45}+\theta_{46}+\theta_{56}-|\bm a_{3,1}|')\BF_6^{\triple{123}}=0\,, \\
& (\theta_{24}+\theta_{14}+\theta_{34}+\theta_{45}+\theta_{46}+a_4)\BF_6^{\triple{123}}=0 \,, \\
& (\theta_{25}+\theta_{15}+\theta_{35}+\theta_{45}+\theta_{56}+a_5)\BF_6^{\triple{123}}=0 \,, \\
& (\theta_{26}+\theta_{16}+\theta_{36}+\theta_{46}+\theta_{56}+a_6)\BF_6^{\triple{123}}=0 \,.
\end{array}
\ee
while Corollary \bref{cor:finite_toric_subsystem} yields a finite subsystem of $9$ toric equations:
\be
\label{61systems}
\begin{array}{rl}
& (\partial_{12}\partial_{23}\partial_{45} - \partial_{31}\partial_{24}\partial_{25})\BF_6^{\triple{123}}=0\,, \quad (\partial_{12}\partial_{23}\partial_{46} - \partial_{31}\partial_{24}\partial_{26})\BF_6^{\triple{123}}=0\,,\\
& (\partial_{12}\partial_{23}\partial_{56} - \partial_{31}\partial_{25}\partial_{26})\BF_6^{\triple{123}}=0\,, \quad (\partial_{12}\partial_{34}-\partial_{31}\partial_{24})\BF_6^{\triple{123}}=0\,,\\
& (\partial_{23}\partial_{14}-\partial_{31}\partial_{24})\BF_6^{\triple{123}}=0\,, \hspace{6.4mm}  \quad \quad (\partial_{12}\partial_{35}-\partial_{31}\partial_{25})\BF_6^{\triple{123}}=0\,, \\
& (\partial_{23}\partial_{15}-\partial_{31}\partial_{25})\BF_6^{\triple{123}}=0\,, \hspace{6.4mm}  \quad \quad (\partial_{12}\partial_{36}-\partial_{31}\partial_{26})\BF_6^{\triple{123}}=0\,,\\
&  (\partial_{23}\partial_{16}-\partial_{31}\partial_{26})\BF_6^{\triple{123}}=0\,.
\end{array}
\ee
Here, the six second-order toric equations arise from the quadratic cross-ratios $\bm \rU_6^{\triple{123}} = \{\rY_1^{\triple{123}},...,\rY_4^{\triple{123}},\rY_6^{\triple{123}},\rY_7^{\triple{123}}\}$, whereas the three third-order equations arise from the cubic cross-ratios $\mathbf{W}_6^{\triple{123}} = \{\rY_5^{\triple{123}},\rY_8^{\triple{123}},\rY_9^{\triple{123}}\}$, cf. Remark \bref{remark_tells_the_order}.

For $\bm m=(-1,1,0,0,0,1,-1,0,0)\in\mathbb{Z}^{9}$, Theorem \bref{prop:general_toric_equation} yields the toric equation
\be
\label{hexagon_additional_toric_equation}
\big(\partial_{14}\partial_{36}
-\partial_{16}\partial_{34}\big)
\BF_6^{\triple{123}}=0\,,
\ee
which is neither a differential nor an algebraic consequence of the finite subsystem \eqref{61systems}. Thus, it provides another illustration of the fact that the toric equations associated with the chosen basis of the lattice do not generate all toric equations, as stated in Remark \bref{rem:diagrammatic_subsystem}.

The basis function $\BF_6^{\triple{123}}$ enters the decomposition of the six-point conformal integral \cite{Alkalaev:2025fgn,Alkalaev:2025zhg}, which, in turn, was shown in \cite{Loebbert:2019vcj} to satisfy a system of $15$ second-order PDEs implied by Yangian invariance. It would be important to determine whether this finite system of Yangian equations generates the full GKZ system.

\section{Transianic indices via adjacency matrices}
\label{app:adjacency}

In this appendix, we express the transianic indices from Definition \bref{def2} in terms of the adjacency matrices of the three graphs \eqref{graph_polygon}. We first express  the second indices via  the matrix entries and then derive the first indices from the degrees of vertices.
\paragraph{Adjacency matrices.}
For a graph $G=(V,E)$ with $V\subseteq V_{\hexagon}$, we define  its adjacency matrix on the common vertex set $V_{\hexagon}=\{1,...,n\}$ by
\be
\label{adjacency_matrix_definition}
M_{rs}=
\begin{cases}
1, \quad \text{if} \quad (r,s)\in E\,,\\
0, \quad \text{otherwise}\,,
\end{cases}
\qquad r,s=1,...,n\,.
\ee
Consequently, the rows and columns corresponding to vertices from  $V_{\hexagon}\setminus V$ consist entirely of zeros. Let $P$ and $T$ denote the adjacency matrices associated with $G_{\hexagon}$ and $G_{\triangle}$, respectively. For the two-colored graph $G_{\times}$, we preserve  the signs assigned to its edges by introducing the signed adjacency matrix:
\be
\label{adjacency_cross_ratio_matrix}
Q_{rs}=
\begin{cases}
\phantom{-}1, \quad \text{if} \quad (r,s)\in E_{\textcolor{green}{\bm\times}}\,,\\
-1, \quad \text{if} \quad (r,s)\in E_{\textcolor{orange}{\bm\times}}\,,\\
\phantom{-}0, \quad \text{otherwise}\,,
\end{cases}
\qquad r,s=1,...,n\,.
\ee
All three matrices belong to $\operatorname{Mat}_{n\times n}(\mathbb{Z})$. They are symmetric and have zero diagonals, since the edges connect unordered pairs of distinct vertices. The matrix $Q$ is associated with a fixed quadratic or cubic cross-ratio.
\paragraph{Second transianic indices.}
Comparing \eqref{adjacency_cross_ratio_matrix} with \eqref{transianic}, we immediately obtain
\be
\label{adjacency_second_indices}
b_{rs}=Q_{rs}\,,\qquad (r,s)\in E_{\triangle}\,.
\ee
To extend this relation to all pairs of vertices, we set the second indices to zero for pairs of vertices that do not belong to $E_{\triangle}$:
\be
\label{adjacency_second_extension}
\widetilde b_{rs}:=
\begin{cases}
b_{rs}, \quad \text{if} \quad (r,s)\in E_{\triangle}\,,\\
0, \quad \text{otherwise}\,.
\end{cases}
\ee
Since $T_{rs}=1$ precisely on the edges of the basis triangle, this extension is given by
\be
\label{adjacency_second_full}
\widetilde b_{rs}=T_{rs}Q_{rs}\,,\qquad r,s=1,...,n\,.
\ee
The right-hand side is given by the Hadamard (entrywise) product of the adjacency matrices.
\paragraph{First transianic indices.}
Each vertex in	 $V_{\times}$ is incident to at least one edge in \eqref{cross_ratio_edge_sets}. Therefore, its presence in the cross-ratio diagram can be determined by counting its incident edges. Since every nonzero entry of $Q$ is $+1$ or $-1$, and $Q$ is symmetric, this number is
\be
\label{adjacency_vertex_degree}
d_l:=\sum_{s=1}^{n}Q_{ls}^{\,2}
=\sum_{s=1}^{n}Q_{ls}Q_{sl}
=(Q^2)_{ll}\,,\qquad l=1,...,n\,.
\ee
Here, $d_l$ is the degree of vertex $l$ in the cross-ratio graph, with $d_l=0$ when $l\notin V_{\times}$.

For a quadratic cross-ratio $U[i_1,i_2,i_3,i_4]$, the edge sets \eqref{cross_ratio_edge_sets} show that every vertex in $V_{\times}$ is incident to one green and one orange edge. Thus, $d_l=2$ for $l\in V_{\times}$ and $d_l=0$ otherwise. The first index is consequently $b_l=d_l/2$ for $l\in V_{\hexagon}\setminus V_{\triangle}$.

For a cubic cross-ratio $W[j_1,j_2,j_3,j_4,j_5]$, the vertex $j_2$ is incident to two green and two orange edges, while each of the remaining four vertices is incident  to one edge of each color. Hence,
\be
\label{adjacency_cubic_degrees}
d_l=
\begin{cases}
0, \quad \text{if} \quad l\notin V_{\times}\,,\\
2, \quad \text{if} \quad l\in\{j_1,j_3,j_4,j_5\}\,,\\
4, \quad \text{if} \quad l=j_2\,.
\end{cases}
\ee
The first index must be equal to $1$ at every participating vertex outside $V_{\triangle}$, irrespective of whether its degree is $2$ or $4$. Therefore, we seek a polynomial $f(d)=a d+c d^2$ satisfying $f(0)=0$ and $f(2)=f(4)=1$. The last two conditions give $2a+4c=1$ and $4a+16c=1$, whose solution is $a=3/4$ and $c=-1/8$. Thus, $f(d)=d(6-d)/8$, and both cases can be written uniformly as
\be
\label{adjacency_first_indices}
b_l=
\begin{cases}
\dps\frac12(Q^2)_{ll}\,,
&\text{for }U\,,\\[2mm]
\dps\frac18(Q^2)_{ll}\Big(6-(Q^2)_{ll}\Big)\,,
&\text{for }W\,,
\end{cases}
\qquad l\in V_{\hexagon}\setminus V_{\triangle}\,.
\ee
For $d_l=0,2,4$, the cubic formula yields  $b_l=0,1,1$, respectively.

It remains to express the restriction to vertices outside the basis triangle through $P$ and $T$. Every vertex of $G_{\hexagon}$ has degree two, whereas the nonzero rows of $T$ correspond precisely to the three vertices of $G_{\triangle}$. Applying the same calculation as in \eqref{adjacency_vertex_degree}, we obtain
\be
\label{adjacency_polygon_triangle_degrees}
(P^2)_{ll}=2\,,\qquad
(T^2)_{ll}=
\begin{cases}
2, \quad \text{if} \quad l\in V_{\triangle}\,,\\
0, \quad \text{otherwise}\,.
\end{cases}
\ee
Consequently, the factor $\big((P^2)_{ll}-(T^2)_{ll}\big)/2$ equals $1$ outside $V_{\triangle}$ and vanishes on $V_{\triangle}$. Extending the first indices by zero on the basis triangle,
\be
\label{adjacency_first_extension}
\widetilde b_l:=
\begin{cases}
b_l, \quad \text{if} \quad l\in V_{\hexagon}\setminus V_{\triangle}\,,\\
0, \quad \text{if} \quad l\in V_{\triangle}\,,
\end{cases}
\ee
and multiplying \eqref{adjacency_first_indices} by this factor yields, for $l=1,...,n$,
\be
\label{adjacency_first_full}
\widetilde b_l=
\begin{cases}
\dps\frac14\Big((P^2)_{ll}-(T^2)_{ll}\Big)(Q^2)_{ll}\,,
&\text{for }U\,,\\[3mm]
\dps\frac1{16}\Big((P^2)_{ll}-(T^2)_{ll}\Big)
(Q^2)_{ll}\Big(6-(Q^2)_{ll}\Big)\,,
&\text{for }W\,.
\end{cases}
\ee
Equations \eqref{adjacency_second_full} and \eqref{adjacency_first_full} give the desired expressions in terms of  the three adjacency matrices. Their restrictions to $(r,s)\in E_{\triangle}$ and $l\in V_{\hexagon}\setminus V_{\triangle}$ reproduce the original transianic indices.

For the cross-ratios selected by the diagrammatic algorithm, the formula for the first indices simplifies further. In every cubic diagram $W[i,j,k,r,s]$, including the cyclic permutations of $i,j,k$, the vertex of degree four belongs to $V_{\triangle}$. Hence, every participating vertex outside  the basis triangle has degree two for both quadratic and cubic cross-ratios. On this set of diagrams, \eqref{adjacency_first_full} reduces to the single expression
\be
\label{adjacency_first_algorithm}
\widetilde b_l=\frac14\Big((P^2)_{ll}-(T^2)_{ll}\Big)(Q^2)_{ll}\,,
\qquad l=1,...,n\,.
\ee

The expressions \eqref{adjacency_first_indices} and \eqref{adjacency_first_full} apply to arbitrary quadratic and cubic cross-ratio diagrams relative to a fixed basis triangle. Their derivation does not use the diagrammatic algorithm. Instead, for the cross-ratios selected by this algorithm, the first indices admit a simpler expression through individual entries of the signed adjacency matrix. Indeed, for $l\in\cham_n^{(a)}$, the vertex $l$ belongs to the cross-ratio diagram precisely when the basis chord $(a,l)$ occurs in it; this chord is then orange. Thus, taking $Q$ to be the signed adjacency matrix of $\rY_p^{\triple{ijk}}$, we obtain
\be
\label{adjacency_first_Gale}
b_l^{(p)} = -Q_{al}^{(p)} = -\cG_{(a,l)}^{(p)}\,, \qquad a\in V_{\triangle}\,, \quad l\in\cham_n^{(a)}\,,
\ee
where the last equality follows from \eqref{Gale_matrix_chord_entries}. Thus, for a fixed  $l$, the sequence $(b_l^{(1)},...,b_l^{(\xr)})$  coincides with the row of the Gale dual corresponding to $(a,l)$, with its sign reversed.


\providecommand{\href}[2]{#2}\begingroup\raggedright\endgroup

\end{document}